\newcommand{\blind}{1}
\renewcommand{\tilde}{\widetilde}
\newcommand{\cR}{\mathcal{R}}
\newcommand{\cL}{\mathcal{L}}
\newcommand{\cT}{\mathcal{T}}
\newcommand{\ct}{\textbf{\emph{{t}}}}
\newcommand{\cX}{\mathcal{X}}
\newcommand{\cK}{\mathcal{K}}
\newcommand{\cI}{\mathcal{I}}
\newcommand{\cG}{\mathcal{G}}
\newcommand{\bP}
{\overline{P}}
\newcommand{\bp}
{\overline{p}}
\newcommand{\bS}
{\overline{S}}
\newcommand{\bs}
{\overline{s}}
\newcommand{\sminus}{\scalebox{0.50}[1.0]{\( - \)}}
\newcommand{\nbracket}[1]{\left( #1 \right)}
\newcommand{\cbracket}[1]{\left\{ #1 \right\}}
\newcommand{\rbracket}[1]{\left[ #1 \right]}
\newcommand{\norm}[1]{\left\vert \left\vert #1 \right\vert \right\vert }
\newtheorem{proposition}{Proposition}[section]
\newtheorem{theorem}{Theorem}[section]
\newtheorem{lemma}[theorem]{Lemma}
\newtheorem{result}{Result}
\newtheorem{remark}{Remark}
\newtheorem{assumption}{Assumption}
\newtheorem{definition}{Definition}[section]
\newcommand{\ind}[1]{\mathbf{1}_{\{#1\}}}
\newcommand{\grad}{{\nabla}}
\def\argmin{\mathop{\rm argmin}\limits}
\def\1{\bm{1}}
\newcommand{\Cov}{\mathrm{Cov}}
\newcommand{\Pp}{{\mathbb P}}
\newcommand{\Ee}{{\mathbb E}}
\newcommand{\cP}{{\mathcal{P}}}
\newcommand{\bn}{\mathbf{n}}
\newcommand{\RD}{\operatorname{Diff}}
\newcommand{\abs}[1]{\lvert #1 \rvert}
\renewcommand{\hat}[1]{\widehat{#1}}
\renewcommand{\ind}[1]{\mathbbm{1}[#1]}
\newcommand{\twofigs}[2]{
\hbox to\hsize{\hss
\vbox{\psfig{figure=#1,width=2.7in,height=2.0in}}\qquad
\vbox{\psfig{figure=#2,width=2.7in,height=2.0in}}
\hss}}
\begin{document}

\date{}

\def\spacingset#1{\renewcommand{\baselinestretch}%
{#1}\small\normalsize} \spacingset{1}

%%%%%%%%%%%%%%%%%%%%%%%%%%%%%%%%%%%%%%%%%%%%%%%%%%%%%%%%%%%%%%%%%%%%%%%%%%%%%%

\if1\blind
{
  \title{\bf Classification Trees with Valid Inference \\ via the Exponential Mechanism}
    \author{Soham Bakshi \\
    Department of Statistics,
		University of Michigan,
         MI, USA.\\
    and  \\
    Snigdha Panigrahi \\
    Department of Statistics,
		University of Michigan,
         MI, USA.\\}
  \maketitle
} \fi

\if0\blind
{
  \bigskip
  \bigskip
  \bigskip
  \begin{center}
    {\LARGE\bf Classification Trees with Valid Inference\\ via the Exponential Mechanism}
\end{center}
  \medskip
}\fi

\begin{abstract}
Decision trees are widely used for non-linear modeling, as they capture interactions between predictors while producing inherently interpretable models.
Despite their popularity, performing inference on the non-linear fit remains largely unaddressed. 
This paper focuses on classification trees and makes two key contributions.
First, we introduce a novel tree-fitting method that replaces the greedy splitting of the predictor space in standard tree algorithms with a probabilistic approach.
Each split in our approach is selected according to sampling probabilities defined by an exponential mechanism, with a temperature parameter controlling its deviation from the deterministic choice given data.
Second, while our approach can fit a tree that with high probability coincides with the fit produced by standard tree algorithms at low temperatures, it is not merely predictive; unlike standard algorithms, it enables inference by taking into account the highly adaptive tree structure.
Our method produces pivots directly from the sampling probabilities in the exponential mechanism.
In theory, our pivots allow asymptotically valid inference on the parameters in the predictive fit, and in practice, our method delivers powerful inference without sacrificing predictive accuracy, in contrast to data splitting methods.
\end{abstract}

\noindent%
{\it Keywords: Algorithmic model, Classification trees, Exponential mechanism,  Gumbel softmax, Randomization, Selective inference.}

\spacingset{1.7}

\section{Introduction}
\label{sec:intro}

Breiman, in his influential paper \cite{breiman2001statistical} on the two modeling cultures, emphasized the \textit{algorithmic} approach to fitting predictive models while treating the underlying data-generating mechanism as a black box.
Although predictive accuracy is, and should remain, an important measure of a model’s success, it offers only partial insight into the black box.
Valid inference for algorithmic models equips practitioners with tools, such as hypothesis tests and interval estimates, to probe, estimate and understand various aspects of the underlying mechanism.
In this work, we focus on one such algorithmic modeling approach, wherein a decision tree is fitted to data with binary outcomes, commonly called a classification tree.
Rather than treating prediction and inference as mutually exclusive goals, we propose a method for fitting predictive trees with valid inference on the fit.

\begin{figure}[tp]
    \centering
    \includegraphics[width=1\textwidth, height=5cm]{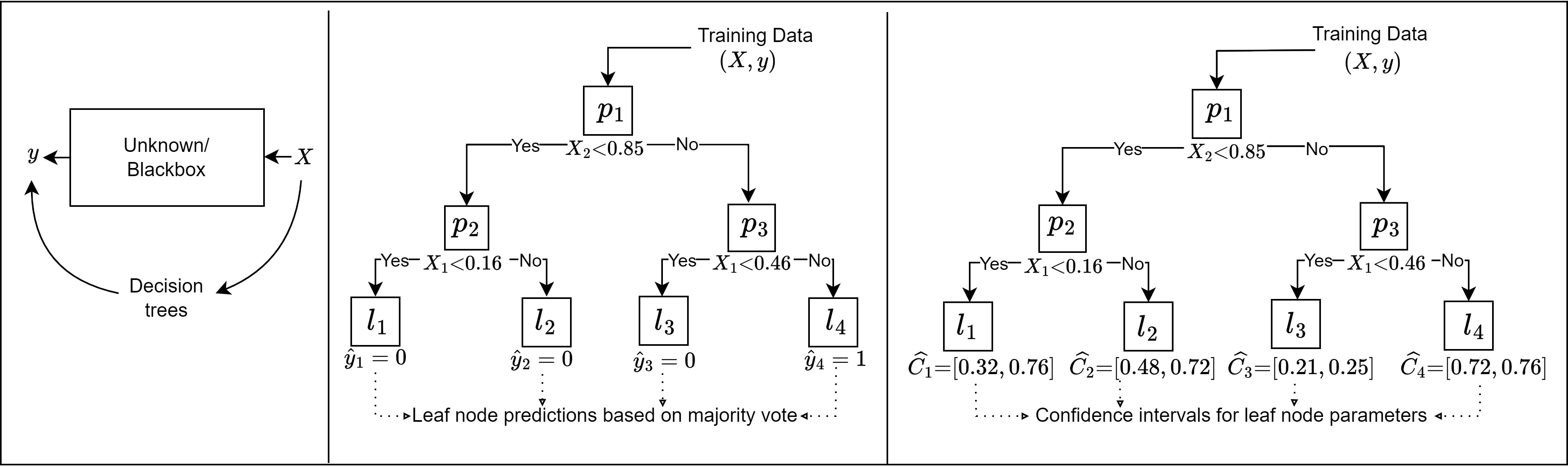} 
    \vspace{-0.5cm}
    \caption{
    Illustration of inference for decision trees. Left: the
    \emph{algorithmic modeling} approach described in \cite{breiman2001statistical}, in which a decision tree is used as a predictive model to explain the black box linking predictors $X$ to outcome $y$; Middle: a classification tree generated by our tree-fitting algorithm using the proposed exponential mechanism, with leaf nodes reporting predictions based on a majority vote; Right: the same tree fit, now equipped with inference (in particular, confidence intervals) on the mean parameters of subpopulations identified by the rule-based partition of the classification tree. 
    }
    \label{fig:plot1}
\end{figure}

Decision trees can capture highly non-linear relationships between predictors and outcomes while remaining largely interpretable.
Splits in a tree fit are easily visualized as rule-based partitions of the predictor space, dividing the sample space into groups of observations that are usually described by simple models.
For this reason, such rule-based partitions of the data find wide applications in domains such as business and biomedical sciences, where interpreting the algorithmic model is often prioritized, sometimes even over maximizing predictive accuracy.
Inference is naturally of interest in these applications, as it explains the black box mechanism captured by the fit.

Despite the ability to produce interpretable fits, standard algorithms for classification trees, such as CART \cite{breiman1984classification} and C4.5 \cite{quinlan2014c4}, have remained primarily predictive.
Classic inference tools do not apply to a tree fit, as the practitioner does not start with a predefined model.
Instead, an algorithmic model is obtained from a predictive fit to the data—here, resulting from a greedy partitioning of the predictor space.

This work makes two key contributions.
\begin{enumerate}[leftmargin=*]
\item First, we introduce a novel probabilistic tree-fitting algorithm, in which the usual deterministic splitting scheme is replaced with a sampling scheme based on the exponential mechanism.
These sampling probabilities include a temperature parameter that controls the degree of randomization in the selection of splits.
In particular, our approach fits a predictive tree that, with high probability, coincides with the tree fit produced with deterministic splitting rules when the temperature parameter is small.

This type of temperature-controlled sampling mechanism is commonly used in  generative modeling and large language models for prediction, e.g.,  \cite{brown2020language, liang2022holistic, wu2025role}, where the temperature parameter controls the balance between output diversity and coherence.
In our work, we demonstrate that this scheme serves purposes beyond prediction, highlighting its ability to generate valid inference while accounting for the highly adaptive and non-linear tree structure.

\item Second, we demonstrate how pivotal quantities can be derived from the sampling probabilities in the exponential mechanism, which then allows for valid inference for the algorithmic model. 
Our pivots are constructed using the full dataset, eliminating the need for any form of data splitting and, importantly, enabling reliable and powerful inference without compromising predictive accuracy.
Moreover, the construct of these pivots is agnostic to the  choice of information measure for splitting the predictor space, allowing flexible inference akin to data splitting methods.
%, and the construct of these pivots is agnostic to the information measure used during fitting.
%This leads to substantially more efficient inference---e.g., shorter confidence intervals---than data-splitting methods that rely only on the data withheld during the model fitting process, while enabling flexible inference across tree fits constructed with different information measures.
\end{enumerate}

Deferring a more technical account of our contributions to the next section, where we review connections with various research areas, we present Figure \ref{fig:plot1}.
The left panel illustrates Breiman's depiction of a black box modeled with a decision tree, as described in \cite{breiman2001statistical} .
The middle panel shows a classification tree produced by our new probabilistic algorithm, visualized as a rule-based partition of the predictor space on simulated data from a Bernoulli outcome variable.
Details of the simulation scheme are provided later in Section \ref{sec:simulations}.
The right panel shows confidence intervals for the means in each terminal region of the adaptive partition generated by the algorithm, allowing practitioners to report inference for the subgroups identified by the fit alongside the prediction rule.
Finally, although we focus on classification trees, the proposed randomization approach readily applies to regression trees with continuous outcomes; we focus on the former due to the lack of inference methods in this domain.

The rest of the paper is organized as follows. 
In Section~\ref{sec:contributions}, we discuss the connections between our work and the related literature. 
In Section~\ref{sec:treegrowing}, we introduce a temperature-controlled sampling scheme based on the exponential mechanism for selecting splits in our tree-fitting procedure. Section~\ref{sec:inference} develops a conditional inference method to construct pivotal quantities for parameters based on the data-adaptive tree fit, while Section~\ref{sec:asymptotictheory} establishes the conditional validity of these pivotal quantities. Section~\ref{sec:adaptivetemp} presents data-adaptive strategies for setting the temperature parameters at each split, along with the corresponding inferential adjustments.
In Sections~\ref{sec:simulations} and \ref{sec:dataapps}, numerical experiments demonstrate the ability of our pivots to produce valid and powerful inference in the algorithmic model. 
Section \ref{sec:conclusion} concludes the paper with a summary of our contributions and potential extensions.
%Finally, Section~\ref{sec:extensions} discusses extensions of our methodology to common adaptive tree-fitting techniques.

\section{Review of related literature}
\label{sec:contributions}

\paragraph{Inference with decision trees.} This work builds on the idea proposed by \cite{bakshi2024inference} of using external randomization to enable inference conditional on non-linear tree fits.
Our current method, however, differs from the earlier approach in two important respects.
First, our work explores an entirely new randomization approach based on sampling splits according to an exponential mechanism. 
These sampling probabilities allow us to directly account for the adaptive tree structure and eliminate the need of any further numerical approximation to compute inference, as required by \cite{bakshi2024inference}.
%Consequently, our current approach yields inference that is orders of magnitude faster.
Second, we develop a novel theoretical framework using statistics whose asymptotic distribution delivers valid inference for classification trees with binary outcomes. 
Existing methodology designed for regression trees with normal data, such as \cite{bakshi2024inference} or their non-randomized predecessor in \cite{neufeld2022tree}, do not apply to binary outcomes.
Our inferential theory, leveraging properties of the exponential mechanism, may be of independent interest to readers, and could motivate the application of this randomization approach to other inference problems involving data-adaptive targets.

\cite{mentch2016quantifying} developed theory and methodology to test hypotheses about predictions aggregated across many trees built on subsampled data. 
More recently, \cite{naef2023distributional} introduced bootstrap-based techniques to quantify uncertainty in the estimated conditional distribution obtained via distributional random forests.
However, both lines of research differ from the focus of our work, as such approaches do not provide estimates of population parameters associated with the tree fit, such as the means of subpopulations identified by the adaptive partition of the predictor space.
Recent work \cite{klusowski2021sharp, klusowski2024large} advances the theory of tree learners such as CART and C4.5, establishing risk consistency properties.
Our contribution complements this line of work by providing a framework for inference in an algorithmic model obtained from the adaptive tree fit, rather than aiming for predictive guarantees.

\paragraph{Exponential mechanism: connections with privacy and stability.} 
The exponential mechanism, introduced by \citet{mcsherry2007mechanism}, is a fundamental tool in differential privacy that selects $r$ from a discrete set of possible  outcomes $\mathcal{R}$ with probability proportional to $\exp\{u(r)/\tau\,\Delta u\}$, where $\tau$ is a temperature parameter controlling the randomness in selection, $u(r)$ represents a utility score and $\Delta u$ denotes its sensitivity. Our tree-fitting algorithm employs a similar Gibbs-type randomization, not for privacy but to enable valid inference that accounts for the highly adaptive nature of the tree fit.

The exponential mechanism has recently been applied beyond privacy in settings conceptually related to our goal.
For example, \cite{karwa2016inference} demonstrate that exponential mechanism can yield asymptotically normal estimators even after randomization.
\cite{awan2019benefits} develop methodology for differentially private functional data analysis, proving a central limit theorem for the exponential mechanism.
%A parallel line of research connects the exponential mechanism to algorithmic stability and generalization guarantees. 
Differential privacy has been shown to imply strong uniform stability \citep{bassily2016algorithmic}, and stability itself has been linked to robust risk estimation and cross-validation \citep{lei2025stability}. 
In \cite{zhang2024winners}, the exponential mechanism has been applied to construct confidence sets for model selection problems.
By contrast, we utilize the exponential mechanism to perform inference while making no stability assumptions about the algorithm.

\paragraph{Selective inference with conditional guarantees.} 
While our method shares the conceptual idea behind conditional approaches in selective inference (e.g., \cite{Lee_2016, fithian2017optimal, randomizedresponse, snigdhaintergative, gao2024selective, perry2025inference}), we demonstrate how properties of the exponential mechanism can be exploited to circumvent several well-known difficulties associated with these methods.

First, in constructing pivotal quantities for inference, our approach avoids explicitly characterizing the selection event or relying on geometric simplifications, unlike \cite{Lee_2016, neufeld2022tree, panigrahi2023approximate, appromixateMLE}.
Instead, the sampling probabilities in the exponential mechanism directly yield adjustments for the adaptive fitting process and can apply to any information gain measure used in standard tree-fitting algorithms.
Second, the inferential guarantees developed in our work make no assumptions about how rare or likely the tree fit is, remaining valid even when selection probabilities approach zero by leveraging properties of the exponential mechanism.
By contrast, much of the existing work on asymptotic selective inference (e.g., \cite{randomizedresponse}) relies on restrictive assumptions, requiring the selection event to have a probability bounded away from zero, or under Gaussian randomization schemes, such as those in \cite{rasines2023splitting, rasines2023local, panigrahi2023carving, bakshi2024selectiveinferencetimevaryingeffect}, to have probability vanishing to zero only at specific rates.
%Remarkably, by exploiting the properties of the exponential mechanism, our asymptotic theory circumvents these assumptions entirely.
%Although our current work focuses on Bernoulli outcomes, we believe that the theory can extend to a broader class of problems involving non-normal data.

\section{The tree-growing algorithm}
\label{sec:treegrowing}

In this section, we briefly review the standard tree-growing algorithm and then introduce our proposed algorithm using the exponential mechanism.

\subsection{A brief review}

Suppose our data consists of $n$ observations $(Y_i, X_i)$ for $i \in [n] = \{1, 2, \ldots, n\}$, where $Y_i \in \cbracket{0,1}$ and $X_{i}=(X_{i,1}, \ldots, X_{i,p}) \in \cX=\cX_{1} \times \cdots \times \cX_{p}$ represent the binary outcome and the $p$ predictors for the $i$-th sample, respectively. 
We denote the outcome vector by $Y$ and the predictor matrix by $X$, obtained by stacking the $n$ observations in our data row-wise.
For any region $\cR \subseteq \cX$ and a $p$-dimensional vector $x\in \cX$, let $\ind{x\in \cR}$ denote the indicator function that takes the value $1$ when $x \in \cR$ and $0$ otherwise.

The standard tree algorithm proceeds by recursively splitting regions $\cP \subseteq \cX$, starting from the full predictor space $\cX$.
At each step of the algorithm, when a region $\cP$ is to be divided, a split $s \in \cK(\cP)$ is \emph{selected} based on the data, where $\cK(\cP)$ denotes the set of all permissible splits that divide the region $\cP$. 
When the split is binary (i.e., two-way), it partitions the parent region $\cP$ into two subregions: $\cbracket{\cP(s), \cP'(s)}$, where $\cP'(s)= \cP \setminus \cP(s)$. Since binary splits are more commonly used than multi-way splits in practice, we focus on binary splits for the rest of this paper.

The optimal split, which partitions the parent region $\cP$, is selected in a purely data-driven manner by maximizing an information gain measure evaluated across all candidate splits according to a chosen impurity or information criterion.
%Growing a classification tree involves evaluating the information gain for all candidate splits, typically
%based on impurity or information measures. 
%Commonly used impurity measures include the Gini index, misclassification error, and cross-entropy, which are summarized in Table~\ref{tab:impurity_measures}.
%The RCT fit, along with the inference procedure introduced later, is not limited to any specific choice and can be provably applied to any measure that satisfies certain ``smoothness'' conditions.
For a region $\cR$, we denote the data-dependent information measure associated with the region by $I(\cR; Y)$ and the number of observations within the region by $n_{\cR}$. 
Now, for a parent region $\cP$ and a possible split $s \in \cK(\cP)$ on it, we define the  information gain as
\begin{equation}
G_{\cP}(s; Y)= I(\cP; Y)- \left\{\frac{n_{\cP(s)}}{n_{\cP}} I(\cP(s); Y)+\frac{n_{\cP'(s)}}{n_{\cP}} I(\cP'(s); Y)\right\}.
\label{defn: gain}
\end{equation}
In the standard tree-growing algorithm, the greedily selected split $s^*= s^*(\cP)$, obtained as
\vspace{-0.5cm}
$$
s^*= \underset{s \in \cK(\cP)}{\text{argmax}}\; G_{\cP}(s; Y),
\vspace{-0.5cm}
$$  
yields the partition $\cbracket{\cP(s^*), \cP'(s^*)}$ of the region $\cP$.

%\begin{table}[H]
%\centering
%\label{tab:impurity_measures}
%\begin{tabular}{|c|c|}
%\hline
%\textbf{Measures of Impurity $I(\cP; Y)$/ $\cI(\widehat{\pi}_{\cP})$} & \textbf{Expression} \\ \hline
%1.Misclassification Error & $1 - \max\left\{\widehat{\pi}_{\cP}, 1 - \widehat{\pi}_{\cP}\right\}$ \\ \hline
%2.(Negative) Gini Index & $2\widehat{\pi}_{\cP}(1 - \widehat{\pi}_{\cP})$ \\ \hline
%3. (Negative) Cross-Entropy & $-\widehat{\pi}_{\cP} \log(\widehat{\pi}_{\cP}) - (1 - \widehat{\pi}_{\cP}) \log(1 - \widehat{\pi}_{\cP})$ \\ \hline
%\end{tabular}
%\caption{Common impurity measures as functions of the sample class proportion $\widehat{\pi}_{\cP}$ \textcolor{red}{Expand list to include the smooth versions of some of these impurity measures?}}
%\end{table}

\subsection{Exponential mechanism for selecting splits}

%We next introduce a temperature-controlled sampling scheme for selecting splits.
Unlike the deterministic split in the standard algorithm, our method, for a given parent region $\mathcal{P}$, samples a realization of a random split $S(\mathcal{P})$ from the permissible splits $\mathcal{K}(\mathcal{P})$.
Next we define the split variable $S(\mathcal{P})$.
%Unlike the standard algorithm, where the split is deterministic given the data, our method samples a split $S(\mathcal{P})$ for a parent region $\mathcal{P}$ from the set of all permissible splits $\mathcal{K}(\mathcal{P})$.

Given $Y = y$, the split variable $S(\mathcal{P})$ in our method follows a discrete distribution supported on $\mathcal{K}(\mathcal{P})$, with probability mass function:
\vspace{-0.5cm}
$$
\Pp_{S(\cP)}= \{\Pp_{y}(s; \cP): s\in \mathcal{K}(\mathcal{P})\},\vspace{-0.5cm}
$$
where
\vspace{-0.5cm}
\begin{equation}
\Pp_{y}(s; \cP)=\mathbb{P}\rbracket{S(\cP) = s \mid Y=y}= \frac{\exp \nbracket{\epsilon_{\cP}^{-1} G_{\cP}(s; y) }}{\displaystyle\sum_{s'\in \cK(\cP)}\exp \nbracket{\epsilon_{\cP}^{-1} G_{\cP}(s'; y)}}, \text{ for } s\in \cK(\cP).
\label{exp:mechanism}
\end{equation}
The probability on the right-hand side of \eqref{exp:mechanism} arises from an exponential mechanism using the information gains of all candidate splits.
Here, $\epsilon_{\mathcal{P}}$ is the \emph{temperature parameter} that determines the amount of randomization in the split at $\mathcal{P}$; larger values of $\epsilon_{\mathcal{P}}$ result in greater randomness in split selection, while smaller values yield selections closer to the standard algorithm.

Our split-selection algorithm is simple: at each internal region $\mathcal{P}$, we sample a realization of the random variable $S(\mathcal{P})$ according to the distribution in \eqref{exp:mechanism}, and apply this procedure recursively to partition the predictor space with independent splits at each region.

Next, we make a few important observations about the randomization introduced by the exponential mechanism.
First, it follows directly from the definition of the sampling probabilities that the random split variable $S(\cP)$ is independent of $Y$ given the gain functions at parent region $\cP$, denoted by 
\vspace{-0.5cm}
$$
\textsc{Gains}_{\cP}(Y)=\{G_{\cP}(s; Y): s\in \cK(\cP)\},
$$
i.e., $\Pp_{y}(s; \cP)= \mathbb{P}\rbracket{S(\cP) = s \mid \textsc{Gains}_{\cP}(Y)= \textsc{Gains}_{\cP}(y)}$. 
As a consequence, the proposed sampling scheme can be executed using only the information gains at each parent region, without requiring access to the raw outcome values.

Second, Proposition~\ref{prop: exp conv} formalizes that for a sufficiently small temperature parameter $\epsilon_{\mathcal{P}}$, the probability that our randomly chosen split matches the optimal split returned by the standard tree algorithm approaches $1$. 
\begin{proposition}
For any split $s \in \cK(\cP)$ on an interior region $\cP$, where $s\notin \underset{s'\in \cK(\cP)}{\text{argmax}} \; G_{\cP}(s'; y)$, it holds that
\vspace{-0.5cm}
$$
\Pp_{y}(s; \cP) \to 0 \; \text{ as } \; \epsilon_{\cP} \to 0.
\vspace{-0.5cm}
$$
\label{prop: exp conv}
\end{proposition}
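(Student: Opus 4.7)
The plan is a direct calculation exploiting the Gibbs form of $\Pp_{y}(s;\cP)$. Let $G^{*} = \max_{s'\in\cK(\cP)} G_{\cP}(s'; y)$ denote the maximum attainable gain at $\cP$, and fix a split $s\in\cK(\cP)$ that is not a maximizer, so that $\delta := G^{*} - G_{\cP}(s; y) > 0$. I would normalize the exponential mechanism by $\exp(\epsilon_{\cP} G^{*})$ in both the numerator and denominator of \eqref{exp:mechanism}, yielding
$$
\Pp_{y}(s;\cP) \;=\; \frac{\exp\!\nbracket{-\epsilon_{\cP}\delta}}{\displaystyle\sum_{s'\in\cK(\cP)} \exp\!\nbracket{\epsilon_{\cP}\bigl(G_{\cP}(s';y) - G^{*}\bigr)}}.
$$

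Next, I would lower-bound the denominator. Since there exists at least one $s^{*}\in\cK(\cP)$ attaining $G_{\cP}(s^{*};y) = G^{*}$, the term for $s^{*}$ contributes $\exp(0)=1$, and all other terms are nonnegative. Hence the denominator is bounded below by $1$ uniformly in $\epsilon_{\cP}$, which gives
$$
\Pp_{y}(s;\cP) \;\le\; \exp\!\nbracket{-\epsilon_{\cP}\,\delta}.
$$

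Since $\delta > 0$ is a fixed positive constant (depending only on $y$ and $s$), letting $\epsilon_{\cP}\to\infty$ sends the right-hand side to zero, which completes the argument. The whole proof is essentially a one-line softmax computation; the only thing to be mindful of is that $\delta$ is indeed strictly positive, which is guaranteed by the assumption that $s$ is not among the maximizers of $G_{\cP}(\cdot; y)$ over $\cK(\cP)$. There is no real obstacle here: the finiteness of $\cK(\cP)$ ensures the maximum is attained, and the positivity of the gap $\delta$ delivers exponential decay of the sampling probability in $\epsilon_{\cP}$.
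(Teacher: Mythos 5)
Your proof is correct and essentially matches the paper's argument: both fix a maximizer $s^{*}$, bound the denominator below by the single term at $s^{*}$ (you normalize by $\exp(\epsilon_{\cP}G^{*})$ first, the paper drops the non-$s^{*}$ terms, but it is the same bound), and conclude $\Pp_{y}(s;\cP)\le\exp\{\epsilon_{\cP}(G_{\cP}(s;y)-G^{*})\}\to 0$ since the gap is strictly positive. The only cosmetic difference is the normalization step; the underlying one-line softmax calculation is identical.
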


Section~\ref{sec:simulations} discusses our choice of $\epsilon_{\mathcal{P}}$ in detail; here we summarize the tree-growing method based on \eqref{exp:mechanism} with $\epsilon_{\mathcal{P}}=\epsilon$, maximum depth $d_{\text{max}}$, and minimum leaf size $n_{\text{min}}$ in Algorithm~\ref{alg:RCT} to simplify the exposition.

\begin{algorithm}[h]
\setstretch{1.3}
\caption{Tree-growing algorithm via the exponential mechanism}
\begin{algorithmic}[1]
    \State \textbf{Input:} Training data \( (X, y) \), maximum depth \( d_{\text{max}} \), minimum samples \(n_{\text{min}}\), temperature parameter $\epsilon$
    \State \textbf{Initialize:} Root region \( \cP_{1} = \cX \), current depth \( d = 0 \)
    \For{each region \( \cP \) with \( n_{\cP} \) samples at depth \( d \)}
        \If{$d \leq d_{\text{max}}$ and $n_{\cP} > n_{\text{min}}$}
            \State Compute the gain function \( G_{\cP}(s; y) \) for all possible splits \( s \in \cK(\cP)\)
            \State Draw a realization of the split variable \( S(\cP)\) according to the distribution \\ \hspace*{1.2cm} in \eqref{exp:mechanism} with $\epsilon_{\cP}= \epsilon$. 
            %where
            %\[
           %\Pp_{y}(s; \cP) = \frac{\exp \nbracket{\epsilon_{\cP} G_{\cP}(s; y) }}{\displaystyle\sum_{s'\in \cK(\cP)}\exp \nbracket{\epsilon_{\cP} G_{\cP}(s'; y)}}.
            %\]
            \State Partition \( \cP \) based on selected split  $\{S(\cP)=s\}$ into \( \{\cP(s), \cP'(s)\} \)
            \State depth $d \gets d + 1$ 
        \Else
            \State Return \( P \) as a terminal region
        \EndIf
    \EndFor 
\end{algorithmic}
\label{alg:RCT}
\end{algorithm}

\noindent \textbf{Other extensions.} The tree-growing procedure introduced in Algorithm~\ref{alg:RCT} naturally induces a new ensemble method obtained by aggregating multiple randomized trees fitted using the proposed splitting scheme. We refer to this approach as an \emph{ensemble RCT}, whose predictive performance  is compared in Section~\ref{appendix: ensembleRCT} with that of standard random forests based on bootstrap aggregation. The key distinction lies in the form of randomization used: in our approach, no data point is ever excluded from any tree.

For a single tree, pruning can be performed after growing it to a fixed depth. Our approach, based on the exponential mechanism, provides a new pruning scheme inspired by cost-complexity pruning, with valid inference in the pruned tree described in Appendix~\ref{appendix:ccp}.

% \ For a single tree, pruning can be performed after growing it to a fixed depth. Our approach, based on the exponential mechanism, provides a new pruning scheme inspired by cost-complexity pruning, also enabling valid inference in the pruned tree. This approach is detailed out in Appendix~\ref{appendix:ccp}.

% The tree-growing algorithm introduced in Algorithm~\ref{alg:RCT} also gives rise to a new random forest method, whose predictive performance is compared in Section~\ref{appendix: ensembleRCT} with that of standard random forests based on bootstrap aggregation. The key distinction lies in the form of randomization used: in our approach, no data point is ever excluded from any tree.

\subsection{Connection to additive randomization}

At first glance, the randomization from the exponential mechanism appears quite different in form from the additive noise used in \cite{bakshi2024inference}. 
However, Proposition~\ref{prop: exp gumbel} shows that this randomization scheme is, in fact, equivalent to an additive randomization mechanism, in which independent Gumbel noise is added to the gain functions underlying the splitting rules. 
The equivalence in our next result, often referred to as the \emph{Gumbel–max trick}, is adopted from the literature on differential privacy and discrete sampling \citep{dwork2014algorithmic,maddison2014asampling}. 
\begin{proposition}
Let $V_{\cP}(s; Y) = \epsilon_{\cP}^{-1} G_{\cP}(s; Y) + Z_s$, where $Z_s \overset{\text{i.i.d.}}{\sim} \text{Gumbel}(0,1)$ for $s\in \cK(\cP)$. Then, 
$\Pp_{y}(s; \cP) = \mathbb{P}\left[ V_{\cP}(s; Y)  \geq V_{\cP}(s'; Y) \, \text{ for all } \; s' \in \cK(\cP)\mid Y=y\right].$
\label{prop: exp gumbel}
\end{proposition}

\section{Our inferential method}
\label{sec:inference}

In this section, we develop our inference method for the algorithmic model derived from the tree fitted using Algorithm~\ref{alg:RCT}.
We begin by stating the model and establishing the necessary notations. 
Next, we introduce the statistics whose distributions provide pivotal quantities for inference. 
Finally, we present a roadmap of the main inferential results, which lead to pivotal quantities for parameters in the fitted non-linear model.

\subsection{Algorithmic model and notations}
\label{subsec:treefit}
Suppose the terminal regions, or leaves, obtained from the tree fit $\cT$ are denoted by $\{L_{1}, \ldots, L_{M}\}$, representing a partition of the predictor space $\cX$ into $M$ disjoint regions.
For our realization of the outcome $Y = y$, assume that we observed a partition of $\cX$ into $M=m$ regions:
$$
\left\{L_1 = l_1, L_2 = l_2, \ldots, L_m = l_m\right\},
$$ 
where the symbols in lowercase represent the observed realizations of the corresponding random variables on the observed data.

A simple model after observing the tree fit is given by: 
\begin{equation}
Y_{i} \stackrel{\text{ind}}{\sim} \text{Bernoulli}(\theta_{i}), \text{ where } \theta_{i} = \sum_{m'=1}^{m} \pi_{m'} \mathbbm{1}[X_{i} \in l_{m'}],  \ \text{ for } i\in [n],
\label{tree:model}
\end{equation}
where $\pi_{m'}$ represents the unknown population mean for the group of observations in $l_{m'}$. 
Without loss of generality, we focus on $\pi_1$, the mean parameter associated with the observed leaf $L_1 = l_1$, for the remainder of this section.
We make a few remarks on the interpretation of parameters in the algorithmic model and the scope of inference with our approach.

\begin{remark}
The target of inference $\pi_{m'}$, although in practice interpreted with respect to the algorithmic model, is a well-defined object even when the fitted model does not exactly coincide with the underlying black box. 
More specifically, the mean parameter associated with the leaf region $l_{m'}$, written as $\mathbb{E}\left[(n_{l_{m'}})^{-1}\sum_{i=1}^{n} Y_i \ \mathbbm{1}[X_{i} \in l_{m'}]\right]$ 
where $n_{l_{m'}}$ denotes the number of observations in $l_{m'}$, corresponds to the population mean of the subgroup of observations that fall into that leaf region.
\label{rem:interpretation}
\end{remark}

\begin{remark}
The scope of inference with our approach is not limited to the mean parameters $\pi_{m'}$, for $m'\in [m]$.
Although we focus on inference for the mean parameters for clarity, our approach can readily extend to target parameters that are monotone transformations of these parameters, such as log-odds ratios, as well as to linear combinations, for example the difference in means between two sibling regions.
\label{rem:scope}
\end{remark}

%It is clear that, if $l_1$ were not the realization of the data-dependent variable $L_1$ but instead a fixed region, then $\widehat{\pi}_{1}$ would follow an asymptotically normal distribution centered at $\pi_1$.

Before constructing inference, we introduce notations for the tree fit and its observed values.
Let $\cT_{1}$ be the ancestral subtree of $L_1$, consisting of all internal regions and branches along the unique path from the root $\cX$ to $L_1$.
Further, let $\bP_{1}:=\cbracket{P_{1}=\cX,\ldots,P_{D}}$ be the internal regions in  the subtree $\cT_{1}$ and let $\bS_{1}:=\cbracket{S_{1},\ldots,S_{D}}$ denote the sequence of splits, where $S_h=S(P_h)$ is the random split made on $P_h$ for $h\in [D]$. Given $Y = y$, we denote the realized fit by $\cT_{1} = \ct_{1}$.
Suppose that in $\ct_{1}$, we observe $D=d$ many splits $\{S_{h}=s^1_h: h\in [d]\}$, made on a corresponding set of observed internal regions $ \bp_{1}:=\cbracket{p_{1}, \ldots, p_{d}}$. 
For each $h \in [d]$, let $S_h = s^1_h$ denote the selected split at node $p_h$, chosen from the set of candidate splits at $p_h$, denoted by $\cK_h := \cK(p_h) = \cbracket{s^1_h, s^2_h, \ldots, s^{k_h}_h}$, where $|\cK_h| = k_h$; e.g., $S_{1} = s^1_1$ is the first split made at the root $p_{1} = \cX$, and $S_{d} = s^1_d$ is the final split made at $p_d$, leading to the leaf $l_1$.

\subsection{Inference conditional on tree fit}
\label{subsec:inferenceguarantees}

As described in the proposed methodology, our inferential method constructs an asymptotic confidence interval $\widehat{C}_{1}(Y)$ for $\pi_1$ by conditioning on the sequence of greedy splits that lead to the associated leaf region $l_1$.
Specifically, our intervals satisfy the following guarantee 
\begin{equation}
\label{cond:guarantee}
\lim_{[\bn]  \to \infty} \Big|\mathbb{P}\left[\pi_{1} \in \widehat{C}_{1}(Y) \; \Big\lvert \;  \{S_{h}=s^1_h: h\in [d]\} \right]- (1-\alpha)\Big| =0,
\end{equation}
as the number of samples in the tree fit is allowed to grow.

Deferring the precise asymptotic framework to the next section, we focus here on providing intuition for the validity of inference implied by this guarantee.
In particular, for sufficiently large samples, \eqref{cond:guarantee} implies that we have coverage guarantees conditional on the observed tree fit.
Put differently, the interval estimates possess the standard frequentist coverage guarantees under repeated runs of the algorithm that yield the same fitted tree as observed on $Y=y$.

Furthermore, it follows trivially that this guarantee also implies a weaker unconditional coverage guarantee, which is marginal over all possible tree fits, i.e.,
\vspace{-0.5cm}
$$
\mathbb{P}\left[\pi_{1} \in \widehat{C}_{1}(Y) \; \Big\lvert \;  \{S_{h}=s^1_h: h\in [d]\} \right] \geq 1-\alpha 
\implies \mathbb{P}\left[\pi_{1} \in \widehat{C}_{1}(Y) \right] \geq 1-\alpha.
$$

\subsection{Key statistics for inference}
\label{subsec:keystatistics}

Let the sample proportion within the observed leaf $l_1$ be
\begin{equation}
\hat{\pi}_{1} = \frac{1}{n_{1}}\sum_{i=1}^n Y_i \cdot \ind{X_{i}\in l_{1}},
\label{mean:stat}
\end{equation}
where $n_{1}:=n_{l_{1}}$ denotes the number of observations in $l_1$. 
The statistic $\hat{\pi}_{1}$ alone cannot provide valid inference for $\pi_1$.
This is because the leaf $l_1$ results from all upstream splits within the subtree $\ct_1$, which depend on observations beyond those in $l_1$.
%In this section, we present the key statistics whose distribution, conditional on the tree fit, yields valid inference for the parameter of interest.
%To properly account for the selection of all random splits leading to $L_1 = l_1$, our key statistics involve the sample proportions at internal regions determined by all the candidate splits in the subtree.
We now define the relevant statistics needed to construct valid inference for $\pi_1$. 

%These statistics, which we identify next, are key to our inferential method, as the selection of random splits, characterized by information gain measures at each internal region, is a function of these statistics.
%The mean statistic in \eqref{mean:stat} alone cannot enable valid inference on $\pi_1$ under the observed RCT fit, as the terminal region $l_1$ is the result of all upstream splits within the subtree $\ct_1$, which depend on observations beyond those in $l_1$. Recall that selection of a split on an internal region $\cP$ is based on the gains $\{G_{\cP}(s; Y): s\in \cK(\cP)\}$. 
%Now, the gain $G_{\cP}(s; Y)$ of splitting $\cP$ into $\cP(s),\cP'(s)$ depends on data through the means $\hat{\pi}_{\cP}, \hat{\pi}_{\cP(s)}$ (note $\hat{\pi}_{\cP'(s)}$ can be reconstructed from $\hat{\pi}_{\cP}, \hat{\pi}_{\cP(s)}$). 
%$\hat{\pi}_{\cP}$ and $\hat{\pi}_{\cR}$ across all candidate regions $\cR \in \cbracket{\cP(s) : s \in \cK(\cP)}$ induced by the permissible splits at $\cP$, for each internal region $\cP \in \bp_1$ of $\ct_1$.
%To properly account for the selection of all random splits leading to $L_1 = l_1$, we must, at each internal region $\cP \in \bp_1$ of $\ct_1$, consider the mean statistics $\hat{\pi}_{\cP}$ and $\hat{\pi}_{\cR}$ across all candidate regions $\cR \in \cbracket{\cP(s) : s \in \cK(\cP)}$ induced by the set of permissible splits at $\cP$.

Formally, for each internal region $\cR$ in $\ct_1$, let $\hat{\pi}_{\cR} = \dfrac{1}{n_{\cR}}\sum_{i=1}^n Y_i \cdot \ind{X_{i}\in \cR}$.
We let
\begin{equation}
\label{orth:stat1}
    \widehat{\pi}_{\sminus 1}(\cR) := \sqrt{n_{\cR}}\widehat{\pi}_{\cR} - \frac{n_{\cR}^{1}}{\sqrt{n_{1}n_{\cR}}} \sqrt{n_1}\widehat{\pi}_{1},
\end{equation}  
where $n_{\cR}^{1}:=n_{\cR \cap l_1}$ is the number of observations that lie in $\cR \cap l_1$.
With this definition, note that $\widehat{\pi}_{\cR}$ can be decomposed into a component involving $\widehat{\pi}_1$ and another involving $\widehat{\pi}_{\sminus 1}(\cR)$, as follows:
$$
\widehat{\pi}_{\cR} = \dfrac{n_{\cR}^{1}}{n_{\cR}} \widehat{\pi}_{1}+ \dfrac{1}{\sqrt{n_{\cR}}}\widehat{\pi}_{\sminus 1}(\cR).
$$
Now, for each internal region $p_h$ in the subtree $\ct_1$ and its candidate split region $p_h\nbracket{s_k^h}$, we define:
\vspace{-0.5cm}
\begin{equation*}
%\label{orth:stat2}
    \hat{T}_{n,\sminus 1}^{h} :=  \widehat{\pi}_{\sminus 1}\nbracket{p_{h}}, \quad \hat{T}^{h,k}_{n,\sminus 1} :=  \widehat{\pi}_{\sminus 1}\nbracket{p_{h}\nbracket{s_{h}^{k}}}, \  \text{ for }\  k \in \{2,\ldots, k_h\},  h \in [d].
    \vspace{-0.5cm}
\end{equation*} 
We include $n$ in the subscript of these statistics to highlight their dependence on the number of samples in the data.

Finally, using the statistics defined in \eqref{mean:stat} and \eqref{orth:stat1}, we let
\vspace{-0.5cm}
\begin{equation*}
\hat{T}_{n} := \begin{bmatrix}
 \hat{T}_{n,1}   & (\hat{T}_{n,\sminus 1})^\top 
\end{bmatrix}^\top\in \mathbb{R}^{\bar{k}}, 
\vspace{-0.5cm}
\label{defn: T}
\end{equation*}
where $\bar{k} := 1+ \sum_{h=1}^{d} k_h$, and 
\vspace{-0.5cm}
\begin{align*}
\begin{gathered}
\hat{T}_{n,1} := \sqrt{n_1}\hat{\pi}_{1} \in \mathbb{R}, \ \hat{T}_{n,\sminus 1} := \text{Vec}\nbracket{\begin{bmatrix} 
          \hat{T}^{h}_{n,\sminus 1} &   \hat{T}^{h,2}_{n,\sminus 1} & \cdots & \hat{T}^{h,k_{h}}_{n,\sminus 1} \end{bmatrix}^{\top} \in \mathbb{R}^{k_{h}}: \ h \in [d]} \in \mathbb{R}^{\bar{k}-1}.
\end{gathered}
\end{align*}

We refer to $\hat{T}_n$ as the \textit{key statistics} for inference on the parameter $\pi_1$. 
As shown next, the gain measure at each candidate split in the fit $\ct_1$ depends on data solely through these statistics.
Therefore, the distribution of $\hat{T}_n$, conditional on the tree fit, yields valid inference for $\pi_1$, justifying our use of the term key statistics for these variables.
For the reader’s convenience, the symbols used to represent the adaptive tree structure and construct the statistics of interest are summarized in Table \ref{tab:notations}.

\begin{table}[!h]
\label{tab:notations}
\centering
\renewcommand{\arraystretch}{1.25}
\begin{tabular}{|p{3.6cm}|p{11.2cm}|}
\hline
\multicolumn{2}{|l|}{\textbf{Notations for adaptive tree structure}} \\
\hline
$L_1=l_1$ & Observed leaf of interest. \\
$\mathcal{T}_1=\ct_1$ & Ancestral subtree of $l_1$; all internal nodes along the path from $\mathcal{X}$ to $l_1$. \\
$D=d$ & Total number of internal splits on the path to $L_1=l_1$. \\
$\bS_1=\bs_1$ & Sequence of random splits $\{S_h=s^1_h: h\in[d]\}$ along $\ct_1$. \\
$\bp_1=\{p_1,\ldots,p_d\}$ & Ordered internal regions from root to the leaf $l_1$. \\
$\hat{\pi}_1$ & Sample proportion in the leaf $l_1$. \\
$\widehat{\pi}_{-1}(\cR)$ & Orthogonalized proportion in region $\cR$, removing dependence on $\hat{\pi}_1$. \\
\hline
\multicolumn{2}{|l|}{\textbf{Node-specific notations}} \\
\hline
$S_h=s_h^1$ & Observed split at node $p_h$. \\
$\mathcal{K}_h=\{s_h^1,\ldots,s_h^{k_h}\}$ & Candidate split set at $p_h$ (size $k_h$). \\
$\hat{T}_{n,\sminus 1}^{h} \in \mathbb{R}$ & Orthogonalized statistic $\widehat{\pi}_{\sminus 1}(p_h)$ for internal node $p_h$. \\
$\hat{T}_{n,\sminus 1}^{h,k}\in \mathbb{R}$ & Orthogonalized statistic $\widehat{\pi}_{\sminus 1}(p_h[s_h^{k}])$ for $p_h[s_h^{k}]$, for $k=2,\ldots,k_h$. \\
\hline
\multicolumn{2}{|l|}{\textbf{Key statistics for inference}} \\
\hline
$\hat{T}_{n,1}=\sqrt{n_1}\hat{\pi}_1 \in \mathbb{R}$ & Main statistic involving the target parameter $\pi_1$. \\
$\hat{T}_{n,\sminus 1} \in \mathbb{R}^{\sum_{h=1}^d k_h}$ & Stacked vector of all orthogonalized statistics 
$\begin{bmatrix}
\hat{T}^{h}_{n,\sminus 1} &
\hat{T}^{h,2}_{n,\sminus 1} &
\cdots &
\hat{T}^{h,k_{h}}_{n,\sminus 1}
\end{bmatrix}^{\top}$ for $h \in [d]$. \\
$\hat{T}_n=[\hat{T}_{n,1},\hat{T}_{n,\sminus 1}^\top]^\top$ & Full vector of key statistics for selective inference. \\
$\bar{k}=1+\sum_{h=1}^d k_h$ & Total dimension of the statistic vector $\hat{T}_n$. \\
\hline
\end{tabular}
\caption{Summary of notation for the fitted tree, node-level quantities, and key statistics used in selective inference.}
\end{table}

\subsection{Information gains: a function of the key statistics}
\label{subsec:infogain}

Lemma~\ref{lem: suff stats}, the main result in this section, shows that the information gains within the relevant subtree are functions of the key statistics $\widehat{T}_n$ defined in \eqref{defn: T}.

To state this result, we rewrite the impurity measure $I(\cP; Y) := \cI(\hat{\pi}_{\cP})$ to make its dependence on the sample proportion $\hat{\pi}_{\cP}$ explicit and introduce the necessary definitions.
\begin{definition}
\label{defn: eta}
Consider a split $s$ made on the parent region $\cP$, partitioning it into the subregions $(\cP(s), \cP'(s))$. 
We define the scaling vector 
\vspace{-0.5cm}
\begin{align*}
   & \  \eta(s; \cP) := \\ & \begin{bmatrix} 
    \dfrac{\sqrt{n_{1}}}{n_{\cP}} & \dfrac{1}{\sqrt{n_{\cP}}}& \dfrac{n_{\cP(s)}^{1}}{n_{\cP(s)}\sqrt{n_{1}}}& \dfrac{1}{\sqrt{n_{\cP(s)}}} & \dfrac{n_{\cP}^{1}-n_{\cP(s)}^{1}}{\nbracket{n_{\cP}-n_{\cP(s)}}\sqrt{n_{1}}}  & \dfrac{\sqrt{n_{\cP}}}{n_{\cP}-n_{\cP(s)}} & -\dfrac{\sqrt{n_{\cP(s)}}}{n_{\cP}-n_{\cP(s)}} & \dfrac{n_{\cP(s)}}{n_{\cP}}\end{bmatrix}^{\top},
\end{align*} 
 where $n_{\cR}^{1}$ is the number of observations in $\cR \cap l_1$.
For brevity of notation, we let
\vspace{-0.5cm}
\begin{equation*}
 \eta^{k}_{h} := \eta(s_{h}^{k}; p_h) \in \mathbb{R}^{8}, \quad \text{ for } k \in [k_h],  h\in [d].
\end{equation*} 
\end{definition}

\begin{definition}
\label{def:2}
For a vector $\eta\in \mathbb{R}^{8}$, let 
\vspace{-0.5cm}
$$
D(\eta):=\cbracket{(x,y,z): (\eta_{1}x+\eta_{2}y, \eta_{3}x+\eta_4 z, \eta_5 x+ \eta_6 y+\eta_7 z) \in (0,1) \times (0,1)\times (0,1)} \subseteq \mathbb{R}^{3}.
\vspace{-0.5cm}
$$
We then define a mapping $g_{\eta}(.): D(\eta) \to \mathbb{R}$ as:
\vspace{-0.5cm}
\begin{equation*}
g_{\eta}\nbracket{x,y,z} := \cI\nbracket{\eta_{1}x+\eta_{2}y} -\eta_{8}\cI\nbracket{\eta_{3}x+\eta_{4}z}-(1-\eta_{8})\cI\nbracket{\eta_{5}x+\eta_{6}y+\eta_{7}z}.
\end{equation*} 
\end{definition}

\begin{definition}
\label{defn: g}
Let $v \in \mathbb{R}^{\bar{k}-1}$ be a $(\bar{k}-1)$-dimensional vector, which we decompose as $\text{Vec}([v^{h} \  v^{h,2}  \cdots  v^{h,k_{h}} ]^{\top} \in \mathbb{R}^{k_{h}}:  h \in [d]) \in \mathbb{R}^{\bar{k}-1}$, where $v^{h}$ and $v^{h,k}$, for $k \in \cbracket{2,\ldots,k_h}$, follow the ordering of the components of $\hat{T}_{n,\sminus 1}$ in \eqref{defn: T}.
Let 
\vspace{-0.5cm}
$$
\tilde{D}_{n}:= \cbracket{[u \ v]^\top \in \mathbb{R}^{\bar{k}}: 
 \nbracket{u,v^{h},v^{h+1}} \in D(\eta^1_h), \  \nbracket{u,v^{h},v^{h,k}}\in D(\eta^k_h), \ \forall h\in[d], k\in[k_h]},
$$
where  $\eta^{k}_{h}$ is as given in Definition \eqref{defn: eta}. 
Then, for all $h\in [d]$ and $ k \in \cbracket{2,\ldots,k_h}$, we define the non-negative functions $\cG^{k}_{h}(u,v): \tilde{D}_{n} \to \mathbb{R}^+$ as
\vspace{-0.5cm}
\begin{equation*}
    \cG^{1}_{h}(u,v) := g_{\eta^{1}_{h}}\nbracket{u,v^{h},v^{h+1}}, \quad  \cG^{k}_{h}(u,v) := g_{\eta^{k}_{h}}\nbracket{u,v^{h},v^{h,k}}.
    \vspace{-0.5cm}
\end{equation*} 
\end{definition}

We are now ready to state Lemma~\ref{lem: suff stats}.

\begin{lemma}
Let, for all $h \in [d]$ and $k \in [k_h]$, $\eta_{h}^{k}$ and $\cG_{h}^{k}(.,.)$ be as defined in \eqref{defn: eta} and \eqref{defn: g}, respectively.
Consider the information gain measure $G_{\cP}(s; Y)$ underlying the splitting rules. Then, it holds that
\vspace{-0.8cm}
$$
G_{p_h}(s^{k}_{h}; Y) = \cG^{k}_{h}\nbracket{\hat{T}_{n,1}, {\hat{T}_{n,\sminus 1}}}, \quad \text{ for all} \ \; k \in [k_h],\  h\in [d].
$$
 \label{lem: suff stats}
\end{lemma}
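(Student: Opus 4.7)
The plan is to unpack the gain $G_{p_h}(s^{k}_{h}; Y)$ in terms of $\cI(\widehat{\pi}_{\cR})$ for the three relevant regions $\cR \in \cbracket{p_h,\, p_h(s^{k}_{h}),\, p_h'(s^{k}_{h})}$, and then substitute the decomposition $\widehat{\pi}_{\cR} = (n^{1}_{\cR}/n_{\cR})\widehat{\pi}_1 + (1/\sqrt{n_{\cR}})\,\widehat{\pi}_{\sminus 1}(\cR)$ that follows directly from \eqref{orth:stat1}. Matching the resulting coefficients with the entries of $\eta^{k}_{h}$ from Definition \ref{defn: eta} and then recognising the mapping $g_{\eta}$ from Definition \ref{def:2} should yield the claim.

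For the parent region, I would exploit the fact that $p_h$ is an ancestor of the target leaf $l_1$, so $n^{1}_{p_h} = n_1$. This collapses the decomposition at $p_h$ to $\widehat{\pi}_{p_h} = (\sqrt{n_1}/n_{p_h})\,\widehat{T}_{n,1} + (1/\sqrt{n_{p_h}})\,\widehat{T}^{h}_{n,\sminus 1}$, whose two coefficients are exactly $\eta^{k}_{h,1}$ and $\eta^{k}_{h,2}$; this is precisely the slot $\eta_{1}x + \eta_{2}y$ inside the first $\cI(\cdot)$ in $g_{\eta}$. For the subregion $p_h(s^{k}_{h})$, the same decomposition gives $\eta^{k}_{h,3}\,\widehat{T}_{n,1} + \eta^{k}_{h,4}\,z$, where $z = \widehat{T}^{h,k}_{n,\sminus 1}$ for $k \neq 1$, and $z = \widehat{T}^{h+1}_{n,\sminus 1} = \widehat{\pi}_{\sminus 1}(p_{h+1})$ when $k = 1$, since by construction of the ancestral subtree $\ct_1$ the observed child $p_h(s^{1}_{h})$ is precisely $p_{h+1}$.

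For the complementary subregion $p_h'(s^{k}_{h})$, I would use the count additivities $n_{p_h}\widehat{\pi}_{p_h} = n_{p_h(s^{k}_{h})}\widehat{\pi}_{p_h(s^{k}_{h})} + n_{p_h'(s^{k}_{h})}\widehat{\pi}_{p_h'(s^{k}_{h})}$ and $n^{1}_{p_h} = n^{1}_{p_h(s^{k}_{h})} + n^{1}_{p_h'(s^{k}_{h})}$, solve for $\widehat{\pi}_{p_h'(s^{k}_{h})}$, and substitute in the two decompositions just obtained. A short algebraic simplification then matches the resulting coefficients to $\eta^{k}_{h,5},\,\eta^{k}_{h,6},\,\eta^{k}_{h,7}$ sitting on $\widehat{T}_{n,1}$, $\widehat{T}^{h}_{n,\sminus 1}$, and $z$ respectively. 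Finally, the outer weights $n_{p_h(s^{k}_{h})}/n_{p_h}$ and $n_{p_h'(s^{k}_{h})}/n_{p_h}$ are exactly $\eta^{k}_{h,8}$ and $1-\eta^{k}_{h,8}$.

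Assembling the three pieces and reading off the definitions of $g_{\eta^{k}_{h}}$ in Definition \ref{def:2} and of $\cG^{k}_{h}$ in Definition \ref{defn: g} gives $G_{p_h}(s^{k}_{h}; Y) = \cG^{k}_{h}\nbracket{\widehat{T}_{n,1},\widehat{T}_{n,\sminus 1}}$, which is the desired identity. The computations throughout are routine substitutions; the only delicate step is the bookkeeping that matches the third-coordinate statistic $z$ to the appropriate entry of $\widehat{T}_{n,\sminus 1}$, relying on the indexing convention that reuses $\widehat{\pi}_{\sminus 1}(p_{h+1})$ as the orthogonalised proportion for the observed child $p_h(s^{1}_{h})$ (and vanishes when $h=d$ because $\widehat{\pi}_{\sminus 1}(l_1) = 0$). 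This notational overlap — rather than any conceptual subtlety — is the main obstacle to a clean write-up.
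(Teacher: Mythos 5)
Your proposal follows the same route as the paper: decompose $\widehat{\pi}_{\cR}$ via $\widehat{\pi}_{\cR} = (n^{1}_{\cR}/n_{\cR})\widehat{\pi}_1 + (1/\sqrt{n_{\cR}})\,\widehat{\pi}_{\sminus 1}(\cR)$ at $p_h$, $p_h(s^k_h)$, and $p_h'(s^k_h)$ (the last via count additivity), use $n^1_{p_h}=n_1$ on the ancestral path, match coefficients to $\eta^k_h$, and identify the $k=1$ third slot with $\widehat{\pi}_{\sminus 1}(p_{h+1})$ since $p_{h+1}=p_h(s^1_h)$ up to relabelling of children. Your parenthetical note that $\widehat{\pi}_{\sminus 1}(l_1)=0$ at $h=d$ is a small but correct refinement that the paper's write-up leaves implicit.
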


\begin{definition}
\label{defn: Lambda}
For each $h \in [d]$, we define the mapping $\lambda_{h}(., .): \tilde{D}_n\to (0,1)$ as
\begin{align*}
\lambda_{h}(u,v) &:= \dfrac{\exp \left(\epsilon_{h}^{-1} \cG^{1}_{h}(u,v)\right)}{\sum_{k=1}^{k_{h}}\exp \left(\epsilon_{h}^{-1} \cG^{k}_{h}(u,v)\right)},
\end{align*}
where $\epsilon_{h}:=\epsilon_{p_h}$ and $\tilde{D}_n$ is as given in Definition \ref{defn: g}.
\end{definition}

As an immediate consequence of Lemma~\ref{lem: suff stats}, we obtain the following result. Lemma~\ref{lem: connection sampling prob} links $\lambda_{h}(\hat{T}_{n,1}, \hat{T}_{n,\sminus 1})$ with the probability of the observed splits under the exponential mechanism in~\eqref{exp:mechanism}.

\begin{lemma}
For $h\in [d]$, we that $\lambda_{h}(\hat{T}_{n,1}, \hat{T}_{n,\sminus 1})= \Pp_{Y}(s_{h}^{1};p_{h})$.
%\mathbb{P}\rbracket{S_{h}^{1}=s_{h}^{1}\mid \hat{T}_{n,1},\hat{T}_{n,\sminus 1}}$.
\label{lem: connection sampling prob}
\end{lemma}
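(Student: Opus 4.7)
The plan is to derive the lemma as a direct consequence of Lemma~\ref{lem: suff stats} together with the definitions of the exponential mechanism in \eqref{exp:mechanism} and the mapping $\lambda_h$ in Definition~\ref{defn: Lambda}. The proof is essentially a substitution argument, so there is no substantial obstacle to overcome beyond carefully matching up indices and parameters.

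Concretely, I would begin by writing out $\mathbb{P}_Y(s_h^1; p_h)$ using \eqref{exp:mechanism} with the parent region $\cP = p_h$, the candidate split $s = s_h^1$, the temperature $\epsilon_{\cP} = \epsilon_h$, and the candidate set $\cK(p_h) = \{s_h^1, s_h^2, \ldots, s_h^{k_h}\}$:
\[
\mathbb{P}_Y(s_h^1; p_h) = \frac{\exp\!\bigl(\epsilon_h \, G_{p_h}(s_h^1; Y)\bigr)}{\displaystyle\sum_{k=1}^{k_h} \exp\!\bigl(\epsilon_h \, G_{p_h}(s_h^k; Y)\bigr)}.
\]
Next, I would apply Lemma~\ref{lem: suff stats}, which asserts that $G_{p_h}(s_h^k; Y) = \cG_h^k(\hat{T}_{n,1}, \hat{T}_{n,-1})$ for every $k \in [k_h]$ and $h \in [d]$. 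Substituting this identity into both the numerator and each term of the denominator yields
\[
\mathbb{P}_Y(s_h^1; p_h) = \frac{\exp\!\bigl(\epsilon_h \, \cG_h^1(\hat{T}_{n,1}, \hat{T}_{n,-1})\bigr)}{\displaystyle\sum_{k=1}^{k_h} \exp\!\bigl(\epsilon_h \, \cG_h^k(\hat{T}_{n,1}, \hat{T}_{n,-1})\bigr)}.
\]

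Finally, I would recognize the right-hand side as precisely $\lambda_h(\hat{T}_{n,1}, \hat{T}_{n,-1})$ by Definition~\ref{defn: Lambda}, which completes the proof. The only minor point worth checking is that $(\hat{T}_{n,1}, \hat{T}_{n,-1})$ lies in the domain $\tilde{D}_n$ on which $\lambda_h$ is defined; this follows from the fact that sample proportions $\hat{\pi}_{\cR}$ for all relevant regions $\cR$ lie in $(0,1)$ whenever the leaf and internal regions contain observations of both classes, which is implicit in the construction since the information gains $G_{p_h}(s_h^k; Y)$ have already been evaluated at these sample proportions.
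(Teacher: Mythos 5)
Your proposal is correct and is precisely the substitution argument the paper has in mind: the authors omit the proof, stating that it "follows directly from Definition \ref{defn: Lambda}" (in combination with Lemma \ref{lem: suff stats}), and your write-up simply spells out that direct substitution. Your side remark on the domain $\tilde{D}_n$ is a nice touch but not strictly needed here, since the identity holds pointwise wherever the information gains are defined.
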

The proof of this lemma follows directly from Definition \ref{defn: Lambda} and is therefore omitted. As the next step, we characterize the distribution of the key statistics $\hat{T}_n$, adjusted for the selection of splits, to enable valid inference for our parameter of interest, $\pi_1$.

\subsection{Roadmap to pivots for valid inference}
\label{subsec:inferenceroadmap}

We summarize the main inferential results in three steps, outlining the pre-conditional and conditional distributions of key statistics. 
Formal statements along with the precise asymptotic analysis appear in the next section.

\noindent{\textbf{Step 1}}. \quad First, we show that if the adaptive nature of the tree fit is ignored---i.e., we treat the tree structure as fixed---then the key statistics defined in~\eqref{defn: T} follow an asymptotic normal distribution as the number of samples in each region of the fit increases.

\begin{result}{}{\normalfont{[An informal view of the pre-conditional distribution}].} Suppose that the tree fit is specified a priori and independent of the observed data. Let $\Pi_{n,1} = \sqrt{n_1} \pi_1$ and $\Pi_{n,\sminus 1}= \Ee[\hat{T}_{n,\sminus 1}]$. 
Then, 
\vspace{-0.5cm}
   $$ \Sigma_n^{-1/2}\begin{bmatrix} \hat{T}_{n,1}- \Pi_{n,1} \\ \hat{T}_{n,\sminus 1}- \Pi_{n,\sminus 1}\end{bmatrix} \indist \mathcal{N}_{\bar{k}}\nbracket{0_{\bar{k}}
    ,I_{\bar{k}, \bar{k}}},$$ 
where $\Sigma_n = \begin{bmatrix}
        \sigma_{n,1}^{2} & 0_{\bar{k}-1}^{\top}\\ 0_{\bar{k}-1} & \underset{(\bar{k}-1 \times \bar{k}-1)}{\Sigma_{n,\sminus 1}}
    \end{bmatrix}$, with the exact expressions for $\sigma_{n,1}^2$ and $\Sigma_{n,\sminus 1}$ provided later in Proposition \ref{prop: normaldistn}.
\label{prop: marginal}
\end{result}

To help build intuition behind our inferential approach, in the next two steps we assume for now that the pre-conditional distribution of the key statistics, as stated in Result \ref{prop: marginal}, holds exactly rather than asymptotically.

\noindent{\textbf{Step 2}}. \quad In reality, the tree fit, and the resulting algorithmic model depends on the observed data. To obtain the guarantees specified in Section~\ref{subsec:inferenceguarantees}, we condition the pre-conditional distribution on the observed event
\vspace{-0.5cm}
$$
\cbracket{\bS_{1}=\bs_{1}}=\underset{h\in [d]}{\cap}\left\{S_{h} = s^1_h\right\}
$$
where $s^1_h$ is the (selected) split sampled from the candidate set $\cK_h$ when splitting the internal region $p_h$, and $\bs_{1}= \{s^1_h: h\in [d]\}$. This yields the conditional distribution of the key statistics.

%\textcolor{red}{Have to think where this remark belongs! Probably in the next section}
%\SB{
%\begin{remark}
%Note that the proportions $\hat{\pi}_\cR \in (0,1)$ with probability 1, which further enforces $\hat{T}_n = \rbracket{\hat{T}_{n,1} \  \hat{T}_{n,\sminus 1}}^\top \in \tilde{D}_n$ almost surely. With slight notational abuse, we extend the functions $\cG_h^k, \lambda_h$ outside $\tilde{D}_n$ as 0. Thus while constructing the pivot the integrals are non-trivial only inside the intervals $\cbracket{u\in \mathbb{R}: [u \ t_{-1}]^\top \in \tilde{D}_n}$. 
%\end{remark}}
Result~\ref{prop: cond} states the conditional density of the key statistics given the observed  tree fit.
\begin{result}{}{\normalfont{[An informal view of the conditional distribution}].} 
Suppose that $$\begin{bmatrix} \hat{T}_{n,1}- \Pi_{n,1} \\ \hat{T}_{n,\sminus 1}- \Pi_{n,\sminus 1}\end{bmatrix} \sim \mathcal{N}_{\bar{k}}\nbracket{0_{\bar{k}}, \Sigma_n}.$$ 
Then, the conditional density of $(\hat{T}_{n,1}, \hat{T}_{n,\sminus 1})$  given the event $\cbracket{\bS_{1}=\bs_{1}}=\underset{h\in [d]}{\cap}\left\{S_{h} = s^1_h\right\}$, at $(u, v)\in \mathbb{R}^{\bar{k}}$, is:
\vspace{-0.5cm}
$$
\frac{ \phi(u;\Pi_{n,1},\sigma_{n,1}^{2})\times \phi(v;\Pi_{n,\sminus 1},\Sigma_{n,\sminus 1}) \times \prod_{h=1}^{d}\lambda_{h}(u,v)}{ \int\phi(u';\Pi_{n,1},\sigma_{n,1}^{2})\times \phi(v';\Pi_{n,\sminus 1},\Sigma_{n,\sminus 1}) \times \left\{\prod_{h=1}^{d}\lambda_{h}(u',v')\right\} \; du' dv'}.
\vspace{-0.5cm}
$$
\label{prop: cond}
\end{result}

First we make a few observations about the conditional density of our key statistics.
\begin{enumerate}[leftmargin=*]
\item[(i)] The function $\lambda_{h}(u,v)$ accounts for the selection of the random split at $p_h$, and therefore, we refer to it as the correction factor.
When applied to the pre-conditional density of the key statistics, this correction factor yields the conditional density of the key statistics given the fitted tree.
\item[(ii)] The correction factor $\lambda_{h}(\cdot, \cdot)$ at each split is available in closed form, as it coincides with the sampling probabilities based on the exponential mechanism in~\eqref{exp:mechanism}.
\end{enumerate}
We present a proof of Result \ref{prop: cond} below.

\begin{proof}
Applying the Bayes' rule, we obtain that the conditional density of $(\hat{T}_{n,1}, \hat{T}_{n,\sminus 1})$  given $\cbracket{\bS_{1}=\bs_{1}}$, at $(u, v)\in \mathbb{R}^{\bar{k}}$  is equal to 
 \begin{equation}
 \frac{ \phi(u;\Pi_{n,1},\sigma_{n,1}^{2})\times \phi(v;\Pi_{n,\sminus 1},\Sigma_{n,\sminus 1}) \times \mathbb{P}\rbracket{\bS_{1}=\bs_{1}\mid \hat{T}_{n,1}=u,\hat{T}_{n,\sminus 1}=v}}{ \int \phi(u';\Pi_{n,1},\sigma_{n,1}^{2})\times \phi(v';\Pi_{n,\sminus 1},\Sigma_{n,\sminus 1}) \times \mathbb{P}\rbracket{\bS_{1}=\bs_{1}\mid \hat{T}_{n,1}=u',\hat{T}_{n,\sminus 1}=v'} \; du' dv'}.
 \label{cond:density:BR}
\end{equation} 

Under the exponential mechanism in \eqref{exp:mechanism}, the probability of observing $s_{h}^{1}$ given the key statistics equals:
\vspace{-0.5cm}
$$\Pp_{Y}(s_{h}^{1};p_{h})=\lambda_{h}(\hat{T}_{n,1}, \hat{T}_{n,\sminus 1}),$$
as shown in Lemma \ref{lem: connection sampling prob}.
Furthermore, 
\vspace{-0.5cm}
$$
\mathbb{P}\rbracket{S_{h}^{1}=s_{h}^{1}\mid \hat{T}_{n,1},\hat{T}_{n,\sminus 1}}= \mathbb{P}\rbracket{S_{h}^{1}=s_{h}^{1}\mid \hat{T}_{n,1},\hat{T}_{n,\sminus 1}, \textsc{Gains}_{p_h}(Y)}=\Pp_{Y}(s_{h}^{1};p_{h})=\lambda_{h}(\hat{T}_{n,1}, \hat{T}_{n,\sminus 1}),
\vspace{-0.5cm}
$$
since $S_{h}^{1}$ is independent of $Y$ given $\textsc{Gains}_{p_h}(Y)$, the collection of gains at $p_h$.
Finally, noting that the random splits are conditionally independent given the key statistics, we have 
\vspace{-0.5cm}
\begin{align*}
\mathbb{P}\rbracket{\bS_{1}=\bs_{1}\mid \hat{T}_{n,1}=u,\hat{T}_{n,\sminus 1}=v} &= \prod_{h=1}^{d} \mathbb{P}\rbracket{S_{h}^{1}=s_{h}^{1}\mid \hat{T}_{n,1}=u,\hat{T}_{n,\sminus 1}=v}\\
&= \prod_{h=1}^{d} \lambda_{h}(u, v). 
\end{align*} 
By substituting the expression for $\mathbb{P}\rbracket{\bS_{1}=\bs_{1}\mid \hat{T}_{n,1}=u,\hat{T}_{n,\sminus 1}=v}$ into \eqref{cond:density:BR}, we obtain the conditional density stated in the claim.
\end{proof}

\noindent{\textbf{Step 3}}. \quad The conditional density in Result~\ref{prop: cond} involves the unknown parameters $\Pi_{n,1}$ and $\Pi_{n,\sminus 1}$. 
To construct a pivot for $\Pi_{n,1}=\sqrt{n_{1}}\pi_{1}$, a rescaled version of our  target parameter $\pi_1$, we further condition on $\hat{T}_{n,\sminus 1}$ to remove the nuisance parameters $\Pi_{n,\sminus 1}$ from this conditional density. 
In this final step, we apply a probability integral transform (PIT) to the conditional density of 
\begin{equation}
    \hat{T}_{n,1} \;  \Big\lvert \;  \cbracket{\bS_{1}=\bs_{1}, \hat{T}_{n,\sminus 1}}
    \label{cond:density:pit}
\end{equation}
to obtain a pivotal quantity for our target parameter.

\begin{result}{}{\normalfont{[Pivot}].} Consider the random variable 
\[
    \text{\normalfont Pivot}(\hat{T}_{n,1}; \Pi_{n,1})= \frac{\int_{-\infty}^{\hat{T}_{n,1}}\phi(u;\Pi_{n,1},\sigma_{n,1}^{2})\times \prod_{h=1}^{d}\lambda_{h}(u,\hat{T}_{n,\sminus 1})  du}{\int_{-\infty}^{\infty}\phi(u;\Pi_{n,1},\sigma_{n,1}^{2})\times \prod_{h=1}^{d}\lambda_{h}(u,\hat{T}_{n,\sminus 1})  du}.
\]
Under the conditional density stated in Result~\ref{prop: cond}, 
\vspace{-0.5cm}
$$
\text{\normalfont  Pivot}(\hat{T}_{n,1}; \Pi_{n,1})\; \Big\lvert \;  \cbracket{\bS_{1}=\bs_{1}}\sim \text{\normalfont Uniform}[0,1].
\vspace{-0.5cm}
$$
\label{prop: pivot}
\end{result}

\begin{proof}
Result \ref{prop: cond} implies directly that this density function, evaluated at $u$, is proportional to
\vspace{-0.5cm}
$$\phi(u;\Pi_{n,1},\sigma_{n,1}^{2})\times \prod_{h=1}^{d}\lambda_{h}(u,\hat{T}_{n,\sminus 1}).$$
Applying the PIT to this density function yields the desired pivot, which satisfies
\vspace{-0.5cm}
$$
\text{\normalfont  Pivot}(\hat{T}_{n,1}; \Pi_{n,1})\; \Big\lvert \;  \cbracket{\bS_{1}=\bs_{1}, \hat{T}_{n,\sminus 1}}\sim \text{\normalfont Uniform}[0,1].
\vspace{-0.5cm}
$$
Applying the law of iterated expectation establishes the claim in the result.
\end{proof}

As emphasized earlier, Results \ref{prop: cond} and \ref{prop: pivot} in Steps 2 and 3 rely on the assumption that the pre-conditional distribution of the key statistics is exactly normal. 
However, as stated in Result \ref{prop: marginal}, when dealing with Bernoulli outcome variables, this distribution is only asymptotically normal and does not hold exactly.
In the following section, we show that while the pivot in Result \ref{prop: pivot} may not be exact, it yields asymptotically valid inference since it converges in distribution to a $\text{Uniform}[0,1]$ random variable.

We conclude this section with a remark on the variance parameter $\sigma_{n,1}^{2}$ that appears in the expression of the pivot.
In practice, this quantity is unknown, as it depends on $\pi_1$, as we derive later. 
However, we demonstrate in the subsequent theoretical development that replacing this parameter with a consistent estimator yields asymptotically valid inference.
The simulation results in Section \ref{sec:simulations} corroborate this theoretical result.

\section{Asymptotic theory for inference}
\label{sec:asymptotictheory}

We theoretically justify the pivotal quantity introduced in Section \ref{sec:inference}, showing that it converges to a uniform random variable.

\subsection{Weak limit for our pivot}
\label{subsec:weaklimit}

We begin by identifying a linear representation of our key statistics in Proposition \ref{prop: linrep}, expressing them as weighted sums of independent Bernoulli response variables.

For any region $R \subseteq \cX$ and $j \in [m]$, let $n_R = \lvert \cbracket{ i : X_{i,n} \in R } \rvert$ and $n_R^j = n_{R \cap l_j}$. 
Furthermore, for simplicity of notation, we let $p_h^k=p_h(s^k_h)$, $n_{h,0} = n_{p_h}$, $n_{h,k} = n_{p_h^k}$, and $n_{h,0}^j = n_{p_h \cap l_j}$, $n_{h,k}^j = n_{p_h^k \cap l_j}$.
In this section, we make the dependence of the model parameters on the sample size explicit by writing $\theta_i= \theta_{i,n}$, for $i\in [n]$, and $\pi_j = \pi_{j,n}$ for $j \in [m]$.

\begin{definition}
\label{defn: weights}
Define $a_{i,n} \in \mathbb{R}^{\bar{k}}$ with its first entry denoted by $a_{i,n,1}=\sqrt{\frac{n}{n_1}}\mathbbm{1}[X_{i,n}\in l_{1}]$ and the remaining components collected as $a_{i,n,\sminus 1}= \text{Vec}([a_{i,n,\sminus 1}^{h} \ a_{i,n,\sminus 1}^{h,2} \cdots a_{i,n,\sminus 1}^{h,k_{h}} ]^{\top} \in \mathbb{R}^{k_{h}}: \ h \in [d]) \in \mathbb{R}^{\bar{k}-1}$, where the entries are set to be
\vspace{-0.5cm}
\begin{equation*}
\begin{gathered}
\ a_{i,n,\sminus 1}^{h} =  \sqrt{\frac{n}{n_{h,0}}}\nbracket{\mathbbm{1}[X_{i,n}\in p_{h}]-\mathbbm{1}[X_{i,n}\in l_{1}]},\ 
a_{i,n,\sminus 1}^{h,k} =  \sqrt{\frac{n}{n_{h,k}}}\nbracket{\mathbbm{1}[X_{i,n}\in p_{h}^{k}]-\frac{n_{h,k}^{1}}{n_1}\mathbbm{1}[X_{i,n}\in l_{1}]}.
\end{gathered} 
\end{equation*}
\end{definition}

\begin{proposition}
\label{prop: linrep}
It holds that $\hat{T}_{n} = \frac{1}{\sqrt{n}} \sum_{i\in [n]} a_{i,n}Y_{i,n}$ where $a_{i,n}$ is as given in Definition \ref{defn: weights} and $\{Y_{i,n}\}_{i\in[n]}$ are independent $\text{\normalfont Bernoulli}(\theta_{i,n})$ variables, with 
\vspace{-0.5cm}
$$\theta_{i,n} = \sum_{j \in [m]} \pi_{j,n} \mathbbm{1}[X_{i,n} \in l_{j}].$$ 
\end{proposition}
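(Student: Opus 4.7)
The plan is direct algebraic verification. Each coordinate of $\hat{T}_n$ is a linear functional of the Bernoulli outcomes $Y_{1,n},\ldots,Y_{n,n}$, so it suffices to compute the coefficient attached to each $Y_{i,n}$ and check that it agrees with the corresponding entry of $a_{i,n}$ in Definition~\ref{defn: weights}. Once this coordinate-wise matching is established, the distributional claim about the $Y_{i,n}$'s is merely the model assumption~\eqref{tree:model}, conditional on the fixed predictor matrix $X$, so no new probabilistic reasoning is needed.

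First I would handle the leading coordinate $\hat{T}_{n,1} = \sqrt{n_1}\hat{\pi}_1$. Substituting the definition of $\hat{\pi}_1$ from~\eqref{mean:stat} and rewriting $\sqrt{n_1}/n_1 = n^{-1/2}\sqrt{n/n_1}$ yields $\hat{T}_{n,1} = n^{-1/2}\sum_{i\in[n]} a_{i,n,1} Y_{i,n}$ with $a_{i,n,1} = \sqrt{n/n_1}\,\mathbbm{1}[X_{i,n}\in l_1]$, which matches Definition~\ref{defn: weights}. For the remaining coordinates, I would first record the general identity for the orthogonalized statistic from~\eqref{orth:stat1},
\[
\widehat{\pi}_{\sminus 1}(\cR) \;=\; \frac{1}{\sqrt{n_{\cR}}}\sum_{i=1}^{n} Y_{i,n}\Bigl(\mathbbm{1}[X_{i,n}\in \cR] \,-\, \tfrac{n_{\cR}^{1}}{n_1}\,\mathbbm{1}[X_{i,n}\in l_1]\Bigr),
\]
obtained by combining the two terms in~\eqref{orth:stat1} over a common factor of $\sqrt{n_\cR}$. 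Specializing $\cR = p_h$ uses the key structural fact that $p_h$ lies on the ancestral path of $l_1$, so $l_1 \subseteq p_h$ and therefore $n_{p_h}^{1} = n_1$; the bracket collapses to $\mathbbm{1}[X_{i,n}\in p_h] - \mathbbm{1}[X_{i,n}\in l_1]$, and multiplying by $\sqrt{n/n_{h,0}}$ to pull out the $n^{-1/2}$ recovers $a_{i,n,\sminus 1}^{h}$. Specializing instead to $\cR = p_h^k$ for the sibling candidate regions ($k \geq 2$), which need not contain $l_1$, gives $a_{i,n,\sminus 1}^{h,k}$ without any such collapse, exactly as written in Definition~\ref{defn: weights}.

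The only step requiring care is the nestedness bookkeeping: tracking which candidate split regions $p_h^k$ contain $l_1$ (so that $n_{h,k}^{1}=n_1$) and which do not (so that $n_{h,k}^{1}$ is kept generic), and ensuring the sibling ordering implicit in the component $\hat{T}_{n,\sminus 1}^{h,k}$ of $\hat{T}_{n,\sminus 1}$ lines up with the ordering of the entries of $a_{i,n,\sminus 1}$. Once the coefficients are matched coordinate by coordinate via the identity above, stacking them yields $\hat{T}_n = n^{-1/2}\sum_{i\in[n]} a_{i,n} Y_{i,n}$, and the independent-Bernoulli statement is inherited verbatim from~\eqref{tree:model}. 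I do not anticipate any real obstacle beyond careful indexing.
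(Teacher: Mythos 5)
Your proof is correct and follows the same coordinate-by-coordinate matching argument as the paper: expand each component of $\hat{T}_n$ as a linear functional of the $Y_{i,n}$, verify the coefficients agree with Definition~\ref{defn: weights}, and inherit the Bernoulli law from the model~\eqref{tree:model}. You in fact make explicit a step the paper leaves implicit, namely that $n_{p_h}^1 = n_1$ because $l_1 \subseteq p_h$ on the ancestral path, which is precisely what makes the bracket collapse to $\mathbbm{1}[X_{i,n}\in p_h] - \mathbbm{1}[X_{i,n}\in l_1]$ for the parent-region coordinates.
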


Equipped with this linear representation, we apply the Lindeberg Central Limit Theorem to establish that, for a fixed tree fit, these key statistics are normally distributed, as described in Step 1 of our overview. This result is formalized in Proposition \ref{prop: normaldistn}, and the conditions required for its validity are given below. 

\begin{assumption}
\label{ass:1}
Assume that, for all observed regions in $\ct_1$ and for all candidate regions obtained by splitting an observed parent region, the number of samples grows proportionally with the total sample size $n$.
Notationally, we denote this asymptotic regime as $[\bn] \to \infty$ in our limits, i.e., 
\vspace{-0.5cm}
$$\lim _{[\bn] \to \infty} \frac{n}{n_{\cR}} < \infty,  \text{ for } \ \cR \in \cbracket{l_1, p_h^k, p_h\backslash p_h^k: h\in [d], \ k \in [k_h]}.$$
% \textcolor{red}{I do not think that there is a supremum needed, right? And, the limit is as $n$, $n_{h,k}$,… etc. grow? So I'd write $\displaystyle\lim _{n, n_{h,k}}\frac{n}{n_{h,k}} <\infty$, right?}
\end{assumption}

\begin{assumption}
\label{ass:2}
Assume that the proportion parameters $\nbracket{\pi_{1,n},\ldots,\pi_{m,n}}\in(0,1)^{m}$ are bounded away from 0 and 1, i.e.,
\vspace{-0.5cm}
$$\liminf_{[\bn] \to \infty} \pi_{j,n}>0; \quad  \limsup _{[\bn] \to \infty} \pi_{j,n}<1 \ , \ \forall j \in [m]$$
\end{assumption}

\begin{assumption}{}{[Uniform non-degeneracy and boundedness]}  
\label{ass:3}
The covariance matrix $\Sigma_n$ in the pre-conditional normal distribution of our key statistics is positive definite, and all its eigenvalues are bounded away from $0$ and $\infty$; i.e., there exist constants $b, B > 0$ such that 
 \vspace{-0.5cm}
$$
0<\frac{1}{b^2}<\lambda_{\min}(\Sigma_n)\leq\lambda_{\max}(\Sigma_n)<B^2<\infty
$$
\end{assumption}

\begin{remark}
\label{remark:ass:3}
Under Assumption~\ref{ass:3}, the norms of the square root and the inverse square root of the covariance matrix are uniformly bounded. Since $\Sigma_n$ is symmetric and positive definite, we have
$\|\Sigma_n^{1/2}\|^2=\lambda_{\max}(\Sigma_n)$ and
$\|\Sigma_n^{-1/2}\|^{-2}=\lambda_{\min}(\Sigma_n)$, so that we obtain
 \vspace{-0.5cm}
 \[
\|\Sigma_n^{1/2}\|<B<\infty, 
\qquad 
\|\Sigma_n^{-1/2}\|<b<\infty
\quad \forall n
\]
\end{remark}

\begin{proposition}
Suppose that the tree fit is specified a apriori and independent of the observed data. 
Under Assumptions \ref{ass:1}, \ref{ass:2} and \ref{ass:3}, it follows that 
\vspace{-0.5cm}
$$
\Sigma_{n}^{-\frac{1}{2}}\nbracket{\hat{T}_{n}-\Pi_{n}}\indist \mathcal{N}_{\bar{k}}\nbracket{0_{\bar{k}}
    ,I_{\bar{k}, \bar{k}}},
$$
where $\Pi_{n} := \mathbb{E} [\hat{T}_n] = \dfrac{1}{\sqrt{n}} \displaystyle\sum_{i \in [n]} a_{i,n} \theta_{i,n}$, and $\Sigma_{n} := \Cov (\hat{T}_n) = \dfrac{1}{n} \displaystyle\sum_{i \in [n]} \theta_{i,n}(1 - \theta_{i,n}) a_{i,n} a_{i,n}^{\top}$.    
\label{prop: normaldistn}
\end{proposition}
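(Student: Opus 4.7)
The plan is to combine the linear representation from Proposition~\ref{prop: linrep} with the Cram\'er--Wold device and the Lindeberg--Feller CLT for triangular arrays of row-independent random variables. Writing
\vspace{-0.3cm}
\[
\hat{T}_n-\Pi_n=\tfrac{1}{\sqrt{n}}\sum_{i=1}^{n}a_{i,n}\bigl(Y_{i,n}-\theta_{i,n}\bigr)
\]
turns the problem into a classical CLT, and I expect the Lindeberg condition to essentially follow for free from the boundedness of Bernoulli increments, provided I can first establish a uniform bound on the weights $a_{i,n}$.

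For the Cram\'er--Wold reduction, I would fix an arbitrary unit vector $t\in\mathbb{R}^{\bar{k}}$ and decompose
\vspace{-0.3cm}
\[
t^{\top}\Sigma_n^{-1/2}\bigl(\hat{T}_n-\Pi_n\bigr) \;=\; \sum_{i=1}^{n}Z_{i,n},\qquad Z_{i,n}:=\tfrac{1}{\sqrt{n}}\bigl(t^{\top}\Sigma_n^{-1/2}a_{i,n}\bigr)\bigl(Y_{i,n}-\theta_{i,n}\bigr).
\]
By Proposition~\ref{prop: linrep}, the $\{Z_{i,n}\}$ form a row-independent, mean-zero triangular array, and the variances telescope as $\sum_{i=1}^{n}\mathrm{Var}(Z_{i,n})=t^{\top}\Sigma_n^{-1/2}\Sigma_n\Sigma_n^{-1/2}t=\|t\|^{2}=1$ by construction. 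Only Lindeberg's condition remains to be checked.

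Using $|Y_{i,n}-\theta_{i,n}|\le 1$ together with the operator-norm bound $\|\Sigma_n^{-1/2}\|\le b$ from Remark~\ref{remark:ass:3} gives $|Z_{i,n}|\le b\,\|a_{i,n}\|/\sqrt{n}$. Each coordinate of $a_{i,n}$ in Definition~\ref{defn: weights} has the form $\sqrt{n/n_R}$ times a difference of $\{0,1\}$-indicators (possibly scaled by a coefficient in $[0,1]$), and hence is bounded in absolute value by $2\sqrt{n/n_R}$ for some region $R\in\{l_{1},\,p_h,\,p_h^{k}\}$. Assumption~\ref{ass:1} keeps every $n/n_R$ bounded in the $[\bn]\to\infty$ regime, so I obtain $\|a_{i,n}\|\le C$ uniformly in $i$ and $n$. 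Therefore $|Z_{i,n}|\le bC/\sqrt{n}$ uniformly, and for any $\varepsilon>0$ the indicator $\mathbbm{1}\{|Z_{i,n}|>\varepsilon\}$ vanishes once $n$ is large enough, so $\sum_{i=1}^{n}\mathbb{E}[Z_{i,n}^{2}\mathbbm{1}\{|Z_{i,n}|>\varepsilon\}]=0$ eventually. Lindeberg--Feller then gives $\sum_{i=1}^{n}Z_{i,n}\indist\mathcal{N}(0,1)$, and since $t$ was arbitrary, the Cram\'er--Wold device delivers the multivariate conclusion $\Sigma_n^{-1/2}(\hat{T}_n-\Pi_n)\indist\mathcal{N}_{\bar{k}}(0_{\bar{k}},I_{\bar{k}})$.

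The only genuinely delicate step is the uniform bound on $\|a_{i,n}\|$, and this is the one place where the finer structure of the problem enters. One must notice that the ancestral subtree structure forces $l_{1}\subseteq p_h$ for every $h\in[d]$, which keeps the differences $\mathbbm{1}[X_{i,n}\in p_h]-\mathbbm{1}[X_{i,n}\in l_{1}]$ in $\{0,1\}$; and that Assumption~\ref{ass:1} must cover not just the observed regions but also every \emph{candidate} split region $p_h^{k}$ appearing in $a_{i,n,\sminus 1}^{h,k}$ (which is precisely how Assumption~\ref{ass:1} is stated). Without these two observations a coordinate of $a_{i,n}$ could diverge and break the Lindeberg step; everything else in the proof is mechanical.
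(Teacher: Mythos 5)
Your proof is correct and follows essentially the same route as the paper's: exploit the linear representation from Proposition~\ref{prop: linrep}, bound the Bernoulli increments by $1$, use Assumption~\ref{ass:1} to get a uniform bound on $\|a_{i,n}\|$ (and Assumption~\ref{ass:3} for $\|\Sigma_n^{-1/2}\|$), and conclude via Lindeberg--Feller. The only cosmetic difference is that you reduce to the univariate case via Cram\'er--Wold whereas the paper invokes the multivariate Lindeberg--Feller theorem directly; your additional remarks about $l_1\subseteq p_h$ and about Assumption~\ref{ass:1} covering the candidate split regions are correct and make explicit what the paper's statement that $\sup_{i,n}\|a_{i,n}\|\le M$ leaves implicit.
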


\begin{assumption}
\label{ass:infodevbound}
The function measuring information, $\cI(.):(0,1)\to \mathbb{R}$, is bounded and thrice differentiable, with all its first through third order derivatives also bounded uniformly over $(0,1)$.
\end{assumption}

We now state our main result, Theorem \ref{thm: asymptotic validity}, which shows that the pivot introduced informally in Step 3 of Section~\ref{subsec:inferenceroadmap} converges in distribution to a uniform random variable.

\begin{theorem}
Under Assumptions~\ref{ass:1},\ref{ass:2},\ref{ass:3} and \ref{ass:infodevbound}, we have
 \vspace{-0.5cm}
$$
 \text{\normalfont Pivot}(\hat{T}_{n,1}; \Pi_{n,1})\;\lvert \;  \cbracket{\bS_{1}=\bs_{1}} \indist \text{\normalfont Uniform}[0,1].
 \vspace{-1cm}
$$
\label{thm: asymptotic validity}
\end{theorem}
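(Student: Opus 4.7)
The goal is to upgrade Result \ref{prop: pivot}---which is an exact identity under Gaussian pre-conditional sampling---to an asymptotic uniform limit under the true Bernoulli sampling. The overall strategy is to show that, as $[\bn]\to\infty$, the conditional law of $\hat{T}_{n,1}$ given $\hat{T}_{n,-1}$ and $\{\bS_1=\bs_1\}$ is asymptotically governed by the same tilted Gaussian density used to define the pivot; since the pivot is the CDF transform of that density, applying it to a sample from the limiting law yields exactly $\text{Uniform}[0,1]$ by Result \ref{prop: pivot}, and the conditional weak convergence then transfers through to give the desired uniform limit for the pivot.

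I would organize the argument in three steps. First, invoke Proposition \ref{prop: normaldistn} for the pre-conditional multivariate CLT $\Sigma_n^{-1/2}(\hat{T}_n-\Pi_n)\indist\mathcal{N}_{\bar{k}}(0,I)$, and note that the block-diagonal covariance of $\Sigma_n$---a direct consequence of the orthogonalization \eqref{orth:stat1}---makes $\hat{T}_{n,1}$ and $\hat{T}_{n,-1}$ asymptotically independent, which is precisely why the pivot uses the marginal density $\phi(u;\Pi_{n,1},\sigma_{n,1}^2)$ rather than a conditional one. Second, by the Bayes-rule identity from the proof of Result \ref{prop: cond},
\[
\mathbb{P}\bigl[\hat{T}_n\in A\mid \bS_1=\bs_1\bigr]=\frac{\mathbb{E}\bigl[\mathbbm{1}\{\hat{T}_n\in A\}\prod_{h=1}^d\lambda_h(\hat{T}_n)\bigr]}{\mathbb{E}\bigl[\prod_{h=1}^d\lambda_h(\hat{T}_n)\bigr]},
\]
the selection-conditional law is a tilt of the unconditional law by $\prod_h\lambda_h$. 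By Lemma \ref{lem: suff stats} together with Assumption \ref{ass:infodevbound}, each $\lambda_h$ is a continuous, $(0,1)$-valued function of $\hat{T}_n$, so the pre-conditional CLT combined with dominated convergence yields
\[
\mathbb{E}\bigl[g(\hat{T}_n)\mid\bS_1=\bs_1\bigr]\to\mathbb{E}\bigl[g(Z_n)\mid\bS_1^{(n)}=\bs_1\bigr]
\]
for every bounded continuous $g$, where $Z_n\sim\mathcal{N}_{\bar{k}}(\Pi_n,\Sigma_n)$ and $\bS_1^{(n)}$ is the selection event induced by running the exponential mechanism on $Z_n$ in place of $\hat{T}_n$. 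Third, apply this display to a smooth approximation of $\mathbbm{1}\{\text{Pivot}(\hat{T}_{n,1};\Pi_{n,1})\leq\alpha\}$; since by Result \ref{prop: pivot} the variable $\text{Pivot}(Z_{n,1};\Pi_{n,1})\mid\bS_1^{(n)}=\bs_1$ is exactly $\text{Uniform}[0,1]$, the right-hand side equals $\alpha+o(1)$, giving $\mathbb{P}\bigl[\text{Pivot}\leq\alpha\mid\bS_1=\bs_1\bigr]\to\alpha$. A Slutsky step absorbs the plug-in for $\sigma_{n,1}^2$, which is consistent under Assumption \ref{ass:2} by standard concentration of Bernoulli proportions.

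The main obstacle, I expect, is the second step: transferring the pre-conditional CLT through a conditioning event $\{\bS_1=\bs_1\}$ whose probability need not be bounded away from zero, a setting that defeats conditional-convergence arguments of the kind used under Gaussian randomization in \cite{rasines2023splitting, panigrahi2023carving}. The rescue here is intrinsic to the exponential mechanism: each $\lambda_h$ is bounded above by $1$ and smooth in $\hat{T}_n$, so dominated convergence applies uniformly to both numerator and denominator of the Bayes ratio with no lower bound on $\mathbb{P}[\bS_1=\bs_1]$ required. A secondary subtlety is the lattice-valued nature of $\hat{T}_{n,1}$: convergence of the associated conditional CDF to a continuous Gaussian-tilted limit should be established pointwise and then upgraded to uniform convergence by P\'olya's theorem, which is what validates replacing the indicator by its smooth approximant in the Portmanteau step above.
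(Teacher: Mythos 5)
Your high-level plan is sound and aligned with the paper: reduce to a Gaussian-versus-Bernoulli comparison, exploit that $\mathrm{Pivot}$ applied to the Gaussian substitute is exactly $\mathrm{Uniform}[0,1]$, and transfer this through the conditional (tilted) law. You also correctly identify the central obstacle — the selection probability $\mathbb{P}[\bS_1=\bs_1]$ need not be bounded away from zero. But the claimed rescue for that obstacle does not work.

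Your Step 2 says dominated convergence ``applies uniformly to both numerator and denominator of the Bayes ratio with no lower bound on $\mathbb{P}[\bS_1=\bs_1]$ required.'' This is the gap. The boundedness of $\prod_h \lambda_h$ by $1$ gives you at best an \emph{additive} comparison
\[
\mathbb{E}\bigl[g(\hat{T}_n)\textstyle\prod_h\lambda_h(\hat{T}_n)\bigr]-\mathbb{E}\bigl[g(Z_n)\textstyle\prod_h\lambda_h(Z_n)\bigr]\to 0,\qquad
\mathbb{E}\bigl[\textstyle\prod_h\lambda_h(\hat{T}_n)\bigr]-\mathbb{E}\bigl[\textstyle\prod_h\lambda_h(Z_n)\bigr]\to 0,
\]
which tells you nothing about the difference of their \emph{ratios} when both denominators themselves tend to zero. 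If, say, $\mathbb{E}[\Lambda(Z_n)]\sim n^{-10}$ and $\mathbb{E}[\Lambda(\hat T_n)]\sim 2n^{-10}$, your two displayed differences both vanish, yet the conditional expectations you care about differ by a constant factor. What is actually needed is control of the \emph{relative} differences, i.e. that the numerator discrepancy shrinks strictly faster than the denominator. The paper engineers exactly this: a Lindeberg telescoping bound (their Lemma B.1, Propositions B.2--B.3) gives the numerator discrepancy a rate $O(e^{f(\Pi_n)}/\sqrt{n})$, while a separate argument (Lemma B.5, using the Lipschitz property of $f$) gives $\mathbb{E}[\Lambda(Z_n)]\ge C_- e^{f(\Pi_n)}$; the matching $e^{f(\Pi_n)}$ factors cancel in the quotient, leaving $O(1/\sqrt{n})\to 0$ regardless of how small the selection probability is. Dominated convergence, which is rate-free, cannot deliver this cancellation.

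A secondary issue: this is a triangular array. Both $\hat T_n$ and its Gaussian comparator $Z_n$ have $n$-dependent distributions (through $\Pi_n$, $\Sigma_n$), and the tilting function $\prod_h\lambda_h$ itself changes with $n$. So one cannot fix a bounded continuous $g$ and apply ``pre-conditional CLT $+$ DCT'' in the usual way; you need a quantitative Gaussian-replacement argument (as the paper uses), not qualitative weak convergence. Relatedly, the P\'olya/lattice concern you raise in Step 3 is a symptom of working with indicators; the paper sidesteps it entirely by restricting to $h\in\mathbb{C}^3(\mathbb{R})$ with bounded derivatives and smuggling the approximation into the Lindeberg swap, which is cleaner than a Portmanteau argument here. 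Finally, the Slutsky/plug-in step you append is not part of this theorem — the statement uses the true $\sigma_{n,1}^2$, and the plug-in version is handled separately in the paper as Theorem~\ref{thm:plugin}.
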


Theorem \ref{thm: asymptotic validity} establishes that the weak limit of the $\text{\normalfont Pivot}(\hat{T}_{n,1}; \Pi_{n,1})$ is a uniform random variable; in this sense, it serves as an asymptotic pivot, enabling valid tests and confidence intervals for $\Pi_{1,n}=\sqrt{n_{1}}\pi_{1,n}$.

In the next section, we provide the key results for proving Theorem \ref{thm: asymptotic validity} while deferring all remaining details to the Appendix.

\subsection{Key results and proof idea}
\label{subsec:proofidea}
\paragraph{Notations.} For a vector $v$, let $\norm{v}$ denote its $\ell_2$ norm, and for a matrix $A$, let $\norm{A}$ denote the spectral norm of $A$. For a function $\mathcal{A}(.): \mathbb{R}^{d}\to\mathbb{R}$, let $\|\nabla^{3} \mathcal{A}(v)\|$ denote the spectral trilinear operator norm $\displaystyle\sup_{\|x\|=\|y\|=\|z\|=1} \big|\langle \nabla^{3}\mathcal{A}(v), x\otimes y\otimes z\rangle\big|$, where $\nabla^{3}\mathcal{A}(v)$ is the third-order derivative tensor with entries $\partial_{ijk}\mathcal{A}(v)$ (unless specified otherwise).

\paragraph{Representations using standardized variables.} 
To develop the proof of Theorem \ref{thm: asymptotic validity}, we work with the standardized version of the statistic $\hat{T}_n$, defined as $\zeta_{n}:= \Sigma_{n}^{-1/2}\nbracket{\hat{T}_{n}-\Pi_{n}}$, where $\Sigma_{n}^{-1/2}$ is the inverse of the square root of the invertible matrix $\Sigma_{n}$. 

Two immediate observations hold for the standardized variable $\zeta_n$, which are summarized in Lemma \ref{lem: linear rep stan} and Lemma \ref{lem: pivot: stan}. The first result provides a linear representation of $\zeta_n$ as a weighted sum of independent standardized variables with mean $0$ and variance $1$. The second result presents the pivot, introduced in Result \ref{prop: pivot}, in terms of the standardized variable $\zeta_n$.

\begin{lemma}
Define $\tilde{Y}_{i,n}:= \dfrac{Y_{i,n}-\theta_{i,n}}{\sqrt{\theta_{i,n}(1-\theta_{i,n})}}$ and $b_{i,n} := \sqrt{\theta_{i,n}(1-\theta_{i,n})}\Sigma_{n}^{-1/2}a_{i,n}$. Let  $B_{(n)}:= (b_{1,n}, \ldots, b_{n,n})^\top$ and $ \tilde{Y}_{(n)} := \nbracket{\tilde{Y}_{1,n}, \ldots, \tilde{Y}_{n,n}}^\top$. Then it holds that:
\vspace{-0.5cm}
$$
\zeta_{n} = \dfrac{1}{\sqrt{n}} \sum_{i\in [n]} b_{i,n} \tilde{Y}_{i,n}=  \dfrac{1}{\sqrt{n}}B_{(n)}^\top \tilde{Y}_{(n)}.
\vspace{-1cm}
$$
\label{lem: linear rep stan} 
\end{lemma}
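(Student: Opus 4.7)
The plan is to derive the claim by a short chain of algebraic substitutions, starting from the definition $\zeta_n = \Sigma_n^{-1/2}(\hat{T}_n - \Pi_n)$. Since $\Sigma_n$ is positive definite by Assumption~\ref{ass:3}, its inverse square root is well-defined, and because each $\theta_{i,n} \in (0,1)$ under Assumption~\ref{ass:2}, the standardized variable $\tilde{Y}_{i,n}$ is also well-defined.

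First, I would invoke Proposition~\ref{prop: linrep} to write $\hat{T}_n = \tfrac{1}{\sqrt{n}} \sum_{i \in [n]} a_{i,n} Y_{i,n}$ and take expectations (using $\mathbb{E}[Y_{i,n}] = \theta_{i,n}$) to identify $\Pi_n = \tfrac{1}{\sqrt{n}} \sum_{i \in [n]} a_{i,n} \theta_{i,n}$. Subtracting gives the centered representation $\hat{T}_n - \Pi_n = \tfrac{1}{\sqrt{n}} \sum_{i \in [n]} a_{i,n} (Y_{i,n} - \theta_{i,n})$. Applying $\Sigma_n^{-1/2}$ and exchanging it with the finite sum then yields $\zeta_n = \tfrac{1}{\sqrt{n}} \sum_{i \in [n]} \Sigma_n^{-1/2} a_{i,n} (Y_{i,n} - \theta_{i,n})$.

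Second, I would introduce the standardization factor by multiplying and dividing each summand by $\sqrt{\theta_{i,n}(1-\theta_{i,n})}$. This produces
$$
\zeta_n = \frac{1}{\sqrt{n}} \sum_{i \in [n]} \bigl[\sqrt{\theta_{i,n}(1-\theta_{i,n})}\,\Sigma_n^{-1/2} a_{i,n}\bigr] \cdot \bigl[(Y_{i,n}-\theta_{i,n})/\sqrt{\theta_{i,n}(1-\theta_{i,n})}\bigr],
$$
which matches $\tfrac{1}{\sqrt{n}} \sum_{i \in [n]} b_{i,n} \tilde{Y}_{i,n}$ by the definitions of $b_{i,n}$ and $\tilde{Y}_{i,n}$. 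The matrix--vector form $\tfrac{1}{\sqrt{n}} B_{(n)}^{\top} \tilde{Y}_{(n)}$ then follows immediately from the stacking conventions used to define $B_{(n)}$ and $\tilde{Y}_{(n)}$.

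Since this statement is essentially a bookkeeping identity that repackages Proposition~\ref{prop: linrep} in standardized form, there is no substantive obstacle: the argument reduces to linearity of $\Sigma_n^{-1/2}$ combined with a multiplicative normalization. The real value of the lemma lies less in its proof than in identifying $b_{i,n}$ and $\tilde{Y}_{i,n}$ as the right building blocks for the Lindeberg-style central limit argument that is to follow, in particular by ensuring $\mathrm{Var}(\tilde{Y}_{i,n}) = 1$ and $\tfrac{1}{n}\sum_i b_{i,n} b_{i,n}^{\top} = I_{\bar{k}}$.
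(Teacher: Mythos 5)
Your proof is correct and is exactly the ``straightforward algebra'' that the paper explicitly omits after Lemma~\ref{lem: linear rep stan}: unwind $\zeta_n = \Sigma_n^{-1/2}(\hat{T}_n - \Pi_n)$ via Proposition~\ref{prop: linrep}, center, and absorb the factor $\sqrt{\theta_{i,n}(1-\theta_{i,n})}$ into the definitions of $b_{i,n}$ and $\tilde{Y}_{i,n}$. Your closing observations that $\mathrm{Var}(\tilde{Y}_{i,n})=1$ and $\tfrac{1}{n}\sum_i b_{i,n}b_{i,n}^\top = \Sigma_n^{-1/2}\Sigma_n\Sigma_n^{-1/2} = I_{\bar{k}}$ are also correct and accurately explain why this reparametrization is the right one for the Lindeberg-swapping argument that follows.
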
 

\begin{definition}
\label{pivot: std}
We redefine the pivotal quantity of inferential interest in terms of the standardized variables as
\vspace{-0.5cm}
$$P(\zeta_{n}; \Pi_n)= \frac{\int_{-\infty}^{\sigma_{n,1} \zeta_{n,1}+\Pi_{n,1}}\phi(u;\Pi_{n,1},\sigma_{n,1}^{2})\prod_{h=1}^{d}\lambda_{h}(u,\Sigma_{ n,\sminus 1}^{1/2}\zeta_{n, \sminus 1} + \Pi_{n, \sminus 1} )  du}{\int_{-\infty}^{\infty}\phi(u;\Pi_{n,1},\sigma_{n,1}^{2})\prod_{h=1}^{d}\lambda_{h}(u,\Sigma_{ n,\sminus 1}^{1/2}\zeta_{n, \sminus 1} + \Pi_{n, \sminus 1})  du}.$$
For our theoretical analysis, we study $P(\zeta_{n}; \Pi_n)$ as a function of the data variables $\zeta_n$.
For notational convenience, we hereafter denote $P(\zeta_{n}; \Pi_n)$ by $P(\zeta_{n})$, suppressing its dependence on $\Pi_n$; that is, let $P(\zeta_{n}) := P(\zeta_{n}; \Pi_n)$.
\end{definition}
% \vspace{-1cm}

\begin{lemma}
We have the following equivalence:
$P(\zeta_{n})= \text{\normalfont Pivot}(\hat{T}_{n,1}; \Pi_{n,1})$.
\label{lem: pivot: stan} 
\end{lemma}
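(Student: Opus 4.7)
The plan is a direct change-of-variable verification that exploits the block structure of the covariance matrix $\Sigma_n$ recorded in Result \ref{prop: marginal}. First, I would observe that because
\[
\Sigma_n = \begin{bmatrix} \sigma_{n,1}^{2} & 0_{\bar{k}-1}^{\top}\\ 0_{\bar{k}-1} & \Sigma_{n,\sminus 1} \end{bmatrix}
\]
is block-diagonal and positive definite (via Assumption \ref{ass:3}), its symmetric square root inherits the same block structure, namely $\Sigma_n^{1/2} = \mathrm{diag}(\sigma_{n,1},\, \Sigma_{n,\sminus 1}^{1/2})$, and likewise for $\Sigma_n^{-1/2}$. Consequently, the standardization $\zeta_n = \Sigma_n^{-1/2}(\hat{T}_n - \Pi_n)$ decouples into the two coordinate identities $\hat{T}_{n,1} = \sigma_{n,1}\zeta_{n,1} + \Pi_{n,1}$ and $\hat{T}_{n,\sminus 1} = \Sigma_{n,\sminus 1}^{1/2}\zeta_{n,\sminus 1} + \Pi_{n,\sminus 1}$.

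Second, I would substitute these two identities into the expression for $\text{Pivot}(\hat{T}_{n,1}; \Pi_{n,1})$ given in Result \ref{prop: pivot}. The upper limit of integration $\hat{T}_{n,1}$ in the numerator becomes $\sigma_{n,1}\zeta_{n,1} + \Pi_{n,1}$, while the second argument $\hat{T}_{n,\sminus 1}$ appearing inside each correction factor $\lambda_h(\cdot, \cdot)$ becomes $\Sigma_{n,\sminus 1}^{1/2}\zeta_{n,\sminus 1} + \Pi_{n,\sminus 1}$. After this substitution, both the numerator and denominator match exactly the formula for $P(\zeta_n) = P(\zeta_n; \Pi_n)$ given in Definition \ref{pivot: std}. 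No change-of-variable Jacobian is required in the dummy integration variable $u$, since $u$ is being integrated over the same real line in both forms and is not itself being standardized.

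There is essentially no obstacle in this argument; the lemma is a bookkeeping identity whose purpose is to rewrite the pivot as an explicit function of the standardized vector $\zeta_n$. This reformulation is what will allow us to pass to the weak limit in Theorem \ref{thm: asymptotic validity} by invoking the CLT for $\zeta_n$ (Proposition \ref{prop: normaldistn}) via the linear representation in Lemma \ref{lem: linear rep stan}, rather than having to re-derive a weak limit for the unstandardized statistic $\hat{T}_n$ directly.
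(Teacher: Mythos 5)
Your proof is correct and is the "straightforward algebra" the paper alludes to when it omits the argument: once you observe from Result \ref{prop: marginal} that $\Sigma_n$ is block-diagonal so $\Sigma_n^{1/2}=\mathrm{diag}(\sigma_{n,1},\Sigma_{n,\sminus 1}^{1/2})$, the standardization decouples into $\hat T_{n,1}=\sigma_{n,1}\zeta_{n,1}+\Pi_{n,1}$ and $\hat T_{n,\sminus 1}=\Sigma_{n,\sminus 1}^{1/2}\zeta_{n,\sminus 1}+\Pi_{n,\sminus 1}$, and substituting into Result \ref{prop: pivot} reproduces Definition \ref{pivot: std} term by term. Your remark that no Jacobian is needed (since $u$ is an unchanged dummy integration variable) is also the right observation; nothing is missing.
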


The proofs of Lemma \ref{lem: linear rep stan} and Lemma \ref{lem: pivot: stan} follow by straightforward algebra and are therefore omitted from the paper.

\paragraph{Defining our objective.}
We formally state our objective in order to prove Theorem \ref{thm: asymptotic validity}.

\begin{definition}
Define $Z_{n} = \dfrac{1}{\sqrt{n}} \sum_{i\in [n]} b_{i,n} Z_{i,n}$, where the standardized Bernoulli random variables $\tilde{Y}_{i,n}$ have been replaced by independent Gaussian variables $Z_{i,n}$ with mean $0$ and variance $1$.
If we let $\tilde{Z}_{(n)} := \nbracket{Z_{1,n}, \ldots, Z_{n,n}}^\top$, then we can equivalently express $Z_{n} = \dfrac{1}{\sqrt{n}}B_{(n)}^\top \tilde{Z}_{(n)}$. 
By definition, $Z_{n}$ follows a $\bar{k}$-dimensional standard normal distribution, i.e., 
\vspace{-0.5cm}
$$
Z_{n} \sim \mathcal{N}\nbracket{0_{\bar{k}}, I_{\bar{k}, \bar{k}}}. \vspace{-0.5cm}
$$

\end{definition}

Observe that the pivot $P(Z_n)$, when evaluated with the normal random variable $Z_n$, is distributed as a $\text{Uniform}(0,1)$ random variable. 
This follows from the observation that the key statistics constructed from $Z_n$, given by $\hat{U}_n := \Sigma_{n}^{1/2}Z_n + \Pi_n$, follow an exact normal distribution, as assumed in Result \ref{prop: cond}.

Our objective, which is to establish the weak limit of our pivot as a $\text{Uniform}(0,1)$ random variable, can be equivalently framed as follows: we aim to prove that
\vspace{-0.5cm}
\begin{equation}
\label{WeakConv}
\lim_{[\bn]  \to \infty} \Big \lvert \mathbb{E}  \left[h \circ P\left(\zeta_{n}\right) \mid \cbracket{\bS_{1}=\bs_{1}} \right]  -\mathbb{E}\left[h \circ P(Z_{n}) \mid\cbracket{\bS_{1}=\bs_{1}}\right] \Big \rvert=0, \vspace{-0.5cm}
\end{equation}
for every real-valued $h \in \Bigl\{\tilde{h} \in \mathbb{C}^3(\mathbb{R}): \; \sup_{x \in \mathbb{R}} \big| \grad^{(k)}\tilde{h}(x) \big| < \infty \;\; \text{for } k = 0,1,2,3 \Bigr\}$. Here,
\vspace{-0.5cm}
$$\mathbb{E}\left[h \circ P(Z_{n}) \mid\cbracket{\bS_{1}=\bs_{1}}\right] = \int_{0}^{1} h(u) du, \vspace{-0.5cm}$$ 
and $[\bn] \to \infty$ denotes our asymptotic regime in which the samples in the tree fit grow proportionally, as specified in Assumption \ref{ass:1}.

\paragraph{Sufficient conditions for establishing weak limit of pivot.} 

Below, we redefine the sampling probabilities in the proposed exponential mechanism in terms of the standardized variables.

\begin{definition}
\label{defn: Lam}
Consider the gain functions over domain $\tilde{D}_n$ from Definition~\ref{defn: g}, then define function $f:\tilde{D}_{n}\to \mathbb{R}$ as
\vspace{-0.5cm}
$$
f(t) := \sum_{h=1}^{d} \rbracket{\epsilon_{h}^{-1} \cG^{1}_{h}(t) - \log \sum_{k=1}^{k_{h}} \exp \epsilon_{h}^{-1} \cG^{k}_{h}(t)}.
\vspace{-0.5cm}
$$ Define a function $\Lambda: D_n \to (0,1)$ as:
\vspace{-0.5cm}
\begin{align*}
\Lambda(\zeta) :=  \exp f(\Sigma_{n}^{1/2}\zeta+\Pi_{n}), \text{ where } D_n:= \cbracket{\zeta \in \mathbb{R}^{\bar{k}}: \Sigma_{n}^{1/2}\zeta+\Pi_{n} \in \tilde{D}_{n}}\subset \mathbb{R}^{\bar{k}}.
\label{rel:lambda:f}
\end{align*}
\end{definition} Based on these definitions, check that: 
\vspace{-0.5cm}
\begin{equation*}
\log \Lambda(\zeta) = f(\Sigma_{n}^{1/2}\zeta+\Pi_{n})=\sum_{h=1}^{d}\log \lambda_{h} (\Sigma_{n}^{1/2}\zeta+\Pi_{n}).\vspace{-0.5cm}
\label{rel:lambda:f}
\end{equation*} and observe $\Lambda(\zeta_n)$ calculates the selection weight from the master statistics
\vspace{-0.5cm}
$$\Lambda(\zeta)= \prod_{h=1}^{d} \lambda_h\nbracket{\Sigma_{n}^{1/2}\zeta+\Pi_{n}}=\mathbb{P}\rbracket{S_{h}^{1}=s_{h}^{1}\mid \hat{T}_{n}=\Sigma_{n}^{1/2}\zeta+\Pi_{n}}$$

\begin{remark}
Note that the samples proportions in any subregion of the predictor space take values in $(0,1)$ with probability $1$, which implies that $\hat{T}_n = \rbracket{\hat{T}_{n,1} \ \hat{T}_{n,\sminus 1}}^\top \in \tilde{D}_n$ almost surely.
However, for the reconstructed statistics $\hat{U}_n$ from normal variables, this may not hold. 
We therefore extend the functions $\cG_h^k$ and $f$ outside $\tilde{D}_n$ to take the value 0, and, consequently, extend $\lambda_h$, $\Lambda$, and $P$ to take the value 0 outside $D_n$.
To avoid notational clutter, however, we continue to use the same symbols for the extended functions everywhere.
%As a result, when considering the pivot, the integrals take non-zero values only inside the interval: $\cbracket{u\in \mathbb{R}: \begin{pmatrix} u & (\widehat{T}_{n,-1})^\top\end{pmatrix}^\top \in \tilde{D}_n}$.
\end{remark}

\begin{proposition}{}{(Relative Differences)}
Define the difference terms
\vspace{-0.5cm}
\begin{align*}
\begin{gathered}
\RD_n^{(1)}=\frac{\left|\mathbb{E}\left[\Lambda(\zeta_n)\right]-\mathbb{E}\left[\Lambda(Z_n) \right]\right|}{\mathbb{E}\left[\Lambda(Z_{n})\right]} \\
\RD_n^{(2)}=\frac{\left|\mathbb{E}\left[h \circ P\left(\zeta_{n}\right) \times \Lambda(\zeta_n) \right]-\mathbb{E}\left[h \circ P(Z_{n}) \times \Lambda(Z_n) \right]\right|}{\mathbb{E}\left[\Lambda(Z_n)\right]}.
\vspace{-0.5cm}
\end{gathered}
\end{align*} 
where $h\in \mathbb{C}^3(\mathbb{R})$ and $P$,$\Lambda$ are as defined in Definitions \ref{pivot: std} and \ref{defn: Lam}.
Then, if 
\vspace{-0.5cm}
$$
\lim_{[\bn]  \to \infty} \RD_n^{(1)}=0, \quad \lim_{[\bn]  \to \infty}  \RD_n^{(2)}=0,
\vspace{-0.5cm}
$$
the weak convergence in \eqref{WeakConv} holds.
\label{prop: relativediff}
\end{proposition}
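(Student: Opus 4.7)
The plan is to reduce the conditional-expectation comparison in \eqref{WeakConv} to the two unconditional quantities $\RD_n^{(1)}$ and $\RD_n^{(2)}$ via a Bayes-rule representation, followed by an elementary identity for the difference of two ratios.

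The starting observation is that $\Lambda(\zeta_n)$ is exactly the conditional probability $\mathbb{P}\nbracket{\bS_1=\bs_1\mid\zeta_n}$. This is a direct consequence of Lemma~\ref{lem: connection sampling prob}, the conditional independence of the splits $\cbracket{S_h: h\in[d]}$ across internal regions given the key statistics, and the one-to-one relationship between $\nbracket{\hat{T}_{n,1},\hat{T}_{n,\sminus 1}}$ and $\zeta_n$. Applying Bayes' rule to numerator and denominator separately therefore yields
$$\mathbb{E}\bigl[h\circ P(\zeta_n)\mid\bS_1=\bs_1\bigr] \;=\; \frac{\mathbb{E}[h\circ P(\zeta_n)\,\Lambda(\zeta_n)]}{\mathbb{E}[\Lambda(\zeta_n)]}.$$
The same identity, now applied to the Gaussian surrogate $\hat{U}_n=\Sigma_n^{1/2}Z_n+\Pi_n$ for which Result~\ref{prop: cond} holds exactly and Result~\ref{prop: pivot} delivers an exact uniform pivot under the selection event, produces
$$\mathbb{E}\bigl[h\circ P(Z_n)\mid\bS_1=\bs_1\bigr] \;=\; \frac{\mathbb{E}[h\circ P(Z_n)\,\Lambda(Z_n)]}{\mathbb{E}[\Lambda(Z_n)]} \;=\; \int_0^1 h(u)\,du.$$

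Writing $A_n := \mathbb{E}[h\circ P(\zeta_n)\,\Lambda(\zeta_n)]$, $B_n := \mathbb{E}[h\circ P(Z_n)\,\Lambda(Z_n)]$, $C_n := \mathbb{E}[\Lambda(\zeta_n)]$, and $D_n := \mathbb{E}[\Lambda(Z_n)]$, I would then invoke the algebraic decomposition
$$\frac{A_n}{C_n}-\frac{B_n}{D_n} \;=\; \frac{A_n-B_n}{C_n} \;+\; B_n\cdot\frac{D_n-C_n}{C_n D_n}.$$
Since $\Lambda$ takes values in $[0,1]$ and $h$ is bounded, we have $|B_n|\le\sup_x|h(x)|\cdot D_n$, so
$$\left|\frac{A_n}{C_n}-\frac{B_n}{D_n}\right| \;\le\; \frac{D_n}{C_n}\Bigl(\RD_n^{(2)} + \sup_x|h(x)|\cdot\RD_n^{(1)}\Bigr).$$
Finally, $\RD_n^{(1)}\to 0$ forces $C_n/D_n\to 1$, hence $D_n/C_n$ is eventually bounded; combined with $\RD_n^{(1)},\RD_n^{(2)}\to 0$ this establishes \eqref{WeakConv}.

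The substantive difficulty of the paper's asymptotic program, namely bounding $\RD_n^{(1)}$ and $\RD_n^{(2)}$ themselves via a Lindeberg-type swap of $\zeta_n$ for $Z_n$, is deliberately outsourced to subsequent results; the present proposition merely packages the reduction. The one genuinely subtle aspect worth emphasizing is that the unconditional selection probability $\mathbb{P}\nbracket{\bS_1=\bs_1}=\mathbb{E}[\Lambda(\zeta_n)]$ is allowed to vanish along the asymptotic regime of Assumption~\ref{ass:1}, so the ratio $D_n/C_n$ could in principle blow up. Normalizing both relative differences by $\mathbb{E}[\Lambda(Z_n)]$, rather than by $\mathbb{E}[\Lambda(\zeta_n)]$, is exactly what neutralizes this concern and lets the reduction proceed without any rate assumption on the selection probability, which is consistent with the paper's stated improvement over much of the prior selective inference literature.
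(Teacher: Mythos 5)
Your proof is correct and shares the core structure of the paper's: first reduce the conditional expectations to ratios of unconditional expectations via the Bayes/tower argument (the paper isolates this as Lemma~\ref{lemma:supporting1}), then split the difference of two ratios by the triangle inequality and read off the relative differences $\RD_n^{(1)}$ and $\RD_n^{(2)}$. The one place where you deviate is the algebraic decomposition. You use the standard
\[
\frac{A_n}{C_n}-\frac{B_n}{D_n}=\frac{A_n-B_n}{C_n}+B_n\cdot\frac{D_n-C_n}{C_nD_n},
\]
which forces the factor $D_n/C_n$ into your final bound and requires the extra observation that $\RD_n^{(1)}\to 0$ makes $D_n/C_n\to 1$ (hence eventually bounded). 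The paper instead keeps the same numerator in the first piece, splitting as
\[
\frac{A_n}{C_n}-\frac{B_n}{D_n}=\left(\frac{A_n}{C_n}-\frac{A_n}{D_n}\right)+\left(\frac{A_n}{D_n}-\frac{B_n}{D_n}\right),
\]
and then bounds $|A_n|\le\sup|h|\cdot C_n$; the $C_n$ cancels, so the first piece is at most $\sup|h|\cdot\RD_n^{(1)}$ and the ratio $D_n/C_n$ never appears. The two routes are equivalent in effort, with the paper's being marginally cleaner because it needs no auxiliary claim that the ratio of normalizing constants stabilizes. Your closing remark about normalizing by $\mathbb{E}[\Lambda(Z_n)]$ rather than $\mathbb{E}[\Lambda(\zeta_n)]$ is a correct observation about why the reduction is robust to a vanishing selection probability, though note that this normalization is fixed by the proposition statement itself rather than being a choice made within the proof.
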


Such sufficient conditions are commonly used to establish asymptotic guarantees for conditional inference (see, e.g., \cite{panigrahi2023carving, bakshi2024inference}). However, as will be evident from our proof technique, our approach to verifying the sufficient conditions in Proposition~\ref{prop: relativediff} is fundamentally different from prior work: we directly exploit in our work properties of the exponential mechanism in~\eqref{exp:mechanism} to obtain these conditions.

\paragraph{Verifying the sufficient conditions.}
We now turn to verifying the sufficient conditions in Proposition \ref{prop: relativediff}, which form the core of our proof of Theorem \ref{thm: asymptotic validity}.
The supporting results used in its proof are provided in Appendix~\ref{appendix:supporting}.

\begin{theorem}
   Under Assumptions~\ref{ass:1}, \ref{ass:2}, \ref{ass:3}, and \ref{ass:infodevbound}, we have 
   \vspace{-0.5cm}
   $$
   \lim_{[\bn]  \to \infty} \RD_n^{(1)}=0, \quad \lim_{[\bn]  \to \infty}  \RD_n^{(2)}=0. 
   \vspace{-0.5cm}
   $$ 
   \label{thm: verifying suff cond}
\end{theorem}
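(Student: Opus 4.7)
The plan is to compare $\mathbb{E}[\Lambda(\zeta_n)]$ with $\mathbb{E}[\Lambda(Z_n)]$, and then $\mathbb{E}[h\circ P(\zeta_n)\,\Lambda(\zeta_n)]$ with $\mathbb{E}[h\circ P(Z_n)\,\Lambda(Z_n)]$, by a Lindeberg-style interpolation across the $n$ independent summands in the linear representation of Lemma~\ref{lem: linear rep stan}, executed in a \emph{multiplicative} (relative-error) form. The structural feature that makes the denominator $\mathbb{E}[\Lambda(Z_n)]$ irrelevant to the rate is that $\log\Lambda$ is a sum of log-softmax functions of the smooth gains $\cG_h^k$, so its first three derivatives are uniformly bounded. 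Consequently every derivative of $\Lambda$ of order $\le 3$ will be dominated pointwise by $\Lambda$ itself times a constant, and each swap error will scale with an expectation of $\Lambda$ rather than with $\sup\Lambda$; the errors thus cancel against $\mathbb{E}[\Lambda(Z_n)]$ in the ratio.

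\textbf{Step 1: Derivative bounds on $\Lambda$.} I would first show that the function $f(t)=\sum_{h=1}^{d}\bigl(\epsilon_h\cG_h^1(t)-\log\sum_{k=1}^{k_h}\exp\epsilon_h\cG_h^k(t)\bigr)$ from Definition~\ref{defn: Lam} has uniformly bounded gradient, Hessian, and third-derivative tensor. Under Assumption~\ref{ass:infodevbound}, the impurity $\cI$ has bounded first three derivatives, and Assumption~\ref{ass:1} keeps each $n_{\cR}/n$ bounded away from $0$, which together with the explicit form of $\eta_h^k$ in Definition~\ref{defn: eta} makes the $\cG_h^k$ and their first three derivatives uniformly bounded; the log-sum-exp composition preserves these bounds. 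Combined with $\|\Sigma_n^{1/2}\|\le B$ from Assumption~\ref{ass:3} and Remark~\ref{remark:ass:3}, the chain rule gives $\|\nabla^k\log\Lambda(\zeta)\|\le C_k$ for $k=1,2,3$. The Faà di Bruno identity applied to $\Lambda=e^{\log\Lambda}$ then yields the key pointwise bound
\[
\|\nabla^k\Lambda(\zeta)\|\ \le\ \tilde C_k\,\Lambda(\zeta),\qquad k=1,2,3.
\]

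\textbf{Step 2: Relative one-step swap and telescoping for $\RD_n^{(1)}$.} Define hybrid variables $\zeta_n^{(i)}=n^{-1/2}\bigl(\sum_{j\le i}b_{j,n}Z_{j,n}+\sum_{j>i}b_{j,n}\tilde Y_{j,n}\bigr)$ and $W_i=\zeta_n^{(i)}-n^{-1/2}b_{i,n}Z_{i,n}=\zeta_n^{(i-1)}-n^{-1/2}b_{i,n}\tilde Y_{i,n}$, so that $W_i$ is independent of both $\tilde Y_{i,n}$ and $Z_{i,n}$. A third-order Taylor expansion of $\Lambda$ around $W_i$, using that $\mathbb{E}[\tilde Y_{i,n}]=\mathbb{E}[Z_{i,n}]=0$ and $\mathbb{E}[\tilde Y_{i,n}^2]=\mathbb{E}[Z_{i,n}^2]=1$ (so the first- and second-order terms cancel), together with the Step~1 bound $\|\nabla^3\Lambda\|\le\tilde C_3\Lambda$ and the boundedness of $\tilde Y_{i,n}$ (Assumption~\ref{ass:2}) and of the third absolute moment of a standard normal, gives
\[
\bigl|\mathbb{E}[\Lambda(\zeta_n^{(i-1)})]-\mathbb{E}[\Lambda(\zeta_n^{(i)})]\bigr|\ \le\ \frac{C\|b_{i,n}\|^3}{n^{3/2}}\,\mathbb{E}[\Lambda(W_i)].
\]
Applying the analogous second-order expansion of $\Lambda(\zeta_n^{(i)})=\Lambda(W_i+n^{-1/2}b_{i,n}Z_{i,n})$ and using $\|\nabla^2\Lambda\|\le\tilde C_2\Lambda$ shows $\mathbb{E}[\Lambda(W_i)]=\mathbb{E}[\Lambda(\zeta_n^{(i)})]\bigl(1+O(\|b_{i,n}\|^2/n)\bigr)$. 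Telescoping across $i=1,\dots,n$ gives
\[
\frac{\mathbb{E}[\Lambda(\zeta_n)]}{\mathbb{E}[\Lambda(Z_n)]}\ \le\ \prod_{i=1}^{n}\!\left(1+\frac{C'\|b_{i,n}\|^3}{n^{3/2}}\right)\ \le\ \exp\!\left(\frac{C'}{n^{3/2}}\sum_{i=1}^{n}\|b_{i,n}\|^3\right).
\]
Assumptions~\ref{ass:1}–\ref{ass:2} make $\|a_{i,n}\|$ uniformly bounded, and Assumption~\ref{ass:3} gives $\|\Sigma_n^{-1/2}\|\le b$, so $\max_i\|b_{i,n}\|\le B'$. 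The exponent is therefore $O(n^{-1/2})$, and $\RD_n^{(1)}\to 0$.

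\textbf{Step 3: Extension to $\RD_n^{(2)}$ and main obstacle.} For $\RD_n^{(2)}$ I would run the identical swap on the product $g(\zeta):=h(P(\zeta))\,\Lambda(\zeta)$, so it suffices to establish $\|\nabla^k g(\zeta)\|\le C_k^{\prime\prime}\,\Lambda(\zeta)$ for $k=1,2,3$. Since $\|\nabla^k\Lambda\|\le\tilde C_k\Lambda$ is already available and $h$ has bounded derivatives by hypothesis, the Leibniz rule reduces the task to bounding the first three derivatives of $P(\zeta)$ uniformly in $\zeta$. This is the \textbf{main obstacle}: $P$ is a ratio of two one-dimensional integrals in an auxiliary variable $u$ against $\phi(\cdot;\Pi_{n,1},\sigma_{n,1}^{2})\prod_{h}\lambda_h(u,v^*)$, with $v^*=\Sigma_{n,\sminus 1}^{1/2}\zeta_{n,\sminus 1}+\Pi_{n,\sminus 1}$, so its $\zeta_{n,1}$-derivative is a normalized conditional density while its $\zeta_{n,\sminus 1}$-derivatives pull down bounded derivatives of $\log\lambda_h$. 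I would handle this using (i) the uniform bounds on $\|\nabla^k\log\lambda_h\|$ from Step~1, (ii) the uniform upper and lower bounds on $\sigma_{n,1}^2$ from Assumption~\ref{ass:3}, and (iii) the quotient rule, which produces the \emph{same} normalizing integral in every denominator so that the mixed derivatives of $P$ can be written as expectations under a tilted probability measure and are therefore uniformly bounded. With $\|\nabla^k g\|\le C_k^{\prime\prime}\Lambda$ in hand, repeating the telescoping argument of Step~2 with $g$ in place of $\Lambda$, and dividing by $\mathbb{E}[\Lambda(Z_n)]$, yields $\RD_n^{(2)}=O(n^{-1/2})\to 0$, completing the proof.
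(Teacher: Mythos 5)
Your proposal is correct and takes a genuinely different route from the paper, though both rest on a Lindeberg-type invariance argument applied to the linear representation of $\zeta_n$ and both exploit the same structural feature: that $\log\Lambda$ is a sum of log-softmax compositions of bounded gains $\cG_h^k$ and therefore has uniformly bounded first- through third-order derivatives. The paper runs the swap in \emph{absolute} form: its swap lemma controls the absolute difference of numerators by a quantity proportional to $\sup_{v\in D_n}\|\nabla^3\Psi^{(j)}(v)\|\le C_j\,e^{f(\Pi_n)}$, and a \emph{separate} lower bound $\mathbb{E}[\Lambda(Z_n)]\ge C_-\,e^{f(\Pi_n)}$ is obtained via the Lipschitz property of $f$ plus a multivariate Berry--Esseen estimate ensuring that $\mathbb{P}(Z_n\in D_n)$ is bounded away from zero; the two $e^{f(\Pi_n)}$ factors then cancel in $\RD_n^{(j)}$. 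Your proposal runs the swap in \emph{relative} form: the pointwise bound $\|\nabla^3\Lambda\|\le\tilde C_3\,\Lambda$ makes each one-coordinate swap error proportional to $\mathbb{E}[\Lambda(W_i)]$ rather than to a uniform supremum, and a multiplicative telescoping chain then shows $\mathbb{E}[\Lambda(\zeta_n)]/\mathbb{E}[\Lambda(Z_n)]=1+O(n^{-1/2})$ directly, so that the separate denominator lower bound and the Berry--Esseen detour are never needed. This is a real simplification of the bookkeeping, at the cost of tracking $\mathbb{E}[\Lambda(W_i)]$ at every hybrid variable (you should also note that the Taylor expansion around $W_i$ needs the interpolating segment to stay inside $D_n$, where $\Lambda$ has been extended by zero outside; the paper's swap lemma has the same buried subtlety, so this is a shared caveat and not a gap unique to you). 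Your Step~3 correctly isolates the remaining obstacle for $\RD_n^{(2)}$, namely uniform bounds on $\nabla^k P$ for $k\le 3$; the paper establishes exactly this by the quotient rule after noting that the normalizing integral $D(v)$ is bounded away from zero on $D_n$ (the gains are bounded, so each $\lambda_h$ is bounded below by a positive constant), and your tilted-measure reformulation is an equivalent way to organize that computation. One minor point: in Step~2, the bound $\mathbb{E}[\Lambda(\zeta_n)]/\mathbb{E}[\Lambda(Z_n)]\le\exp(O(n^{-1/2}))$ should be paired with the matching lower bound obtained by swapping the roles of $\zeta_n$ and $Z_n$ in the telescoping, so that $\RD_n^{(1)}\to 0$ follows; that reverse direction is symmetric and costs nothing.
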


\begin{proof}
First, we derive a lower bound for $\mathbb{E}\left[\Lambda(Z_{n})\right]$ in the denominator of the difference terms $\RD_n^{(j)}$ for $j \in \{1,2\}$, then derive an upper bound for the numerators in these difference terms, and finally show that the resulting ratios converge to $0$.

\noindent{\textbf{Bounding the denominator}}. Observe that 
\vspace{-0.5cm}
 \begin{equation*}
 \begin{aligned}
 \mathbb{E}\left[\Lambda(Z_{n})\right] &= \mathbb{E}\left[\Lambda(Z_{n})\times \mathbbm{1}_{D_{n}}(Z_{n})\right] \\  &= \mathbb{E}\left[\exp \left(f\nbracket{\Sigma_{n}^{1/2}Z_{n}+\Pi_{n}}\right)\times \mathbbm{1}_{D_{n}}(Z_{n})\right]    \\
 &= \exp f\nbracket{\Pi_{n}}\times \mathbb{E}\left[\exp \nbracket{ f\nbracket{\Sigma_{n}^{1/2}Z_{n}+\Pi_{n}}-f\nbracket{\Pi_{n}}}\times \mathbbm{1}_{D_{n}}(Z_{n})\right]  \\
 &\geq \exp{f\nbracket{\Pi_{n}}}\times \mathbb{E}\rbracket{\exp (-L\norm{\Sigma_{n}^{1/2}Z_{n}})\times \mathbbm{1}_{D_{n}}(Z_{n})},
 \vspace{-0.5cm}
  \end{aligned}
  \end{equation*}
 where in the last display we use the fact that the log-probability from the exponential mechanism, $f(\cdot)$, is Lipschitz with constant $L$, as shown in Proposition~\ref{prop:supporting1}.

 Now, using the bound on $\norm{\Sigma_{n}^{1/2}}$ from Remark~\ref{remark:ass:3} (which follows from Assumption~\ref{ass:3}), we obtain
 \vspace{-0.5cm}
 $$
 \mathbb{E}\rbracket{\exp -L\norm{\Sigma_{n}^{1/2}Z_{n}}\times \mathbbm{1}_{D_{n}}(Z_{n})}\geq \mathbb{E}\rbracket{\exp (-LB\norm{Z_{n}})\times \mathbbm{1}_{D_{n}}(Z_{n})}.
 \vspace{-0.5cm}
 $$
Finally, from Lemma~\ref{lemma:supporting2}, we obtain that, for sufficiently large $n$,
 \vspace{-0.5cm}
 $$\mathbb{E}\rbracket{\exp (-LB\norm{Z_{n}})\times \mathbbm{1}_{D_{n}}(Z_{n})} \geq C_{-},
  \vspace{-0.5cm}
  $$ 
 and therefore
  \vspace{-0.5cm}
 \begin{equation}
\mathbb{E}\left[\Lambda(Z_{n})\right] \geq C_{-}\exp{f\nbracket{\Pi_{n}}}. 
 \label{lower:bdd}
 \end{equation}

\noindent{\textbf{Bounding the numerators}}. 
Let $\Psi^{(1)}(z)= \Lambda(z), \ \Psi^{(2)}(z)= h \circ P\left(z\right) \times \Lambda(z)$. 
Furthermore, we define
 \vspace{-1cm}
\begin{align*}
\begin{gathered}
U_{(i)}:= \nbracket{\tilde{Y}_{1,n}, \ldots, \tilde{Y}_{i,n}, Z_{i+1,n}, \ldots, Z_{n,n}}^\top,\
V_{(i)}:= \nbracket{\tilde{Y}_{1,n}, \ldots, \tilde{Y}_{i-1,n},0, Z_{i+1,n}, \ldots, Z_{n,n}}^\top,
 \vspace{-0.5cm}
\end{gathered}
\end{align*}
where $\{Z_{i,n}: i\in [n]\}$ and $\{\tilde{Y}_{i,n}: i\in [n]\}$ are as defined earlier. An application of Lemma~\ref{lemma:lindeberg} yields the upper bound 
\begin{equation*}
\begin{aligned}
    &\left |\Ee\rbracket{\Psi^{(j)}\nbracket{\frac{1}{\sqrt{n}}B_{(n)}^\top\tilde{Y}_{(n)}}}-  \Ee\rbracket{\Psi^{(j)}\nbracket{\frac{1}{\sqrt{n}}B_{(n)}^\top \tilde{Z}_{(n)}}}\right|\\
    & \;\;\;\;\;\;\;\;\;\;\;\;\;\;\;\;\;\;\;\;\;\;\;\;\;\;\;\;\;\;\leq 2 \frac{\|B_{(n)}\|_{\infty}^3}{\sqrt{n}}\sup_{\alpha\in(0,1)}\max_{i\in [n]}\Ee\!\Big[\norm{\nabla^3 \Psi^{(j)}\big(\tfrac{1}{\sqrt{n}} B_{(n)}^\top \big((1-\alpha) V_{(i)}+\alpha U_{(i)}\big)\big)}\Big].
\end{aligned}
\end{equation*} 
for $j\in \{1,2\}$. 
Now, using Proposition~\ref{prop:supporting2}, it follows that
 \vspace{-0.5cm}
$$
\max_{i\in [n]}\Ee\!\Big[\norm{\nabla^3 \Psi^{(j)}\big(\tfrac{1}{\sqrt{n}} B_{(n)}^\top \big((1-\alpha) V_{(i)}+\alpha U_{(i)}\big)\big)}\Big]\leq C_j\times \exp f(\Pi_n).
 \vspace{-0.5cm}
 $$
Moreover, since $\|B_{(n)}\|_{\infty}\leq \bar{C}_{+}$ for all $n$ (by Proposition~\ref{prop:supporting0}), we have that, for both $j\in\{1,2\}$, 
 \vspace{-0.5cm}
 \begin{equation}
\left |\Ee\rbracket{\Psi^{(j)}\nbracket{\frac{1}{\sqrt{n}}B_{(n)}^\top\tilde{Y}_{(n)}}}-  \Ee\rbracket{\Psi^{(j)}\nbracket{\frac{1}{\sqrt{n}}B_{(n)}^\top \tilde{Z}_{(n)}}}\right|\leq C_{+} \times \frac{\exp f(\Pi_n)}{\sqrt{n}}, \vspace{-0.5cm}
 \label{upper:bdd}
 \end{equation}
where $C_{+}= \max{\{C_1, C_2\}}\times 2\bar{C}_{+}$.

\noindent{\textbf{Combining the two bounds}}. Our claim for the ratios in the difference terms follows immediately from \eqref{lower:bdd} and \eqref{upper:bdd}.
\end{proof}

\subsection{Weak limit of pivot with consistent plug-in for variance}
\label{sec:plugin-validity}

Recall that the pivot introduced in Proposition~\ref{prop: pivot}
\vspace{-0.5cm}
\[
\mathrm{Pivot}(\hat T_{n,1};\Pi_{n,1},\sigma_{n,1}^{2})
:= \mathrm{Pivot}(\hat T_{n,1};\Pi_{n,1})=
\frac{\displaystyle \int_{-\infty}^{\hat T_{n,1}}
\phi(u;\Pi_{n,1},\sigma_{n,1}^{2})
\prod_{h=1}^{d}\lambda_{h}(u,\hat T_{n,-1})\,du}
{\displaystyle \int_{-\infty}^{\infty}
\phi(u;\Pi_{n,1},\sigma_{n,1}^{2})
\prod_{h=1}^{d}\lambda_{h}(u,\hat T_{n,-1})\,du},
\]
involves the variance parameter $\sigma_{n,1}^{2} = \pi_{1,n}(1 - \pi_{1,n})$. 
In practice, this parameter is unknown and must be replaced by an estimate $\hat{\sigma}_{n,1}^{2}$. 

To complete our inferential framework, Theorem \ref{thm:plugin} shows that the plug-in version of our pivot, $\mathrm{Pivot}(\hat T_{n,1};\Pi_{n,1}, \hat\sigma_{n,1}^{2})$ remains asymptotically valid as long as a consistent estimator is used for the unknown variance. 
In simulations, we find that the simple plug-in estimate $$\hat{\sigma}_{n,1}^{2} = \hat{\pi}_{1}(1 - \hat{\pi}_{1})= \dfrac{\hat T_{n,1}}{\sqrt{n_1}}\nbracket{1-\dfrac{\hat T_{n,1}}{\sqrt{n_1}}},$$ using the sample proportion in $l_1$, provides valid inference.

\begin{theorem}
\label{thm:plugin}
Let $\hat\sigma_{n,1}^{2} \to \sigma_{n,1}^{2}$ in probability, without conditioning on the fitted tree.  
Then, under Assumptions~\ref{ass:1}, \ref{ass:2}, \ref{ass:3}, \ref{ass:infodevbound} as stated in Theorem~\ref{thm: asymptotic validity}, we have the following.
\begin{enumerate}[leftmargin=*]
    \item[(i)] $\hat\sigma_{n,1}^{2}$ is a consistent estimator of $\sigma_{n,1}^{2}$ after conditioning on $\{\bS_1=\bs_1\}$, i.e., for any $\delta>0$,
    \vspace{-0.5cm}
    $$
    \lim_{[\bn]  \to \infty}\mathbb{P}\!\left(\,|\hat\sigma_{n,1}^2-\sigma_{n,1}^2|>\delta\ \big|\ \bS_1=\bs_1\,\right) = 0.
    \vspace{-0.5cm}
    $$
    \item[(ii)] Moreover, the pivot with the consistent plug-in estimator, $\mathrm{Pivot}(\hat T_{n,1};\Pi_{n,1}, \hat\sigma_{n,1}^{2})$, remains asymptotically valid conditional on the sequence of greedy splits leading to leaf $l_1$, i.e.,
    \vspace{-0.5cm}
    $$
     \mathrm{Pivot}(\hat T_{n,1};\Pi_{n,1}, \hat\sigma_{n,1}^{2}) \;\lvert \;  \cbracket{\bS_{1}=\bs_{1}}\;\indist\; \mathrm{Uniform}[0,1].
    $$
\end{enumerate}
\end{theorem}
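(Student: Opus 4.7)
The plan for Theorem~\ref{thm:plugin} is to prove the two parts in sequence: part (i) upgrades the unconditional consistency of $\hat\sigma_{n,1}^{2}$ to conditional consistency given $\{\bS_{1}=\bs_{1}\}$, and part (ii) then combines part (i) with Theorem~\ref{thm: asymptotic validity} through a Slutsky-type argument that exploits the smoothness of the pivot in its variance argument. Both steps reuse the analytic machinery developed for the proof of Theorem~\ref{thm: verifying suff cond}, most notably the Lipschitz property of $f$ and the Lindeberg swap between $\zeta_n$ and $Z_n$.

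For part (i), I would apply Bayes' rule to write, for $A_{n}=\{|\hat\sigma_{n,1}^{2}-\sigma_{n,1}^{2}|>\delta\}$,
\[
\mathbb{P}(A_{n}\mid \bS_{1}=\bs_{1})
=\frac{\mathbb{E}[\mathbf{1}_{A_{n}}\,\Lambda(\zeta_{n})]}{\mathbb{E}[\Lambda(\zeta_{n})]},
\]
and control the denominator and numerator separately. The denominator satisfies $\mathbb{E}[\Lambda(\zeta_{n})]\geq C'\exp f(\Pi_{n})$ for large $n$, obtained by combining the lower bound $\mathbb{E}[\Lambda(Z_{n})]\geq C_{-}\exp f(\Pi_{n})$ derived in the proof of Theorem~\ref{thm: verifying suff cond} with the fact that $\RD_{n}^{(1)}\to 0$. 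For the numerator, Proposition~\ref{prop:supporting1} together with Remark~\ref{remark:ass:3} gives the pointwise bound $\Lambda(\zeta)\leq \exp f(\Pi_{n})\cdot\exp(LB\|\zeta\|)$, so after cancelling $\exp f(\Pi_{n})$ a Cauchy--Schwarz step yields
\[
\mathbb{P}(A_{n}\mid \bS_{1}=\bs_{1}) \;\leq\; \tfrac{1}{C'}\,\mathbb{P}(A_{n})^{1/2}\,\mathbb{E}[\exp(2LB\|\zeta_{n}\|)]^{1/2}.
\]
By Lemma~\ref{lem: linear rep stan}, $\zeta_{n}$ is a normalized sum of uniformly bounded, mean-zero coordinates (bounded because $\theta_{i,n}$ is bounded away from $0$ and $1$ by Assumption~\ref{ass:2}), hence each coordinate is sub-Gaussian with parameter bounded uniformly in $n$ and the second factor is finite. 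The first factor vanishes by the unconditional hypothesis on $\hat\sigma_{n,1}^{2}$, completing part (i).

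For part (ii), I would invoke Slutsky's theorem: since Theorem~\ref{thm: asymptotic validity} gives $\mathrm{Pivot}(\hat T_{n,1};\Pi_{n,1},\sigma_{n,1}^{2})\mid\{\bS_{1}=\bs_{1}\}\indist\mathrm{Uniform}[0,1]$, it suffices to show $|\mathrm{Pivot}(\hat T_{n,1};\Pi_{n,1},\hat\sigma_{n,1}^{2})-\mathrm{Pivot}(\hat T_{n,1};\Pi_{n,1},\sigma_{n,1}^{2})|\to 0$ in conditional probability. Viewing the pivot as a smooth function of $\sigma^{2}$ on a compact interval around $\sigma_{n,1}^{2}=\pi_{1,n}(1-\pi_{1,n})$ (bounded away from $0$ by Assumption~\ref{ass:2}), I would express $\partial \mathrm{Pivot}/\partial\sigma^{2}$ as a difference of first and second moments of the tilted density $\propto\phi(u;\Pi_{n,1},\sigma^{2})\prod_{h}\lambda_{h}(u,\hat T_{n,-1})$ evaluated against the Gaussian score $((u-\Pi_{n,1})^{2}-\sigma^{2})/(2\sigma^{4})$. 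Because $\prod_{h}\lambda_{h}\in(0,1]$ and the Gaussian envelope dominates, these moments admit a local Lipschitz bound with constant $K(\zeta_{n})$ that is $O_{\mathbb{P}}(1)$ under the conditional law. Combining this bound with part (i) then yields the required vanishing difference, and Slutsky delivers the conclusion.

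The main obstacle is the uniform control of $\partial \mathrm{Pivot}/\partial\sigma^{2}$ in part (ii): the pivot is a ratio of integrals whose denominator is exponentially small of order $\exp f(\Pi_{n})$ when the selection event is rare, so a naive bound on the derivative can be vacuous. I would handle this by factoring $\exp f(\Pi_{n})$ out of both numerator and denominator of the derivative using the Lipschitz property of $f$ from Proposition~\ref{prop:supporting1}, and then bounding the remaining ratio by moments of $\zeta_{n}$ that are finite through the sub-Gaussian tails already invoked in part (i). This mirrors the Lindeberg-swap strategy underlying Theorem~\ref{thm: verifying suff cond} and makes the plug-in adjustment fit naturally within the same analytic framework.
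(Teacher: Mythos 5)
Your proposal follows the same high-level architecture as the paper for both parts: Bayes' rule, a lower bound on the denominator $\mathbb{E}[\Lambda(\zeta_n)]$ at the $\exp f(\Pi_n)$ scale, Cauchy--Schwarz, and sub-Gaussianity of $\zeta_n$ for part (i); Slutsky together with a Lipschitz property of the pivot in $\sigma^2$ for part (ii).

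For part (i), the only difference is in how the denominator is lower-bounded. The paper derives $\mathbb{E}[\Lambda(\zeta_n)]\geq \exp(-2LB)\,\exp f(\Pi_n)$ directly by applying the $L$-Lipschitz property of $f$ and Jensen's inequality to $\exp(-LB\|\zeta_n\|)$, whereas you recycle the bound $\mathbb{E}[\Lambda(Z_n)]\geq C_-\exp f(\Pi_n)$ already established in the proof of Theorem~\ref{thm: verifying suff cond} and then pass to $\mathbb{E}[\Lambda(\zeta_n)]$ via $\RD_n^{(1)}\to 0$. Both routes are correct and give the same type of bound; yours is slightly more economical because it reuses earlier work, while the paper's is self-contained.

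For part (ii), you flag a point that the paper does not handle carefully. The paper's argument for the Lipschitz bound $\big|\partial \mathrm{Pivot}/\partial \sigma^2\big|\leq \tfrac{|N'||D|+|N||D'|}{D^2}$ relies on the unproven assertion that $D(\sigma_{n,1}^2)$ is ``bounded away from zero with high probability.'' But $D$ is a selection-probability-type integral of order $\exp f(\Pi_n)$, and the paper's own framework explicitly allows $\exp f(\Pi_n)\to 0$ (rare selection), in which case $D$ is not bounded away from zero and the paper's crude bound becomes vacuous. Your factoring idea is the right fix: write $\Lambda(u,\hat T_{n,-1})=\exp f(\Pi_{n,1},\hat T_{n,-1})\cdot\exp\{f(u,\hat T_{n,-1})-f(\Pi_{n,1},\hat T_{n,-1})\}$, observe that the common factor $\exp f(\Pi_{n,1},\hat T_{n,-1})$ cancels exactly between numerator and denominator of $N'/D$, $ND'/D^2$, etc., and then bound the remaining tilted moments of $|u-\Pi_{n,1}|^2$ uniformly using the $L$-Lipschitz property of $f$ and the Gaussian envelope $\int |u-\Pi|^2\phi(u;\Pi,\sigma^2)e^{\pm L|u-\Pi|}du<\infty$ on the compact range $\sigma^2\in[\delta_1(1-\delta_1),1/4]$. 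This gives a derivative bound that is uniform in $n$ even when $\exp f(\Pi_n)\to 0$. One small correction: the derivative of the ratio $N/D$ is a covariance (under the tilted density) between the indicator $\mathbf{1}\{u\leq\hat T_{n,1}\}$ and the Gaussian score, not literally a ``difference of first and second moments,'' but this does not change the estimate. Overall, your proposal reaches the same conclusion via a genuinely tighter argument in part (ii).
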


The proof of Theorem \ref{thm:plugin} is provided in Appendix~\ref{appendix:proofs5}.

\section{Adaptive choices of the temperature parameter}
\label{sec:adaptivetemp}

The temperature parameter $\epsilon_h$ in Algorithm~\ref{alg:RCT} controls the degree of randomization used to select the split on $p_h$, and has so far been fixed at a common value $\epsilon$. In this section, we describe a data-adaptive strategy for choosing the amount of randomization at each internal node. This yields a variant of Algorithm~\ref{alg:RCT} in which the temperature parameter is selected adaptively. For inference under this choice, the pivot requires a slight modification of that introduced in Proposition~\ref{prop: pivot}. We outline the resulting approach below.

For each $h \in [d]$, we define temperature functions with a fixed hyperparameter $\tau>0$ as:
\begin{align}
\label{eq:dataadaptivetemp}
 \; \overline{\epsilon}_h(t) 
\;:=\;
\tau\times\frac{1}{k_h}\sum_{k=1}^{k_h}\cG_h^k(t), \qquad
\forall t\in\tilde D_n,
\end{align} and set the data-adaptive temperature parameter at node $p_h$ to be $\overline{\epsilon}_h(\hat{T}_{n})$. Recall that $\cG^{k}_{h}(\hat{T}_{n}) =  \cG^{k}_{h}(\hat{T}_{n,1}, \hat{T}_{n,\sminus 1})= G_{p_h}(s_h^k; Y)$ (see Definition \ref{defn: g}) denotes the information gain at $p_h$ computed from $\hat{T}_{n}$.
Hereafter, we refer to $\tau$ as the \emph{temperature scale} parameter, which globally rescales the node-specific temperature parameters, and controls the overall level of randomization across all splits; smaller values of $\tau$ correspond to lower randomization. 

% \SB{Slightly abusing notation, we use $\overline{\epsilon}_h(t; \tau) \equiv \overline{\epsilon}_h(t)$ interchangeably and suppress the dependence on the temperature parameter $\tau$, which is fixed throughout our method.}

Accounting for the tree fit obtained with this data-driven temperature parameter requires only a minor modifications to our pivot. 
For each $h \in [d]$,  define the correction factor  $\overline{\lambda}_{h}:\tilde{D}_n\to(0,1)$  as:
 % \vspace{-0.5cm} 
\begin{align}
\label{lambda:adjusted}
 \overline{\lambda}_{h}(t)
:= 
\dfrac{\exp \!\left(\overline{\epsilon}_{h}(t)^{-1}\,\cG^{1}_{h}(t)\right)}
      {\sum_{k=1}^{k_{h}} 
        \exp \!\left(\overline{\epsilon}_{h}(t)^{-1}\,\cG^{k}_{h}(t)\right)}
\end{align} and note that the sampling probabilities of the random split variable at each parent $p_h$ takes the form $\mathbb{P}\!\rbracket{S_{h}^{1}=s_{h}^{1}\mid \hat{T}_{n}}
= 
\overline{\lambda}_{h}(\hat{T}_{n})$. Therefore, it follows that $\mathbb{P}\!\left[\bS_{1}=\bs_{1}\mid \hat{T}_{n}\right]
= 
\prod_{h=1}^{d}
\overline{\lambda}_{h}(\hat{T}_{n})
$, and the pivot accounting for the tree fit—using the same steps of the proof for Proposition~\ref{prop: pivot}—becomes 
\begin{align}
\label{pivot:adjusted}
    \overline{\text{Pivot}}(\hat{T}_{n,1}; \Pi_{n,1})
    =
    \frac{
        \int_{-\infty}^{\hat{T}_{n,1}}
        \phi(u;\Pi_{n,1},\sigma_{n,1}^{2})
        \prod_{h=1}^{d}\overline{\lambda}_{h}(u,\hat{T}_{n,\sminus 1})\, du
    }{
        \int_{-\infty}^{\infty}
        \phi(u;\Pi_{n,1},\sigma_{n,1}^{2})
        \prod_{h=1}^{d}\overline{\lambda}_{h}(u,\hat{T}_{n,\sminus 1})\, du
    }.
\end{align}
Notably, the pivot under the new sampling scheme retains the same form as the pivot described in Proposition~\ref{prop: pivot} and achieves the same asymptotic guarantee as in Theorem~\ref{thm: asymptotic validity}, provided the temperature parameters are bounded away from zero.

\begin{assumption}
\label{ass:extra}
For the set of temperature functions $\{\overline{\epsilon}_h:{h\in[d]}\}$ with temperature scale $\tau>0$, there exists a non-negative constant $m_\tau>0$ such that
$\overline{\epsilon}_h(v)\;\ge\; m_\tau$
for all  $v\in\tilde D_n$  and all $h\in[d]$.
\end{assumption}

\begin{theorem}
Under Assumptions~\ref{ass:1},\ref{ass:2},\ref{ass:3},\ref{ass:infodevbound} and \ref{ass:extra}, we have
 \vspace{-0.5cm}
$$
 \overline{\text{\normalfont Pivot}}(\hat{T}_{n,1}; \Pi_{n,1})\;\lvert \;  \cbracket{\bS_{1}=\bs_{1}} \indist \text{\normalfont Uniform}[0,1].
 \vspace{-1cm}
$$
\label{thm: adjusted asymptotic validity}
\end{theorem}

Details for the proof and further discussion on data-adaptive temperatures are given in Appendix~\ref{appendix:dataadaptivetemp}. While we focus on a specific data-adaptive temperature parameter, our inference method can accommodate other data-driven choices as well, provided they satisfy Assumption~\ref{ass:temp_regular}. 

% Further, define analogous functions $\overline{f}:\tilde{D}_{n}\to \mathbb{R}$ and  $\overline{\Lambda}: D_n \to (0,1)$ as
% \vspace{-0.5cm}
% \begin{align*}
% \bar f(t)
% & :=
% \sum_{h=1}^{d}
% \Bigg[
% \overline{\epsilon}_h(t)^{-1}\,\cG_h^{1}(t)
% \;-\;
% \log\!\Bigg(\sum_{k=1}^{k_h}
% \exp\!\big\{\overline{\epsilon}_h(v)^{-1}\,\cG_h^{k}(t)\big\}\Bigg)
% \Bigg]\\
% \overline{\Lambda}(\zeta) &:=  \exp \overline{f}(\Sigma_{n}^{1/2}\zeta+\Pi_{n}),
% \end{align*}
%  % Define a function $\overline{\lambda}_{h}: D_n \to (0,1)$ as $\overline{\lambda}_{h}(\zeta) :=  \exp \overline{f}(\Sigma_{n}^{1/2}\zeta+\Pi_{n})$

% % Suppose that, for all $h,k$, the gain functions $\cG_h^k$ are three times continuously differentiable on $\tilde D_n$ and that their first-, second-, and third-order derivatives are uniformly bounded on $\tilde D_n$.

% There exist finite constants $\bar{b}_1,\bar{b}_2,\bar{b}_3>0$ such that
% \[
% \|\nabla \bar f(v)\|\le \bar{b}_1,
% \qquad
% \|\nabla^2 \bar f(v)\|\le \bar{b}_2,
% \qquad
% \|\nabla^3 \bar f(v)\|\le \bar{b}_3,
% \qquad
% \text{for all } v\in\tilde D_n .
% \]

\section{Simulation studies}
\label{sec:simulations}

In this section, we present simulations demonstrating that our method delivers valid, powerful inference without sacrificing predictive accuracy, and we compare its performance with existing alternatives.

\subsection{Setting, methods and metrics}  We generate independent binary responses $Y_i\stackrel{\text{ind}}{\sim} \text{Bernoulli}(\theta_i)$ with
\begin{equation}
\label{gen:model}
\theta_i
=
\sigma\!\nbracket{
m \nbracket{
s\,X_{i1}
-
s\,X_{i2}
+
0.3\,s\,X_{i1}X_{i2}}},
\end{equation}
where $X_{i1}, X_{i2} \sim \text{Uniform}(-1,1)$ are i.i.d covariates and $\sigma(t) = (1+e^{-t})^{-1}$ is the logistic sigmoid function. The parameter $s>0$ controls signal strength: larger values of $s$ increase class separation by amplifying both the main effects and their interaction. Specifically, we consider $s \in \{1,2,3\}$, with a fixed margin parameter $m=0.5$ to avoid near-deterministic separation even at higher signal levels. This design yields a correctly specified, low-dimensional logistic model with moderate interaction.

We compare inference procedures for the mean parameters associated with the terminal regions of depth-3 trees, using an information criterion based on the \emph{gini impurity} measure:
\vspace{-0.5cm}
$$
I(\cP; Y)= \cI(\hat{\pi}_{\cP})=2\widehat{\pi}_{\cP}(1 - \widehat{\pi}_{\cP}),
$$
and evaluate average coverage, interval length, and predictive accuracy of results generated by different procedures.
Each Monte Carlo iteration generates $n=500$ observations, from which a fixed subset of $100$ observations is held out as \emph{test} data used solely to compute the log-loss. 
This test set is shared across all methods. 
The remaining $400$ observations constitute the \emph{train+inference} data, used to fit trees and to construct confidence intervals for the model parameters in the algorithmic model.
Across all methods, inference involves constructing $100\times (1-\alpha)\%$ confidence intervals for the mean parameters in the terminal regions or leaves of the tree fit, with $\alpha = 0.1$.
As noted earlier in Remark \ref{rem:interpretation}, the target of inference is a well-defined quantity: it corresponds to population mean of the subgroup of observations that fall into that leaf region, and can be computed directly in simulations.

We compare the following three strategies for training the algorithmic tree model and performing inference on $400$ observations.
\begin{enumerate}[leftmargin=*]
    \item[(i)] \textit{Naive}. Fit a standard depth-3 CART to all $400$ observations and perform inference on the same data, ignoring the adaptivity in the tree structure of the model.
    \item[(ii)] \textit{Data Splitting} (DS). Partition the $400$ observations into train and inference subsets according to a prefixed split ratio. The depth-3 CART is formed on the train subset of the data, and naive inference is conducted on the inference subset of the data.
    \item[(iii)] \textit{Randomized classification trees using exponential mechanism} (RCT). Fit a depth-3 RCT using the temperature-controlled sampling scheme based on the exponential mechanism on all $n=400$ observations, and then perform inference using the pivot, accounting for the adaptive tree fit, on the full dataset. Note that both model training and inference are conducted without further splitting the dataset with $400$ observations into separate train and inference subsets.
    
    For the simulations, we implement the variant of Algorithm~\ref{alg:RCT} introduced in Section~\ref{sec:adaptivetemp}, in which the temperature parameter is chosen according to Equation~\ref{eq:dataadaptivetemp}, and inference is performed using the pivot in Equation~\ref{pivot:adjusted}.
\end{enumerate}

Each replication in our simulation experiments records:
\begin{enumerate}[label=(\arabic*)]
    \item Empirical coverage of the nominal $(1-\alpha)$ intervals.  
    For a tree with $m$ leaves, the coverage rate is defined as $\dfrac{\left|\{j \in [m] : \pi_j \in \hat{C}_j(Y)\}\right|}{m}$,
    where $\hat{C}_j(Y)$ denotes the confidence interval for leaf parameter $\pi_j$.
    \item Average confidence interval length. This metric reflects the inferential power of the algorithmic model: shorter confidence intervals indicate greater power.
        \item Held-out log-loss evaluated on test data, computed on the shared validation set of 100 observations, as  
        $$-\frac{1}{n_{\text{Test}}}\sum_{i\in \text{Test-data}}\{ y_i \log(\hat{p}_i) + (1-y_i)\log(1-\hat{p}_i)\},$$
       where $\hat{p}_i$ denotes the estimated probability that the outcome equals $1$.
This is a measure of the predictive accuracy of the algorithmic model or the adaptive tree fit.
\end{enumerate}

Results are aggregated over $200$ replications and visualized using boxplots, with triangles indicating the mean of each metric when computed across all replications.

\subsection{Experiments and simulation results}

We design three sets of experiments and report our findings for each set. 

In the first experiment, we vary the overall temperature parameter $\tau$ in our proposed method (RCT), while fixing the signal strength to a moderate value $s = 2$, to examine the impact of randomization on coverage, inferential power, and predictive accuracy.
The results are reported in Figure \ref{fig:vary_tau}.
We note that across all values of the temperature scale, which controls the amount of randomization, inferential validity is maintained. 
We refer to these three levels of randomization as RCT(1), RCT(2), and RCT(3) (ordered lowest to highest degree of randomization) using $\tau=10,15,20$ respectively.
% Higher randomization with smaller values of $\tau$ slightly improves coverage but also increases interval lengths??
In comparison, the naive method produces invalid inference, with coverage rates well below the desired target of $0.90$. 
Data splitting, denoted as DS, performed with $30\%$ of the data reserved for inference, yields valid inference but results in much longer confidence intervals, and the model trained on only $70\%$ of the data exhibits substantially worse predictive accuracy compared to our method. 
Most notably, our randomized method even achieves slightly better predictive performance than the naive method, which we attribute to improved stability and reduced generalization error due to the added external randomization.

In an additional experiment, reported in Appendix~\ref{appendix: ensembleRCT}, we evaluate the predictive performance of an ensemble version of RCT, formed by aggregating multiple trees constructed using the proposed splitting scheme, with predictions obtained by majority vote. From a purely predictive standpoint, the performance of ensemble RCT method remains comparable to that of the standard random forest. Notably, with a smaller number of trees, our ensemble method achieves improved accuracy relative to the conventional bootstrap-aggregated random forest. Further details on this experiment are provided in Appendix~\ref{appendix: ensembleRCT}, and more investigation on the predictive properties of the new random forest method is deferred to future work.

\begin{figure}[h]
    \centering
    \includegraphics[width=1\textwidth]{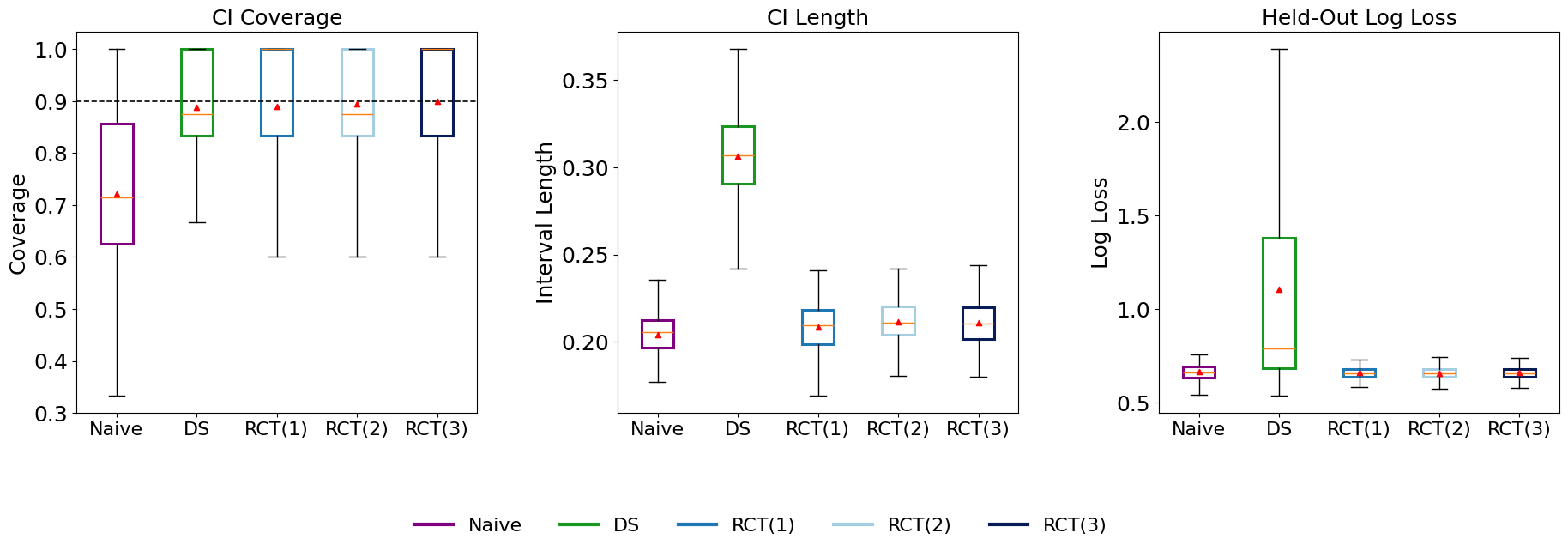}
    \vspace{-0.5cm}
    \caption{Effect of overall temperature on coverage, interval length, and held-out log-loss. RCT(1), RCT(2), RCT(3) represent RCT with $\tau = 10,15,20$ respectively.}
    \label{fig:vary_tau}
\end{figure}

The second experiment varies the data-splitting proportion and compares it with RCT(1) under fixed moderate signal strength $s=2$.
In particular, we vary the \emph{inference fraction} parameter in data splitting (DS) over the set $\{0.5, 0.4, 0.3, 0.2\}$, where each value indicates the proportion of data reserved for inference, with the remaining data used to train the algorithmic model.
As shown in Figure \ref{fig:vary_split}, even when DS uses $80\%$ of the $400$ observations to train the tree, its predictive accuracy is substantially worse than that of the fit produced by RCT. 
Conversely, DS reserving $50\%$ of the samples for inference still results in much wider intervals than RCT.
Overall, the trade-off between predictive accuracy and inferential power inherent in data splitting is suboptimal compared to our proposed method, which achieves both high predictive accuracy and valid inference simultaneously.

\begin{figure}[h]
    \centering
    \includegraphics[width=1\textwidth]{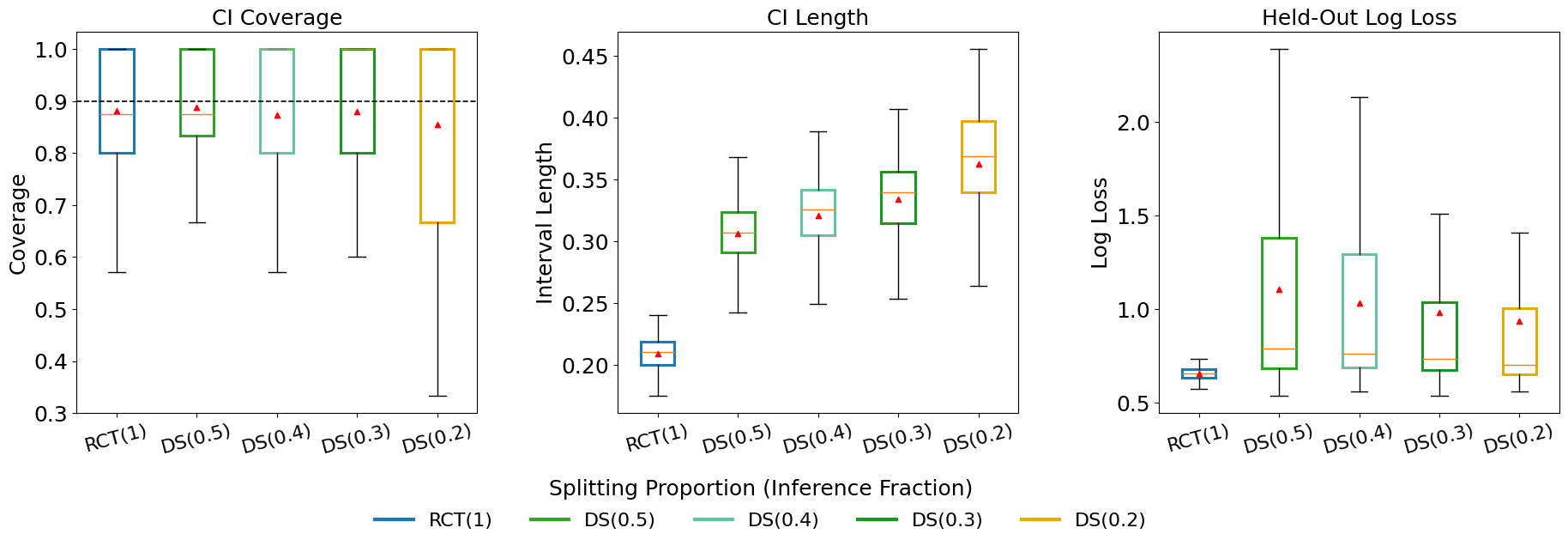}
    \vspace{-0.5cm}
    \caption{Comparison with data splitting at varying inference fraction s: coverage, interval length, and held-out log-loss}
    \label{fig:vary_split}
\end{figure}

\begin{figure}[h]
    \centering
    \includegraphics[width=1\textwidth]{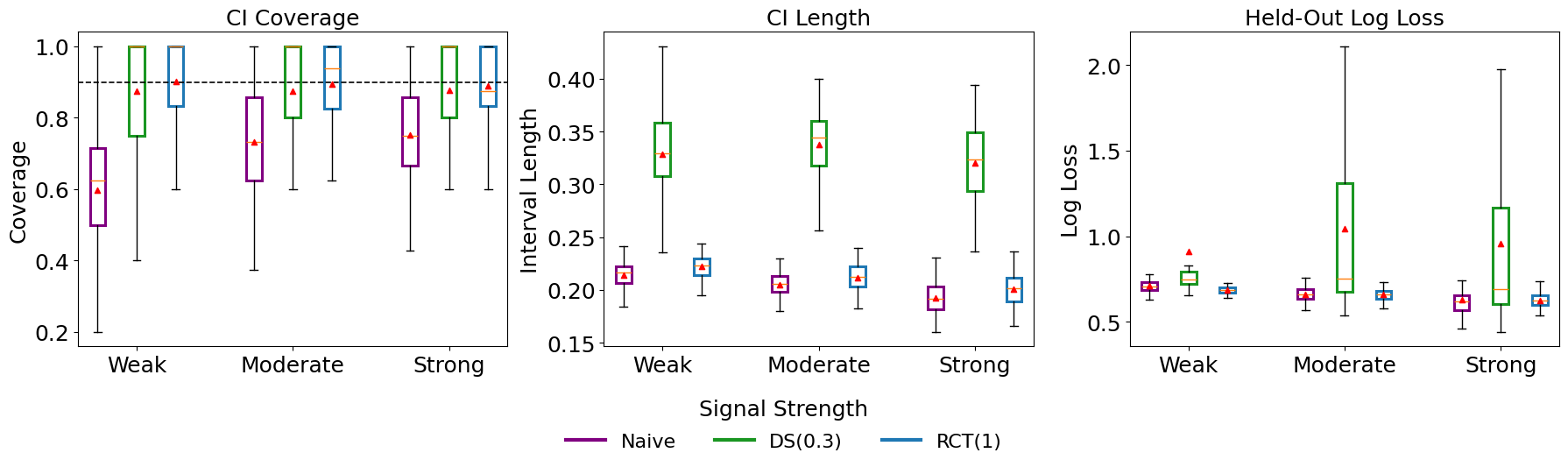}
    \vspace{-0.5cm}
    \caption{Effect of signal strength on CI quality and predictive performance.}
    \label{fig:vary_signal}
\end{figure}

In the third experiment, we vary the signal strength with $s \in \{1,2,3\}$ in~\eqref{gen:model} (representing weak, moderate and strong signal respectively). The previously observed patterns remain consistent, as illustrated in Figure \ref{fig:vary_signal}.
Naive intervals exhibit severe undercoverage, while RCT produces valid inference across all regimes. RCT(1) not only generates substantially shorter intervals than DS conducted with $30\%$ reserved for inference, but also attains better predictive accuracy on the held-out test data.

\section{Real data applications}
\label{sec:dataapps}

% We demonstrate the performance of our proposed RCT method on two publicly available datasets. In both applications, we compare three approaches: (i) a standard CART fit used solely as a predictive baseline; (ii) data splitting (DS) as a post-fit inference strategy, in which only a subset of the data is used for fitting and inference; and (iii) the proposed method (RCT), which fits the tree on the full training sample while simultaneously providing interval estimates for leaf-level class probabilities $\pi_\ell \;=\; \mathbb{P}(Y=1 \mid X \in \ell)$. 
% %Lengths averaged across terminal leaves are reported for the intervals produced by DS and RCT.
% As before, we report (a) predictive performance, measured by test-set accuracy on a held-out sample, and (b) inferential power, measured by the average confidence-interval length for leaf-level parameters. 

We evaluate our method on two publicly available datasets. Each dataset is split once into a training set (70\%) and a held-out test set (30\%). The training set is used for tree fitting and, when applicable, post-fit inference, while the test set is used solely to assess predictive performance; all methods share the same test set.

We compare three approaches: (i) \emph{CART (full)}, which fits a deterministic classification tree on the full training data and serves solely as a predictive baseline; (ii) \emph{DS (data splitting)}, which partitions the training data into fitting and inference subsets, with a total proportion $\rho$ of samples allocated to inference, and constructs confidence intervals using the inference subset; and (iii) \emph{RCT (proposed method)}, which fits a randomized classification tree on the full training data using our randomization scheme, and provides confidence intervals conditional on the realized tree, with temperature scale \(\tau\) controlling the amount of randomization. For DS and RCT, we vary \(\rho\) and \(\tau\), respectively, to illustrate the trade-off between predictive accuracy and inferential efficiency. All methods use Gini gain as the split criterion and fix the maximum tree depth at 3. 

We report two metrics: (a) predictive performance, measured by test-set classification accuracy, and (b) inferential efficiency, measured by the average confidence-interval length for leaf-level class probabilities \(\pi_\ell = \mathbb{P}(Y=1 \mid X \in \ell)\), averaged over terminal leaves. Confidence intervals are reported only for methods that provide valid post-fit inference.

We consider two datasets from distinct domains: a moderate-sized chemoinformatics application presented here and a larger-scale clinical application deferred to Appendix~\ref{app:diabetes}.

\noindent\textbf{Predicting and estimating biodegradability.} \ We consider the Quantitative Structure--Activity Relationship (QSAR) Biodegradation dataset \citep{uci_qsar_biodeg}, which contains $1055$ chemical compounds characterized by $41$ molecular descriptors and a binary indicator of \emph{ready biodegradability}. After preprocessing (see details in Appendix~\ref{app:QSAR_preproc}), the dataset includes $356$ ready biodegradable and $699$ not ready biodegradable compounds. The descriptors capture structural and physicochemical properties (e.g., spectral moments, functional group counts, and topological indices) and are measured to understand QSAR modeling of biodegradability in environmental chemoinformatics. Decision trees are appealing in this setting because they provide interpretable rules linking molecular structure to biodegradability, while naturally capturing nonlinearities and interactions. Beyond prediction, interval estimates for leaf-level probabilities are scientifically valuable, as they provide uncertainty quantification for biodegradability rates within the resulting terminal regions. The results from the three approaches are summarized in Table~\ref{tab:qsar_results}.

\begin{table}[ht]
\centering
\label{tab:qsar_results}
\begin{tabular}{lcc}
\toprule
Method & Test Acc. & Avg. CI Length \\
\midrule
CART (full)  & \textbf{0.833} & -- \\
DS ($\rho = 0.1$)  & \textcolor{red}{\textbf{0.789}} & \textcolor{red}{\textbf{0.452}} \\
DS ($\rho = 0.3$) & 0.726 & 0.233 \\
DS ($\rho = 0.5$)  & 0.770 & 0.181 \\
DS ($\rho = 0.7$) & 0.710 & 0.122 \\
RCT ($\tau=0.001$) & \textcolor{cyan}{\textbf{0.833}} & \textcolor{cyan}{\textbf{0.586}} \\
RCT ($\tau=0.1$) & \textcolor{blue}{\textbf{0.814}} & \textcolor{blue}{\textbf{0.207}} \\
RCT ($\tau=0.5$)   & 0.804 & 0.148 \\
RCT ($\tau=1$) & 0.773 & 0.146 \\
\bottomrule
\end{tabular}
\end{table}

% \textcolor{red}{Write interpretation for this analysis: I think, because of the larger sample size, RCT with a higher temperature value is able to match the accuracy of CART---this is worth mentioning. To be consistent, if you include a row for DS with even more data allocated towards fitting the tree, then include a similar row for this analysis as well}

\noindent\textbf{Interpretation of results.} \ Table~\ref{tab:qsar_results} illustrates a clear trade-off between prediction and inference in the QSAR application. The full CART fit achieves the highest test accuracy but does not yield valid post-fit confidence intervals. For both DS (highlighted in red) and RCT (highlighted in blue), we compare methods with predictive accuracy closest to CART. Notably, RCT produces confidence intervals that are less than half as long as those from DS, while still attaining better predictive accuracy that is closer to the full CART fit. 

% As $\tau$ increases, predictive accuracy decreases gradually with modest gains in inferential precision, demonstrating that RCT offers a favorable balance between prediction and inference in settings where uncertainty quantification for interpretable terminal regions is scientifically central.

% Across both datasets, data splitting provides valid but typically wider intervals (since only a fraction of the training data is available for inference), while CART (full) provides strong predictive accuracy but no valid post-fit inference. RCT aims to retain good predictive performance while enabling valid uncertainty quantification for leaf-level parameters by explicitly accounting for the adaptivity of tree construction through conditioning on the selected tree fit. Specifically, we observe from Table~\ref{tab:qsar_results}, \ref{tab:diabetes_results} for low randomization settings (lower values of $\tau$) RCT achieves near-CART (near-optimal) predictive performance, while producing valid   confidence intervals with average lengths (inference precision) comparable to data splitting alternatives that target the same test accuracy.

\section{Concluding remarks}
\label{sec:conclusion}

In this paper, we introduce a temperature-controlled sampling scheme for fitting decision trees, based on an exponential mechanism. 
Unlike standard tree-fitting algorithms that are purely predictive, our method enables valid inference by accounting for the highly adaptive nature of the tree fit through externally introduced randomization.
Two novel contributions of our inferential approach are as follows: first, we show that pivotal quantities can be constructed directly using the sampling probabilities from the exponential mechanism, and this applies to any information gain measure in the tree-splitting criterion; 
second, valid inference conditional on the tree fit can be guaranteed without imposing restrictive assumptions about the quality of the tree fit or about how likely or rare the observed fit is.

Several promising avenues for future research remain to be explored.
Joint inference for multivariate parameters, as well as point estimates derived from a joint conditional distribution that accounts for the adaptive tree structure, remain to be developed.
Goodness-of-fit tests for assessing model complexity are a promising direction for future research.
More broadly, we believe that the proof techniques developed for our new randomization scheme have the potential to be extended to inference in a wide range of other settings.
For example, while we primarily focused on binary data, a similar analysis can be generalized to multinomial data. 
More generally, our techniques, which leverage the properties of the exponential mechanism, can provide a theoretical basis for applying such randomization schemes to construct feasible  inference in other data-adaptive problems.

\section{Acknowledgements}
The research of S. Panigrahi was supported in part by the NSF CAREER Award DMS-2337882 and the NIH grant 1R01GM152549-01.

\bibliographystyle{plainnat}    % abbrvnat/unsrtnat are fine too
\bibliography{references}   

\begin{appendices}

\section{Proofs for results in Section~\ref{sec:treegrowing}}
\label{appendix:proofs3}
% \begin{proposition}
% For any split $s \in \cK(\cP)$ on an interior region $\cP$, where $s\notin \underset{s'\in \cK(\cP)}{\text{argmax}} \; G_{\cP}(s'; y)$, it holds that
% $$\Pp_{y}(s; \cP) \to 0 \; \text{ as } \; \epsilon_{\cP} \to \infty.$$
% \label{prop: exp conv}
% \end{proposition}

\begin{proof}[Proof of Proposition \ref{prop: exp conv}]
Let $s^*(\cP) \in \underset{s'\in \cK(\cP)}{\text{argmax}} \; G_{\cP}(s'; y)$. 
%i.e., 
%
%$$
%G_{\cP}(s; y)< G_{\cP}(s^{*}(\cP); y)
%
%$$ 
Then, for any $s \in \mathcal{K}(\cP)$ and $s\notin \underset{s'\in \cK(\cP)}{\text{argmax}} \; G_{\cP}(s'; y)$, observe that
\begin{align*}
\Pp_{y}(s; \cP)= \mathbb{P}\rbracket{S(\cP) = s \mid Y=y} &= \frac{\exp \nbracket{\epsilon_{\cP}^{-1} G_{\cP}(s; y)}}{\sum_{s'\in \mathcal{K}(\cP)}\exp \nbracket{\epsilon_{\cP}^{-1} G_{\cP}(s'; y)}} \\   
&=\frac{\exp \nbracket{\epsilon_{\cP}^{-1} G_{\cP}(s; y)}}{\exp \nbracket{\epsilon_{\cP}^{-1}  G_{\cP}(s^{*}(\cP); y)}+\displaystyle\sum_{s'\neq s^{*}(\cP)  \in \mathcal{K}(\cP)}\exp \nbracket{\epsilon_{\cP}^{-1} G_{\cP}(s'; y)}} \\
&< \exp \nbracket{\epsilon_{\cP}^{-1} \nbracket{G_{\cP}(s; y) - G_{\cP}(s^{*}(\cP); y)}}. 
\end{align*}
If $s\notin \underset{s'\in \cK(\cP)}{\text{argmax}} \; G_{\cP}(s'; y)$, then 

$$
G_{\cP}(s; y) - G_{\cP}(s^{*}(\cP); y)<0,
$$
which in turn implies that:
$$
\Pp_{y}(s; \cP) \to 0 \; \text{ as }\; \epsilon_{\cP} \to 0.
$$
\end{proof}

% \begin{proposition}
% Suppose $Z_s \overset{\text{i.i.d.}}{\sim} \text{Gumbel}(0,1)$ for $s\in \cK(\cP)$, and let $V_{\cP}(s; Y) = \epsilon_{\cP} G_{\cP}(s; Y) + Z_s$.
% Then, we have that

% \[
% \Pp_{y}(s; \cP) = \mathbb{P}\left[ V_{\cP}(s; Y)  \geq V_{\cP}(s'; Y) \, \text{ for all } \; s' \in \cK(\cP)\mid Y=y\right].
% \]
% \label{prop: exp gumbel}
% \end{proposition}

\begin{proof}[Proof of Proposition~\ref{prop: exp gumbel}]
First, note that the i.i.d. noise $Z_{s} \sim \text{\normalfont Gumbel}(0,1)$ has cumulative distribution function (CDF) given by $F(z) = \exp(-\exp(-z))$. 

To simplify notation, let $V_{s} := V_{\cP}(s; Y)$ for all $s \in \cK(\cP)$.
Conditioning on $Z_{s} = z_{s}$, and using its independence from $Y = y$, we have that 

\begin{align*}
    &\mathbb{P}\left(V_{s'} \leq V_s \ \text{ for all } s' \in \cK(\cP) \mid Z_{s} = z_s ,Y=y\right)\\ 
    &= \prod_{s'\neq s} \exp\left\{ -\exp\left( -(z_s+\epsilon_{\cP}^{-1}(G_{\cP}(s; y)-G_{\cP}(s'; y))) \right) \right\} \\
    &= \exp\left( -\exp(-z_s)\cdot \sum_{s'\neq s} \exp\left( -\epsilon_{\cP}^{-1}(G_{\cP}(s; y)-G_{\cP}(s'; y)) \right) \right).
\end{align*}
To compute the probability in the claim, we are left to calculate

\begin{align*}
    &\mathbb{P}\left( V_{\cP}(s; Y)  \geq V_{\cP}(s'; Y) \, \text{ for all } \; s' \in \cK(\cP)\mid Y=y\right) \\
    &= \int_{-\infty}^{\infty}\mathbb{P}\left(V_{s'} \leq V_s \ \text{ for all } s' \in \cK(\cP) \mid Z_{s} = z_s ,Y=y\right)\exp(-z_s) \exp(-\exp(-z_s)) dz_s.
\end{align*}

Now, let $C = 1 + \sum_{s'\neq s} \exp\left( -\epsilon_{\cP}^{-1}(G_{\cP}(s; y)-G_{\cP}(s'; y)) \right)$. 
Observe that the integral on the right-hand side of the previous display is equal to

\begin{align*}
& \int_{-\infty}^\infty \exp(-z_s) \exp(-\exp(-z_s)) \exp\left( -\exp(-z_s) \sum_{s'\neq s} \exp(-\epsilon_{\cP}^{-1}(G_{\cP}(s; y)-G_{\cP}(s'; y))) \right) dz_s\\
&= \int_{-\infty}^{\infty} \exp(-z_s) \exp\left( -\exp(-z_s) C \right) dz_s\\
&= \int_0^\infty \exp(-Cu_s) du_s = \frac{1}{C},
\end{align*}
where in the last step we perform the following change of variables $u_s = \exp(-z_s)$, $dz_s = -\dfrac{du_s}{u_s}$.
This leads us to the claim:

\begin{align*}
\mathbb{P}\left( V_{\cP}(s; Y)  \geq V_{\cP}(s'; Y) \, \text{ for all} \; s' \in \cK(\cP)\mid Y=y\right) &= \dfrac{1}{C}
 = \dfrac{ \exp(\epsilon_{\cP}^{-1} G_{\cP}(s; y)) }{ \sum\limits_{s'\in \mathcal{K}(P)} \exp(\epsilon_{\cP}^{-1} G_{\cP}(s'; y)) }.
\end{align*}
\end{proof}

% \begin{lemma}
% Let, for all $h \in [d]$ and $k \in [k_h]$, $\eta_{h}^{k}$ and $\cG_{h}^{k}(.,.)$ be as defined in \eqref{defn: eta} and \eqref{defn: g}, respectively.
% Consider the information gain measure $G_{\cP}(s; Y)$ underlying the splitting rules. Then, it holds that
% \vspace{-0.8cm}
% $$
% G_{p_h}(s^{k}_{h}; Y) = \cG^{k}_{h}\nbracket{\hat{T}_{n,1}, {\hat{T}_{n,\sminus 1}}}, \quad \text{ for all} \ \; k \in [k_h],\  h\in [d].
% $$
%  \label{lem: suff stats}
% \end{lemma}

\section{Proof for result in Section~\ref{sec:inference}}
\label{appendix:proofs4}

\begin{proof}[Proof of Lemma~\ref{lem: suff stats}]
 Let $\cP$ be a parent of $l_1$. Then, $n_{\cP}^{1} = n_1$.
 Now, using the decomposition:
$
\widehat{\pi}_{\cR} = \dfrac{n_{\cR}^{1}}{n_{\cR}} \widehat{\pi}_{1}+ \dfrac{1}{\sqrt{n_{\cR}}}\widehat{\pi}_{\sminus 1}(\cR),
$
 we obtain
 \begin{align*}
 \begin{gathered}
 \hat{\pi}_{\cP} =  \frac{\sqrt{n_{1}}}{n_{\cP}}\sqrt{n_{1}}\widehat{\pi}_{1}  + \frac{1}{\sqrt{n_{\cP}}} \widehat{\pi}_{\sminus 1}(\cP), \
 \hat{\pi}_{\cP(s)} = \frac{n^{1}_{\cP(s)}}{n_{\cP(s)}\sqrt{n_{1}}}\sqrt{n_{1}}\widehat{\pi}_{1}    + \frac{1}{\sqrt{n_{\cP(s)}}} \widehat{\pi}_{\sminus 1}(\cP(s)),\\
  \hat{\pi}_{\cP'(s)} = \frac{n_{\cP}}{n_{\cP'(s)}}\hat{\pi}_{\cP}   - \frac{n_{\cP(s)}}{n_{\cP'(s)}}\hat{\pi}_{\cP(s)}=  \frac{n_{1}-n^{1}_{\cP(s)}}{n_{\cP'(s)}\sqrt{n_{1}}}\sqrt{n_{1}}\widehat{\pi}_{1} + \frac{\sqrt{n_{\cP}}}{n_{\cP'(s)}} \widehat{\pi}_{\sminus 1}(\cP)  -\frac{\sqrt{n_{\cP(s)}}}{n_{\cP'(s)}} \widehat{\pi}_{\sminus 1}(\cP(s)).
 \end{gathered}
 \end{align*}
By substituting $n_{\cP'(s)} = n - n_{\cP(s)}$ into the above expressions, we have that:
 \begin{align*}
     G_{\cP}(s; Y)= \cI(\hat{\pi}_{\cP})- \left\{\frac{n_{\cP(s)}}{n_{\cP}} \cI(\hat{\pi}_{\cP(s)})+\frac{n_{\cP'(s)}}{n_{\cP}} \cI(\hat{\pi}_{\cP'(s)})\right\} 
     = g_{\eta(s;\cP)}(\sqrt{n_{1}}\widehat{\pi}_{1},\widehat{\pi}_{\sminus 1}(\cP),\widehat{\pi}_{\sminus 1}(\cP(s))).
 \end{align*} 
 
 Therefore, for $\cP=p_{h}, s=s_{h}^{k}$, we get 
 
 \begin{align*}
 \begin{gathered}
 G_{p_h}(s^{k}_{h}; Y) = g_{\eta^{k}_{h}}\nbracket{\hat{T}_{n,1}, \hat{T}_{n,\sminus 1}^{h},\hat{T}_{n,\sminus 1}^{h,k}} \; \text{ for } \; k\in\cbracket{2,\ldots,k_h}, \text{ and }\\
G_{p_h}(s^{1}_{h}; Y) = g_{\eta^{1}_{h}}\nbracket{\hat{T}_{n,1}, \hat{T}_{n,\sminus 1}^{h},\hat{T}_{n,\sminus 1}^{h+1}} \; \text{ as } \; p_{h+1}=p_{h}(s_{h}^{1})
 \end{gathered} 
 \end{align*} Note that for above computation, w.l.o.g we set $p_{h+1}=p_{h}(s_{h}^{1})$ as $p_{h+1}\in \cbracket{p_{h}(s_{h}^{1}),p'_{h}(s_{h}^{1})}$ and the choice of left (or) right child doesn't affect the gain of the split, $G_{p_h}(s^{1}_{h}; Y)$. Hence, it follows that $G_{p_h}(s^{k}_{h}; Y) = \cG^{k}_{h}\nbracket{\hat{T}_{n,1}, {\hat{T}_{n,\sminus 1}}}$.
\end{proof}

% \begin{proposition}
% \label{prop: linrep}
% It holds that $\hat{T}_{n} = \frac{1}{\sqrt{n}} \sum_{i\in [n]} a_{i,n}Y_{i,n}$ where $a_{i,n}$ is as given in Definition \ref{defn: weights} and $\{Y_{i,n}\}_{i\in[n]}$ are independent $\text{\normalfont Bernoulli}(\theta_{i,n})$ variables, with 

% $$\theta_{i,n} = \sum_{j \in [m]} \pi_{j,n} \mathbbm{1}[X_{i,n} \in l_{j}].$$ 
% \end{proposition}

\section{Proofs for results in Section~\ref{sec:asymptotictheory}}
\label{appendix:proofs5}

\begin{proof}[Proof of Proposition~\ref{prop: linrep}]
By Definition~\ref{defn: weights}, each vector $a_{i,n}\in\mathbb{R}^{\bar{k}}$ collects the weights corresponding to the root region, internal regions, and their candidate split regions. We verify that every coordinate of $\hat{T}_n$ can be written as a weighted sum of the form $\frac{1}{\sqrt{n}}\sum_{i=1}^{n}a_{i,n,\ell}Y_{i,n}$, and stacking these components yields the desired linear representation.

For the terminal region of interest $l_1$, note that

\[
\hat{T}_{n,1}=\sqrt{n_1}\,\hat{\pi}_1
=\frac{1}{\sqrt{n_1}}\sum_{i=1}^{n}Y_{i,n}\,\mathbbm{1}[X_{i,n}\in l_1]
=\frac{1}{\sqrt{n}}\sum_{i=1}^{n}\sqrt{\frac{n}{n_1}}\,\mathbbm{1}[X_{i,n}\in l_1]\,Y_{i,n},
\]
so that $\hat{T}_{n,1}=\frac{1}{\sqrt{n}}\sum_{i=1}^{n}a_{i,n,1}Y_{i,n}$ with $a_{i,n,1}$ defined as in Definition~\ref{defn: weights}.

For any parent region $p_h$ in $\ct_1$, using the definition in Equation~\eqref{orth:stat1} with $n_{h,0}=n_{p_h}$ and $n_{h,0}^{1}=n_{p_h\cap l_1}$, we have

\begin{equation*}
\begin{aligned}
\hat{T}_{n,\sminus 1}^{h}
&=\sqrt{n_{h,0}}\widehat{\pi}_{p_h}
-\frac{n_{h,0}^{1}}{\sqrt{n_1n_{h,0}}}\,\sqrt{n_1}\widehat{\pi}_1\\
&= \frac{1}{\sqrt{n}}\sum_{i=1}^{n}\sqrt{\frac{n}{n_{h,0}}}\Big(
\mathbbm{1}[X_{i,n}\in p_h]-\mathbbm{1}[X_{i,n}\in l_1]\Big)\,Y_{i,n},
\end{aligned}
\end{equation*}
which coincides with $\frac{1}{\sqrt{n}}\sum_{i=1}^{n}a_{i,n,\sminus 1}^{h}Y_{i,n}$. 

Similarly, for each candidate split region $p_h^k=p_h(s_h^k)$ with $n_{h,k}=n_{p_h^k}$ and $n_{h,k}^{1}=n_{p_h^k\cap l_1}$, we obtain\[
\hat{T}_{n,\sminus 1}^{h,k}
=\sqrt{n_{h,k}}\widehat{\pi}_{p_h^k}
-\frac{n_{h,k}^{1}}{\sqrt{n_1n_{h,k}}}\,\sqrt{n_1}\widehat{\pi}_1
=\frac{1}{\sqrt{n}}\sum_{i=1}^{n}\sqrt{\frac{n}{n_{h,k}}}
\Big(\mathbbm{1}[X_{i,n}\in p_h^k]-\frac{n_{h,k}^{1}}{n_1}\mathbbm{1}[X_{i,n}\in l_1]\Big)\,Y_{i,n},
\]
which equals $\frac{1}{\sqrt{n}}\sum_{i=1}^{n}a_{i,n,\sminus 1}^{h,k}Y_{i,n}$.

Stacking all coordinates of $\hat{T}_n$ as

\[
\hat{T}_{n}
=\begin{bmatrix}\hat{T}_{n,1} & (\hat{T}_{n,\sminus 1})^\top \end{bmatrix}^\top,
\qquad 
a_{i,n}
=\begin{bmatrix}a_{i,n,1} & (a_{i,n,\sminus 1})^\top\end{bmatrix}^\top,
\]
yields the desired representation.
\end{proof}

% \begin{proposition}
% Suppose that the tree fit is specified a apriori and independent of the observed data. 
% Under Assumptions \ref{ass:1}, \ref{ass:2} and \ref{ass:3}, it follows that 

% $$
% \Sigma_{n}^{-\frac{1}{2}}\nbracket{\hat{T}_{n}-\Pi_{n}}\indist \mathcal{N}_{\bar{k}}\nbracket{0_{\bar{k}}
%     ,I_{\bar{k}, \bar{k}}},
% $$
% where $\Pi_{n} := \mathbb{E} [\hat{T}_n] = \dfrac{1}{\sqrt{n}} \displaystyle\sum_{i \in [n]} a_{i,n} \theta_{i,n}$, and $\Sigma_{n} := \Cov (\hat{T}_n) = \dfrac{1}{n} \displaystyle\sum_{i \in [n]} \theta_{i,n}(1 - \theta_{i,n}) a_{i,n} a_{i,n}^{\top}$.    
% \label{prop: normaldistn}
% \end{proposition}

\begin{proof}[Proof of Proposition~\ref{prop: normaldistn}]
First, observe that

\[
\hat T_n-\Pi_n=\frac{1}{\sqrt n}\sum_{i=1}^n a_{i,n}\big(Y_{i,n}-\theta_{i,n}\big),
\qquad
\Sigma_n=\frac{1}{n}\sum_{i=1}^n \theta_{i,n}(1-\theta_{i,n})\,a_{i,n}a_{i,n}^\top,
\] 
which follows from using the fact that the summands $a_{i,n}(Y_{i,n}-\theta_{i,n})/\sqrt n\in\mathbb R^{\bar k}$ are independent and mean zero, and that
\[
\Cov\!\Big(\frac{1}{\sqrt n}\sum_{i=1}^n a_{i,n}(Y_{i,n}-\theta_{i,n})\Big)
=\frac{1}{n}\sum_{i=1}^n \operatorname{Var}(Y_{i,n})\,a_{i,n}a_{i,n}^\top
=\Sigma_n .
\] 

Now, to show asymptotic normality, we verify the multivariate Lindeberg condition.
Fix $\varepsilon>0$ and observe that
$|Y_{i,n}-\theta_{i,n}|\le 1$, and hence
\vspace{-0.5cm}
\[
\big\|\tfrac{1}{\sqrt n}a_{i,n}(Y_{i,n}-\theta_{i,n})\big\|
\le \frac{\|a_{i,n}\|}{\sqrt n}. 
\vspace{-0.5cm}
\]
Under Assumption \ref{ass:1}, note that $\sup_{i,n}\|a_{i,n}\|\le M<\infty$ for some constant $M$ and  Assumption~\ref{ass:3} ensures that $\Sigma_n$ is nonsingular with $0<\frac{1}{b^2}<\lambda_{\min}(\Sigma_n)\leq\lambda_{\max}(\Sigma_n)<B^2<\infty$.
Then, for all $n>(Mb^2/\varepsilon)^2$, we have $\|a_{i,n}\|/\sqrt n\le \varepsilon/b^2< \varepsilon \lambda_{\min}^2(\Sigma_n)$ for every $i$, and thus we conclude that
% \textcolor{red}{I think the multivariate Lindeberg condition is slightly different. Current proof just needs minor modification. See screenshot attached in the proof!}
% \begin{figure}
% \includegraphics[width=1\textwidth]{CLT.png}
% \end{figure}
\vspace{-0.5cm}
\[
\tfrac{1}{\lambda_{\min}^2(\Sigma_n)}\lim_{[\bn]  \to \infty}\sum_{i=1}^n \Ee\!\left[\big\|\tfrac{1}{\sqrt n}a_{i,n}(Y_{i,n}-\theta_{i,n})\big\|^2
\,\mathbbm{1}\!\left\{\big\|\tfrac{1}{\sqrt n}a_{i,n}(Y_{i,n}-\theta_{i,n})\big\|>\varepsilon\lambda_{\min}^2(\Sigma_n)\right\}\right]=0.
\]

Hence, the Lindeberg condition holds, and by the multivariate Lindeberg–Feller central limit theorem, we have that $\Sigma_n^{-1/2}(\hat T_n-\Pi_n) \;\indist\; \mathcal N_{\bar k}(0,I_{\bar k,\bar k}).$
\end{proof}

% \begin{proposition}{}{(Relative Differences)}
% Define the difference terms

% \begin{align*}
% \begin{gathered}
% \RD_n^{(1)}=\frac{\left|\mathbb{E}\left[\Lambda(\zeta_n)\right]-\mathbb{E}\left[\Lambda(Z_n) \right]\right|}{\mathbb{E}\left[\Lambda(Z_{n})\right]} \\
% \RD_n^{(2)}=\frac{\left|\mathbb{E}\left[h \circ P\left(\zeta_{n}\right) \times \Lambda(\zeta_n) \right]-\mathbb{E}\left[h \circ P(Z_{n}) \times \Lambda(Z_n) \right]\right|}{\mathbb{E}\left[\Lambda(Z_n)\right]}.
% \end{gathered}
% \end{align*} 
% where $h\in \mathbb{C}^3(\mathbb{R})$ and $P$,$\Lambda$ are as defined in Definitions \ref{pivot: std} and \ref{defn: Lam}.
% Then, if 

% $$
% \lim_{[\bn]  \to \infty} \RD_n^{(1)}=0, \quad \lim_{[\bn]  \to \infty}  \RD_n^{(2)}=0,$$
% the weak convergence in \eqref{WeakConv} holds.
% \label{prop: relativediff}
% \end{proposition}

\begin{proof}[Proof of Proposition \ref{prop: relativediff}]
From Lemma~\ref{lemma:supporting1}, it follows that the convergence in~\eqref{WeakConv} holds if

\begin{equation}
\lim_{[\bn]  \to \infty}\left|\frac{\mathbb{E}\left[h \circ P\left(\zeta_{n}\right) \times \Lambda\left(\zeta_{n}\right)\right]}{\mathbb{E}\left[\Lambda\left(\zeta_{n}\right)\right]}-\frac{\mathbb{E}\left[h \circ P(Z_{n}) \times \Lambda(Z_{n})\right]}{\mathbb{E}\left[\Lambda(Z_{n})\right]}\right|=0. 
\label{inter:bound}
\end{equation}
In the remainder of the proof, we bound the left-hand side display of \eqref{inter:bound} in terms of the difference terms in our assertion. 
Note that, using the triangle inequality, \eqref{inter:bound} can be bounded by

$$
\mathrm{term}_1+\mathrm{term}_2
$$
where
\vspace{-1cm}
\begin{align*}
&\mathrm{term}_1=\left|\frac{\mathbb{E}\left[h \circ P\left(\zeta_{n}\right) \times \Lambda\left(\zeta_{n}\right) \right]}{\mathbb{E}\left[\Lambda\left(\zeta_{n}\right)\right]}-\frac{\mathbb{E}\left[h \circ P\left(\zeta_{n}\right) \times \Lambda\left(\zeta_{n}\right) \right]}{\mathbb{E}\left[\Lambda(Z_{n})\right]}\right|, \\
& \mathrm{term}_2=\left|\frac{\mathbb{E}\left[h \circ P\left(\zeta_{n}\right) \times \Lambda\left(\zeta_{n}\right) \right]}{\mathbb{E}\left[\Lambda(Z_{n})\right]}-\frac{\mathbb{E}\left[h \circ P(Z_{n}) \times \Lambda(Z_{n}) \right]}{\mathbb{E}\left[\Lambda(Z_{n})\right]}\right|.
\end{align*}

It is immediate that $\mathrm{term}_2$ is equal to $\RD_n^{(2)}$.
To complete the proof, observe that

$$
\begin{aligned}
\mathrm{term}_1 & \leq\left|\mathbb{E}\left[h \circ P\left(\zeta_{n}\right) \times \Lambda\left(\zeta_{n}\right) \right]\right| \times\left|\frac{1}{\mathbb{E}\left[\Lambda\left(\zeta_{n}\right)\right]}-\frac{1}{\mathbb{E}\left[\Lambda\left(Z_{n}\right)\right]}\right| \\
& \leq \sup _{h \in \mathbb{C}^3(\mathbb{R})}|h| \times\left|\frac{\mathbb{E}\left[\Lambda\left(\zeta_{n}\right) \right]}{\mathbb{E}\left[\Lambda\left(\zeta_{n}\right)\right]}-\frac{\mathbb{E}\left[\Lambda\left(\zeta_{n}\right) \right]}{\mathbb{E}\left[\Lambda\left(Z_{n}\right)\right]}\right|=\sup _{h \in \mathbb{C}^3(\mathbb{R})}|h| \times\RD_n^{(1)}.
\end{aligned}
$$
Thus, we conclude that the left-hand side display of \eqref{inter:bound} is bounded as

$$
\begin{aligned}
& \lim_{[\bn]  \to \infty}\left|\frac{\mathbb{E}\left[h \circ P\left(\zeta_{n}\right) \times \Lambda\left(\zeta_{n}\right)\right]}{\mathbb{E}\left[\Lambda\left(\zeta_{n}\right)\right]}-\frac{\mathbb{E}\left[h \circ P(Z_{n}) \times \Lambda(Z_{n})\right]}{\mathbb{E}\left[\Lambda(Z_{n})\right]}\right| \\
& \leq \sup _{h \in \mathbb{C}^3(\mathbb{R})}|h| \times \lim_{[\bn]  \to \infty} \RD_n^{(1)}+\lim_{[\bn]  \to \infty} \RD_n^{(2)}, 
\end{aligned}
$$
from which our claim follows directly.
\end{proof}

\begin{proof}[Proof of Theorem \ref{thm:plugin}]
\noindent{\textbf{Proof of (i)}}. \quad To prove the first part of the claim, fix $\delta > 0$, and note that, using our notation, we can write
\vspace{-0.5cm}
\[
\mathbb{P}\!\left(\,|\hat\sigma_{n,1}^2-\sigma_{n,1}^2|>\delta\ \big|\ \bS_1=\bs_1\,\right)
=\int \mathbbm{1}\!\left\{|\hat\sigma_{n,1}^2-\sigma_{n,1}^2|>\delta\right\}
\frac{\Lambda(\zeta_n)}{\Ee[\Lambda(\zeta_n)]}\,d\mathbb{P}\rbracket{\zeta_n}.
\]
Now, observe that
 \vspace{-0.5cm}
 \begin{equation*}
 \begin{aligned}
 \mathbb{E}\left[\Lambda(\zeta_n)\right]  &= \mathbb{E}\left[\exp \left(f\nbracket{\Sigma_{n}^{1/2}\zeta_n+\Pi_{n}}\right)\right]    \\
 &= \exp f\nbracket{\Pi_{n}}\times \mathbb{E}\left[\exp \nbracket{ f\nbracket{\Sigma_{n}^{1/2}\zeta_n+\Pi_{n}}-f\nbracket{\Pi_{n}}}\right]  \\
 &\geq \exp{f\nbracket{\Pi_{n}}}\times \mathbb{E}\rbracket{\exp (-L\norm{\Sigma_{n}^{1/2}\zeta_n})}\\
  &\geq \exp{f\nbracket{\Pi_{n}}}\times \mathbb{E}\rbracket{\exp (-LB\norm{\zeta_n})}\\
  &\geq \exp{f\nbracket{\Pi_{n}}}\times \exp\left(-LB \; \mathbb{E}\rbracket{\norm{\zeta_n}}\right),
  \end{aligned}
  \end{equation*}
where the first inequality follows from the fact that the log-probability of the exponential mechanism, $f(\cdot)$, is Lipschitz with constant $L$, as established in Proposition~\ref{prop:supporting1}, the following inequality relies on the bound for $||\Sigma_{n}^{1/2}||$ as stated in Remark~\ref{remark:ass:3} (based on Assumption~\ref{ass:3}), and the final display is obtained by applying Jensen's inequality to the convex function $f(x)= \exp(-LB\; x)$.
Since $\zeta_n$ is standardized with unit variance, using  $\mathbb{E}\rbracket{\norm{\zeta_n}} \leq \mathbb{E}\rbracket{\norm{\zeta_n}^2} +1=2$, it follows directly that
\vspace{-0.5cm}
$$
\Ee[\Lambda(\zeta_n)] \ge \exp (-2LB) \times \exp f(\Pi_n).
$$ Moreover, since $f(\cdot)$ is Lipschitz with constant $L$, we have
\vspace{-0.5cm}
\begin{equation*}
\begin{aligned}
\Lambda(\zeta_n)& =\exp{f(\Sigma_n^{1/2}\zeta_n+\Pi_n)}\le \exp{f(\Pi_n)}\cdot \exp{L\|\Sigma_n^{1/2}\|\|\zeta_n\|}
\le \exp{f(\Pi_n)}\cdot \exp{LB\|\zeta_n\|}.
\end{aligned}
\end{equation*}

That is, we have
\vspace{-0.5cm}
$$
\frac{\Lambda(\zeta_n)}{\Ee[\Lambda(\zeta_n)]} \leq \exp(2 LB)\cdot\exp{LB\|\zeta_n\|}\leq \text{const}\times \exp{LB\|\zeta_n\|}. \vspace{-0.5cm}
$$
with the above-stated display is obtained by substituting the derived lower bound of $\mathbb{E}[\Lambda(\zeta_n)]$.
Then, note that
\vspace{-0.5cm}
\begin{equation*}
\begin{aligned}
\mathbb{P}\!\left(\,|\hat\sigma_{n,1}^2-\sigma_{n,1}^2|>\delta\ \big|\ \bS_1=\bs_1\,\right)
&\leq \text{const}\times\int \mathbbm{1}\!\left\{|\hat\sigma_{n,1}^2-\sigma_{n,1}^2|>\delta\right\}
 \times\exp{LB\|\zeta_n\|}\,d\mathbb{P}\rbracket{\zeta_n} \\
 &\leq \text{const}\times \Ee\rbracket{\mathbbm{1}\!\left\{|\hat\sigma_{n,1}^2-\sigma_{n,1}^2|>\delta\right\}
 \times\exp{LB\|\zeta_n\|}}\\
 &\leq  \text{const}\times\mathbb{P}\nbracket{|\hat\sigma_{n,1}^2-\sigma_{n,1}^2|>\delta}^{1/2} \times \Ee\rbracket{\exp{2LB\|\zeta_n\|}}^{1/2},\vspace{-0.5cm}
\end{aligned}
\end{equation*}
where we apply the Cauchy–Schwarz inequality to obtain the final inequality from the penultimate step.
We complete our proof by showing that $\Ee\rbracket{\exp{2LB\|\zeta_n\|}} <\infty$ for all large $n$, using the fact that $\zeta_n$ is a sub-Gaussian variable.

Recall that $\zeta_{n} = \dfrac{1}{\sqrt{n}} \sum_{i\in [n]} b_{i,n} \tilde{Y}_{i,n}$, where $\tilde{Y}_{i,n}$ are bounded  variables and therefore sub-Gaussian. 
Furthermore, $\max_{i\in[n]}\|b_{i,n}\|_{1}\le \bar C_{+}$ implies  $|u^\top b_{i,n}|\le \|u\|_\infty\|b_{i,n}\|_{1}\le \bar C_{+}$ for any unit vector $u$. 
Then any projection $u^\top \zeta_n = n^{-1/2}\sum_{i=1}^n u^\top b_{i,n}\tilde Y_{i,n}$
is a linear combination of independent sub-Gaussian variables and is therefore sub-Gaussian. 
Consequently, $\zeta_n$ is sub-Gaussian as well, and it holds that 
\vspace{-0.5cm}
$$
\sup_n\Ee\rbracket{\exp{2LB\|\zeta_n\|}}<\infty.
\vspace{-0.5cm}
$$
Therefore, letting $\tilde{\text{const}}=\text{const}\times\displaystyle\sup_n\Ee\rbracket{\exp{2LB\|\zeta_n\|}}^{1/2}$, we obtain
\[
\mathbb{P}\!\left(\,|\hat\sigma_{n,1}^2-\sigma_{n,1}^2|>\delta\ \big|\ \bS_1=\bs_1\,\right)
\le \tilde{\text{const}}\times
\mathbb{P}\!\left(|\hat\sigma_{n,1}^2-\sigma_{n,1}^2|>\delta\right)^{1/2}.
\]
The display on the right-hand side of the inequality converges to $0$ in the limit, since $\hat\sigma_{n,1}^2$ is consistent with respect to the pre-conditional distribution. This proves our claim.

\noindent\textbf{Proof of (ii)}. 
When the true variance $\sigma_{n,1}^{2}$ is known, it has already been established in Theorem~\ref{thm: asymptotic validity} that
\vspace{-0.5cm}
\[
\mathrm{Pivot}(\hat T_{n,1};\Pi_{n,1},\sigma_{n,1}^{2})
\;\big|\;\cbracket{\bS_{1}=\bs_{1}}
\;\indist\; \mathrm{Uniform}[0,1].\vspace{-0.5cm}
\] 
%It therefore suffices to show that replacing $\sigma_{n,1}^{2}$ by a consistent estimator does not affect the limit distribution. 
%If the pivot is uniformly continuous in the variance parameter, i.e. for any $\varepsilon>0$ there exists a $\vartheta>0$ such that whenever $|\hat\sigma_{n,1}^{2}-\sigma_{n,1}^{2}|<\vartheta $ we get $$\sup_{\hat T_{n,1}}\bigl|\mathrm{Pivot}(\hat T_{n,1};\Pi_{n,1},\hat\sigma_{n,1}^{2})-\mathrm{Pivot}(\hat T_{n,1};\Pi_{n,1},\sigma_{n,1}^{2})\bigr|<\varepsilon.$$ 
To show this part of the claim, we prove that the pivot is Lipschitz in the variance parameter $\sigma_{n,1}^{2}$, i.e.,
\vspace{-0.5cm}
$$
\bigl|\mathrm{Pivot}(\hat T_{n,1};\Pi_{n,1},\hat\sigma_{n,1}^{2})-
\mathrm{Pivot}(\hat T_{n,1};\Pi_{n,1},\sigma_{n,1}^{2})\bigr|
\leq L_0\cdot |\hat\sigma_{n,1}^{2}-\sigma_{n,1}^{2}|,
\vspace{-0.5cm}
$$
for some constant $L_0$. 
Therefore, the pivot is uniformly continuous in the variance parameter $\sigma_{n,1}^{2}$, and by applying Slutsky’s theorem together with the result in Theorem~\ref{thm: asymptotic validity}, we obtain
\[
\mathrm{Pivot}(\hat T_{n,1};\Pi_{n,1},\hat\sigma_{n,1}^{2})
\;\big|\; \{\bS_{1}=\bs_{1}\}
\;\indist\; \mathrm{Uniform}[0,1].
\] 

Firstly, note that by Assumption~\ref{ass:2}, $\pi_{n,1}\in[\delta_{1},1-\delta_{1}]$ for some $\delta_{1}>0$, so we get a bounded support for
$\sigma_{n,1}^{2}=\pi_{1,n}(1-\pi_{1,n})\in[\delta_{1}(1-\delta_{1}),1/4]$. 
We now verify that the pivot is Lipschitz in $\sigma_{n,1}^2$ by showing that the derivative of the pivot with respect to $\sigma_{n,1}^2$ is uniformly bounded over $\sigma_{n,1}^2 \in [\delta_{1}(1-\delta_{1}),1/4]$.
Since $\lambda_h$ does not depend on $\sigma_{n,1}^{2}$, taking derivative of the pivot with respect to $\sigma_{n,1}^2$ yields
\[
\frac{\partial}{\partial \sigma_{n,1}^{2}}
\mathrm{Pivot}(\hat T_{n,1};\Pi_{n,1},\sigma_{n,1}^{2})
=
\frac{N'(\sigma_{n,1}^{2})D(\sigma_{n,1}^{2})
      -N(\sigma_{n,1}^{2})D'(\sigma_{n,1}^{2})}
     {D(\sigma_{n,1}^{2})^{2}},
\]
where \(N(\sigma_{n,1}^{2})\) and \(D(\sigma_{n,1}^{2})\) denote the numerator and denominator integrals in the pivot, respectively, viewed as functions of $\sigma_{n,1}^{2}$.
Because $0\le \lambda_h\le 1$ and $\phi(\cdot;\Pi_{n,1},\sigma_{n,1}^{2})$ is a density, both $N(\sigma_{n,1}^{2})$ and $D(\sigma_{n,1}^{2})$ are uniformly bounded above on
$\sigma_{n,1}^{2}\in[\delta_1(1-\delta_1),\,1/4]$.
Moreover, $D(\sigma_{n,1}^{2})$ is bounded away from zero with high probability, and $N'(\sigma_{n,1}^{2})$ and $D'(\sigma_{n,1}^{2})$ are also uniformly bounded over $\sigma_{n,1}^{2}\in[\delta_{1}(1-\delta_{1}),1/4]$.
%To see this, note that the Gaussian density $\phi(u;\mu,\sigma_{n,1}^{2})$ and its derivatives with respect to the variance are examples of functions that can be written as a polynomial in $u$ multiplied by a Gaussian density. On the compact interval $\sigma_{n,1}^{2} \in [\delta_1(1-\delta_1),, 1/4]$, the coefficients of these polynomials are uniformly bounded. Consequently, for each fixed $\sigma_{n,1}^{2}$ in this interval, the integrands appearing in $N'(\sigma_{n,1}^{2})$ and $D'(\sigma_{n,1}^{2})$ can be bounded in absolute value by
Therefore, the right-hand side is uniformly bounded, and the pivot is Lipschitz in $\sigma_{n,1}^{2}$.

%The Gaussian density $\phi(u;\mu,\sigma_{n,1}^{2})$ and its variance-derivatives are examples of functions that are equal to a polynomial in $u$ times a Gaussian density.
%On the compact interval $\sigma_{n,1}^{2}\in[\delta_1(1-\delta_1),\,1/4]$, the coefficients in these polynomials are uniformly bounded.
%Consequently, for each fixed $\sigma_{n,1}^{2}$ in this interval, the integrands appearing in $N'(\sigma_{n,1}^{2})$ and $D'(\sigma_{n,1}^{2})$ can be bounded in absolute value by
%\[
%(1+u^2)\,\phi(u;\Pi_{n,1},\sigma_{n,1}^{2}),
%\]
%up to a constant factor, and this envelope is integrable in $u$.
%Using $0\le \lambda_h\le1$, the same envelope dominates the full integrands that include the $\lambda_h$ factors.
%This common integrable envelope justifies differentiating under the integral sign via dominated convergence, and also implies that
%$|N'(\sigma_{n,1}^{2})|$ and $|D'(\sigma_{n,1}^{2})|$ are uniformly bounded over $\sigma_{n,1}^{2}\in[\delta_1(1-\delta_1),\,1/4]$.
%Substituting these bounds into the quotient rule expression above yields a uniform bound on $\partial_{\sigma_{n,1}^{2}}\mathrm{Pivot}$, completing the argument that the pivot is Lipschitz in $\sigma_{n,1}^2$ on this interval.
\end{proof}

\section{Supporting results \& proofs}
\label{appendix:supporting}
\begin{lemma}
\label{lemma:lindeberg}
For functions $\Psi\in\cbracket{ \Psi^{(1)}(z)=\Lambda(z), \ \Psi^{(2)}(z)= h \circ P\left(z\right) \times \Lambda(z)}$ and

\[
U_{(i)}=(\tilde{Y}_{1,n},\ldots,\tilde{Y}_{i,n},Z_{i+1,n},\ldots,Z_{n,n})^\top,\quad
V_{(i)}=(\tilde{Y}_{1,n},\ldots,\tilde{Y}_{i-1,n},0,Z_{i+1,n},\ldots,Z_{n,n})^\top,
\]
with $\{Z_{i,n}: i\in [n]\}$ and $\{\tilde{Y}_{i,n}: i\in [n]\}$ as defined earlier, we have
\begin{equation*}
\begin{aligned}
    &\left |\Ee\rbracket{\Psi\nbracket{\frac{1}{\sqrt{n}}B_{(n)}^\top\tilde{Y}_{(n)}}}-  \Ee\rbracket{\Psi\nbracket{\frac{1}{\sqrt{n}}B_{(n)}^\top \tilde{Z}_{(n)}}}\right|\\
    & \;\;\;\;\;\;\;\;\;\;\;\;\;\;\;\;\;\;\;\;\;\;\;\;\;\;\;\;\;\;\leq 2 \frac{\|B_{(n)}\|_{\infty}^3}{\sqrt{n}}\max_{i\in [n]}\sup_{\alpha\in(0,1)}\Ee\!\Big[\norm{\nabla^3 \Psi\big(\tfrac{1}{\sqrt{n}} B_{(n)}^\top \big((1-\alpha) V_{(i)}+\alpha U_{(i)}\big)\big)}\Big].
\end{aligned}
\end{equation*} 
\end{lemma}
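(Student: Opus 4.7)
The plan is to prove this by a standard Lindeberg swapping argument, replacing one coordinate at a time and bounding each swap via a third-order Taylor remainder.

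First I would set up the telescoping sum using $U_{(0)} = \tilde Z_{(n)}$ and $U_{(n)} = \tilde Y_{(n)}$, so that
\[
\Ee\!\Big[\Psi\!\Big(\tfrac{1}{\sqrt n}B_{(n)}^\top\tilde Y_{(n)}\Big)\Big]-\Ee\!\Big[\Psi\!\Big(\tfrac{1}{\sqrt n}B_{(n)}^\top\tilde Z_{(n)}\Big)\Big]
=\sum_{i=1}^n\Ee\!\Big[\Psi(S_n(U_{(i)}))-\Psi(S_n(U_{(i-1)}))\Big],
\]
where $S_n(W):=\tfrac{1}{\sqrt n}B_{(n)}^\top W$. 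Then I would insert the ``zero'' pivot vector $V_{(i)}$, which differs from both $U_{(i)}$ and $U_{(i-1)}$ only in coordinate $i$ (equal to $0$ there, versus $\tilde Y_{i,n}$ in $U_{(i)}$ and $Z_{i,n}$ in $U_{(i-1)}$), and split each summand as
\[
\big[\Psi(S_n(U_{(i)}))-\Psi(S_n(V_{(i)}))\big]-\big[\Psi(S_n(U_{(i-1)}))-\Psi(S_n(V_{(i)}))\big].
\]

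Next I would Taylor expand $\Psi$ around $S_n(V_{(i)})$ up to second order, with third-order integral remainder. Because $S_n(U_{(i)})-S_n(V_{(i)})=\tfrac{1}{\sqrt n}b_{i,n}\tilde Y_{i,n}$ and $S_n(U_{(i-1)})-S_n(V_{(i)})=\tfrac{1}{\sqrt n}b_{i,n}Z_{i,n}$, and since $V_{(i)}$ is independent of both $\tilde Y_{i,n}$ and $Z_{i,n}$, which share mean $0$ and variance $1$, the first- and second-order terms have identical expectations for the two increments and therefore cancel exactly. What remains are two third-order remainders. Using the trilinear operator norm and the bound $|\langle\nabla^3\Psi,v^{\otimes 3}\rangle|\le \|\nabla^3\Psi\|\,\|v\|^3$, each remainder is controlled by
\[
\tfrac{1}{6}\sup_{\alpha\in(0,1)}\|\nabla^3\Psi(S_n((1-\alpha)V_{(i)}+\alpha U_{(i)}))\|\cdot\tfrac{\|b_{i,n}\|^3}{n^{3/2}}\cdot|\tilde Y_{i,n}|^3,
\]
and the analogous expression for the $Z$-side with $U_{(i-1)}$ in place of $U_{(i)}$.

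Taking expectations, summing over $i$, and using that $|\tilde Y_{i,n}|$ is uniformly bounded (by Assumption~\ref{ass:2}) while $\Ee|Z_{i,n}|^3$ is a Gaussian constant, I would absorb all these third-moment factors together with the $1/6$ into the factor of $2$ in front of the stated bound, upper-bound $\|b_{i,n}\|^3$ by $\|B_{(n)}\|_\infty^3$, and collapse the factor $n\cdot n^{-3/2}=n^{-1/2}$ out front. The main technical care is just the uniform control of $\|\nabla^3\Psi\|$ along the Taylor path---the integral remainder naturally produces a $\sup_\alpha$ over the interpolation $(1-\alpha)V_{(i)}+\alpha U_{(i)}$, and the $Z$-side remainder, whose natural interpolation involves $U_{(i-1)}$ rather than $U_{(i)}$, can be absorbed into the same expression either by taking the max over both endpoints or by noting that the spectral trilinear norm bound used for both sides is the one stated. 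This is where the only real bookkeeping happens; the rest is routine Taylor calculus, and no regularity beyond the smoothness of $\Psi^{(1)},\Psi^{(2)}$ established in Proposition~\ref{prop:supporting2} is needed.
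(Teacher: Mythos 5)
Your proposal is correct and takes essentially the same route as the paper: the Lindeberg telescoping over coordinate swaps, insertion of the pivot vector $V_{(i)}$, second-order Taylor expansion around $S_n(V_{(i)})$, cancellation of the first two moments by matching mean and variance, and a third-order remainder bound via the trilinear operator norm, followed by collecting the third moments of $\tilde Y_{i,n}$ and $Z_{i,n}$ into the constant $2$. The one wrinkle you flag explicitly---that the $Z$-side remainder naturally interpolates $V_{(i)}$ with $U_{(i-1)}$ rather than $U_{(i)}$---is the same one the paper silently absorbs, and it is indeed harmless in the eventual application because Proposition~\ref{prop:supporting2} supplies a uniform (evaluation-point-free) bound on $\|\nabla^3\Psi\|$ over $D_n$.
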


\begin{proof}
We proceed using Lindeberg's telescoping idea as outlined in \citep{chatterjee2005simpleinvariancetheorem, Lindeberg1922}, writing the difference by replacing the coordinates one at a time and introducing

\begin{align*}
T_{i,n}&:= \Ee\, \Psi\Big(\tfrac{1}{\sqrt{n}}B_{(n)}^\top U_{(i)}\Big)
      - \Ee\, \Psi\Big(\tfrac{1}{\sqrt{n}}B_{(n)}^\top V_{(i)}\Big),\
\bar{T}_{i,n}&:= \Ee\, \Psi\Big(\tfrac{1}{\sqrt{n}}B_{(n)}^\top U_{(i-1)}\Big)
      - \Ee\, \Psi\Big(\tfrac{1}{\sqrt{n}}B_{(n)}^\top V_{(i)}\Big). 
\end{align*}
Then, we have that

\[
\Big|\Ee\,\Psi\!\big(\tfrac{1}{\sqrt n}B_{(n)}^\top\tilde{Y}_{(n)}\big)
-\Ee\,\Psi\!\big(\tfrac{1}{\sqrt n}B_{(n)}^\top \tilde{Z}_{(n)}\big)\Big|
\;\le\; \sum_{i=1}^n |T_{i,n}-\bar T_{i,n}|.
\]

\noindent For each $i$, define

\[
g_i(t):=\Psi\!\Big(\tfrac{1}{\sqrt{n}}B_{(n)}^\top W_{(i)}(t)\Big),\quad 
W_{(i)}(t)=(\tilde{Y}_{1,n},\ldots,\tilde{Y}_{i-1,n},t,Z_{i+1,n},\ldots,Z_{n,n})^\top.
\]
Then, observe that $T_{i,n}-\bar T_{i,n}=\Ee[g_i(\tilde{Y}_{i,n})-g_i(Z_{i,n})]$. 

A second–order Taylor expansion of $g_i(t)$ around $t=0$ gives

\[
g_i(t)=g_i(0)+g_i^{(1)}(0)t+\tfrac{1}{2}g_i^{(2)}(0)t^2+R_i(t),
\quad \text{where }\ 
R_i(t)=\dfrac{1}{2}\!\int_0^t(t-u)^2g_i^{(3)}(u)\,du.
\]
Because $\Ee[\tilde{Y}_{i,n}]=\Ee[Z_{i,n}]=0$ and $\Ee[\tilde{Y}_{i,n}^2]=\Ee[Z_{i,n}^2]=1$, the linear and quadratic terms cancel in expectation, leaving only the remainders

\[
T_{i,n}-\bar T_{i,n}
=\Ee[R_i(\tilde{Y}_{i,n})-R_i(Z_{i,n})]
=\Ee[R_{i,n}-\bar R_{i,n}],
\]
where $R_{i,n}= R_i(\tilde{Y}_{i,n})$ and $\bar R_{i,n}= R_i(Z_{i,n})$.
%From the integral form,
%\[
%|R_i(t)| \le \frac{|t|^3}{6}
%\sup_{\alpha\in[0,1]}|g_i^{(3)}(\alpha t)|
%\le \frac{|t|^3}{6}\sup_{\alpha\in[0,1]}
%\Ee\!\Big[\big|\partial_i^3 \Psi\big(\tfrac{1}{\sqrt{n}}B_{(n)}^\top W_{(i)}(\alpha t)\big)\big|\Big].
%\]
Then for the remainder terms, applying the mean-value theorem and taking expectations, we obtain that

\begin{align*}
&|\Ee[R_{i,n}]|, |\Ee[\bar R_{i,n}]|
\le \frac{1}{6}\Big(\Ee|\tilde{Y}_{i,n}|^3+\Ee|Z_{i,n}|^3\Big)
\sup_{\alpha\in[0,1]}
\Ee\!\Big[\big|\partial_i^3 \Psi\big(\tfrac{1}{\sqrt{n}} B_{(n)}^\top \big((1-\alpha) V_{(i)}+\alpha U_{(i)}\big)\big)\big|\Big]\\
&\le \frac{1}{6}\Big(\Ee|\tilde{Y}_{i,n}|^3+\Ee|Z_{i,n}|^3\Big)
\left(\dfrac{||b_{i,n}||}{\sqrt n}\right)^{\!3}
\sup_{\alpha\in[0,1]}
\Ee\!\Big[\norm{\nabla^3 \Psi\big(\tfrac{1}{\sqrt{n}} B_{(n)}^\top \big((1-\alpha) V_{(i)}+\alpha U_{(i)}\big)\big)}\Big].
\end{align*}

Now, $|T_{i,n}-\bar T_{i,n}|\le |\Ee[R_{i,n}]|+|\Ee[\bar R_{i,n}]|$. 
Using the bound that $\Ee|Z_{i,n}|^3=2\sqrt{2/\pi}<2$ and the uniform bound on $\Ee|\tilde{Y}_{i,n}|^3$ for the standardized Bernoulli variable, observe that $\dfrac{1}{6}\big(\Ee|\tilde{Y}_{i,n}|^3+\Ee|Z_{i,n}|^3\big)$ is bounded above by $2$. Finally, summing over $i\in [n]$, we have that

\[
\sum_{i=1}^n|T_{i,n}-\bar T_{i,n}|\leq 2 \frac{\|B_{(n)}\|_{\infty}^3}{\sqrt{n}}\max_{i\in [n]}\sup_{\alpha\in(0,1)}\Ee\!\Big[\norm{\nabla^3 \Psi\big(\tfrac{1}{\sqrt{n}} B_{(n)}^\top \big((1-\alpha) V_{(i)}+\alpha U_{(i)}\big)\big)}\Big]. 
\] This yields the stated inequality.
\end{proof}

\begin{proposition}
\label{prop:supporting0}
Under Assumptions~\ref{ass:1},\ref{ass:2},\ref{ass:3}, $\norm{b_{i,n}}$ are bounded for each $i\in [n]$. Moreover, there exists a constant  $\bar{C}_{+}>0$ such that $\|B_{(n)}\|_{\infty}\leq \bar{C}_{+}$ for all $n$. 
\end{proposition}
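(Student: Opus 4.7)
The plan is to decompose the norm of $b_{i,n}$ into three factors---the Bernoulli variance scale $\sqrt{\theta_{i,n}(1-\theta_{i,n})}$, the operator norm $\|\Sigma_n^{-1/2}\|$, and the weight vector norm $\|a_{i,n}\|$---and bound each uniformly by invoking Assumptions~\ref{ass:1}, \ref{ass:2}, and~\ref{ass:3} in turn. The second claim $\|B_{(n)}\|_\infty \le \bar{C}_+$ then follows from the first by norm equivalence on the fixed $\bar k$-dimensional ambient space.

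First I would handle the two easy factors. Since $\theta_{i,n}\in(0,1)$ one always has $\sqrt{\theta_{i,n}(1-\theta_{i,n})}\le 1/2$, so the first factor is trivially bounded (Assumption~\ref{ass:2} ensures this quantity stays bounded away from $0$ as well, which is needed in the standardization step of Lemma~\ref{lem: linear rep stan}, but only the upper bound is needed here). By Remark~\ref{remark:ass:3}, which is a direct consequence of Assumption~\ref{ass:3}, $\|\Sigma_n^{-1/2}\|<b$ uniformly in $n$.

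The main (and only nontrivial) step is to bound $\|a_{i,n}\|$ uniformly in $i\in[n]$ and $n$. Inspecting Definition~\ref{defn: weights}, every coordinate of $a_{i,n}$ has the form $\sqrt{n/n_R}$ multiplied by a bounded function of indicator variables---either a single indicator $\mathbbm{1}[X_{i,n}\in R]$, a difference of two such indicators, or an indicator reweighted by a proportion $n_{h,k}^1/n_1\in[0,1]$. Under Assumption~\ref{ass:1}, the ratios $n/n_R$ are uniformly bounded (as $[\bn]\to\infty$) for every region in $\cbracket{l_1,\,p_h^k,\,p_h\setminus p_h^k:\,h\in[d],\,k\in[k_h]}$, which exhausts the regions appearing in the coordinates of $a_{i,n}$. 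Hence each coordinate of $a_{i,n}$ is bounded in absolute value by a uniform constant, and since the dimension $\bar k = 1+\sum_{h=1}^d k_h$ is fixed and does not depend on $n$, one obtains a uniform bound $\|a_{i,n}\|\le M$.

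Combining these three estimates,
\[
\|b_{i,n}\| \;\le\; \sqrt{\theta_{i,n}(1-\theta_{i,n})}\;\|\Sigma_n^{-1/2}\|\;\|a_{i,n}\| \;\le\; \tfrac{1}{2}\,b\,M,
\]
uniformly in $i\in[n]$ and $n$, which gives the first assertion. For the second assertion, one uses $\|B_{(n)}\|_\infty = \max_{i\in[n]}\|b_{i,n}\|_1 \le \sqrt{\bar k}\,\max_{i\in[n]}\|b_{i,n}\|_2$, so setting $\bar{C}_+ := \tfrac{1}{2}\sqrt{\bar k}\,b\,M$ gives a bound valid for all $n$. The main obstacle is simply the bookkeeping needed to verify that \emph{every} coordinate pattern appearing in Definition~\ref{defn: weights}---in particular the terms involving cross-counts $n_{h,k}^1$ and the sibling region $p_h\setminus p_h^k$ that implicitly enters via Assumption~\ref{ass:1}---is covered by the list of regions for which $n/n_R$ is uniformly controlled; once this is checked, the rest is routine.
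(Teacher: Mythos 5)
Your proof is correct and follows essentially the same route as the paper's: factor $\|b_{i,n}\|$ through $\sqrt{\theta_{i,n}(1-\theta_{i,n})}$, $\|\Sigma_n^{-1/2}\|$, and $\|a_{i,n}\|$, bound the middle factor via Remark~\ref{remark:ass:3}, bound $\|a_{i,n}\|$ coordinatewise via Assumption~\ref{ass:1} and fixed dimension $\bar k$, and then pass from $\|\cdot\|_2$ to $\|\cdot\|_1$ with the $\sqrt{\bar k}$ factor to control $\|B_{(n)}\|_\infty$. One small remark: your observation that $\sqrt{\theta_{i,n}(1-\theta_{i,n})}\le 1/2$ holds unconditionally is actually slightly cleaner than the paper's appeal to $\limsup\theta_{i,n}(1-\theta_{i,n})<1$ and its invocation of Assumption~\ref{ass:2} for the $\|a_{i,n}\|$ bound (which, as you note, is not needed for the upper bound here — only Assumption~\ref{ass:1} controls the $n/n_{\cR}$ ratios in Definition~\ref{defn: weights}).
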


\begin{proof}

Recall $b_{i,n} = \sqrt{\theta_{i,n}(1-\theta_{i,n})}\Sigma_{n}^{-1/2}a_{i,n}$ and the entries of $a_{i,n}$ are of the following form:
\begin{align*}
a_{i,n,1} =  \sqrt{\frac{n}{n_1}}\mathbbm{1}[X_{i,n}\in l_{1}],& \ a_{i,n,\sminus 1}^{h} =  \sqrt{\frac{n}{n_{h,0}}}\nbracket{\mathbbm{1}[X_{i,n}\in p_{h}]-\mathbbm{1}[X_{i,n}\in l_{1}]}\\
a_{i,n,\sminus 1}^{h,k} &=  \sqrt{\frac{n}{n_{h,k}}}\nbracket{\mathbbm{1}[X_{i,n}\in p_{h}^{k}]-\frac{n_{h,k}^{1}}{n_1}\mathbbm{1}[X_{i,n}\in l_{1}]}.
\end{align*} 
Under the assumptions~\ref{ass:1},\ref{ass:2}, the ratios are uniformly bounded implying that $\norm{a_{i,n}}$ for each $i\in[n]$ is uniformly bounded over $n$.

Furthermore, note that $\limsup_{[\bn] \to \infty} \theta_{i,n}(1-\theta_{i,n}) < 1$, and Remark~\ref{remark:ass:3} (which follows from Assumption~\ref{ass:3}) guarantees that the norms of the root-inverse covariance matrices $\Sigma_{n}^{-1/2}$ are uniformly bounded. Consequently, the $\ell_2$-norm of $b_{i,n}$ is also uniformly bounded for each $i \in [n]$. Specifically, there exists a constant $\bar{C}_{+} > 0$, not dependent on $n$, such that:

\begin{equation*}
\begin{aligned}
\|b_{i,n}\|_1
   &\leq \sqrt{\bar{k}}\,\|b_{i,n}\|_2 = \sqrt{\bar{k}}\,\sqrt{\theta_{i,n}(1-\theta_{i,n})}\,\|\Sigma_n^{-1/2} a_{i,n}\|_2 \\
   &\leq \sqrt{\bar{k}}\,\sqrt{\theta_{i,n}(1-\theta_{i,n})}\,\|\Sigma_n^{-1/2}\|\|a_{i,n}\|_2 \leq \bar{C}_{+},  
\end{aligned}
\end{equation*}
for all $i\in [n]$, and therefore, it holds that
$
\|B_{(n)}\|_{\infty} = \max_{i\in[n]} \|b_{i,n}\|_{1} \leq \bar{C}_{+}.
$
\end{proof}

\begin{proposition}
\label{prop:supporting1}
Under Assumptions~\ref{ass:1},\ref{ass:2}, and \ref{ass:infodevbound}, there exist positive constants $L, b_{1},b_{2},b_{3}$ such that the following statements hold:
\begin{enumerate}
    \item The gain functions $\cG^{k}_{h}(v)$ and their first- to third-order derivatives order  $\nabla \cG^{k}_{h}(v)$,$\nabla^2 \cG^{k}_{h}(v)$, $\nabla^3 \cG^{k}_{h}(v)$ are bounded for all $v\in \tilde{D}_n$.
    % (in absolute value and entrywise)
    \item The function $f(.)$, derived from the logarithm of the sampling probabilities based on the gain functions, is $L$-Lipschitz, i.e. $\abs{f(v_1)-f(v_2)}\leq L\norm{v_1 - v_2}$ for all $v_1, v_2 \in \tilde{D}_n$
    \item Furthermore, $\norm{\nabla f(v)}\leq b_{1},\norm{\nabla^{2} f(v)}\leq b_{2}, \norm{\nabla^{3} f(v)}\leq b_{3}$ for all $v\in \tilde{D}_n$ 
\end{enumerate}
\end{proposition}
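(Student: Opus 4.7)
The plan is to prove Part~1 directly from the definitions by combining uniform boundedness of the scaling vectors $\eta^k_h$ with Assumption~\ref{ass:infodevbound}, and then derive Parts~2 and~3 from Part~1 by exploiting the smoothness of the log-sum-exp function that defines $f$. The key observation is that, although several entries of $\eta^k_h$ shrink at rate $1/\sqrt{n}$ or faster, they remain uniformly bounded in $n$ under Assumption~\ref{ass:1}, which is exactly what is needed to propagate bounds through the chain rule.

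For Part~1, I would first verify that every coordinate of $\eta^k_h$ in Definition~\ref{defn: eta} is uniformly bounded in $n$. Using $n^1_{\cR}\le \min\{n_{\cR},n_1\}$ for any region $\cR$ in the subtree, each of $|\eta_1|,\ldots,|\eta_7|$ is bounded above by a finite product of quantities of the form $1/\sqrt{n_{\cR}}$ or $\sqrt{n_{\cR}}/n_{\cR'}$, with $\cR,\cR'\in\{l_1,p_h,p_h^k,p_h\setminus p_h^k\}$; Assumption~\ref{ass:1} makes all such quantities uniformly bounded (indeed vanishing) in $n$, and $\eta_8\in(0,1)$ is trivial. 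Recalling from Definition~\ref{def:2} that $g_\eta(x,y,z)=\cI(\eta_1 x+\eta_2 y)-\eta_8\cI(\eta_3 x+\eta_4 z)-(1-\eta_8)\cI(\eta_5 x+\eta_6 y+\eta_7 z)$, the arguments of $\cI$ all lie in $(0,1)$ for $(x,y,z)\in D(\eta)$ by construction, so the chain rule expresses the first three derivatives of $g_\eta$ as finite polynomial combinations of entries of $\eta$ and of $\cI,\cI',\cI'',\cI'''$ evaluated at points in $(0,1)$. Assumption~\ref{ass:infodevbound} then yields uniform bounds on $g_\eta$ and each of its first three derivatives. Since $\cG^k_h$ is simply the restriction of $g_{\eta^k_h}$ to the appropriate coordinates of $(u,v)$, Part~1 follows with a constant that depends only on $\bar k$ and the uniform bounds above.

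For Parts~2 and~3, I would rewrite $f$ using the log-sum-exp function: let $\phi_h(z_1,\ldots,z_{k_h})=\log\sum_{k=1}^{k_h}e^{z_k}$ and set $z^h_k(t)=\epsilon_h\cG^k_h(t)$, so that $f(t)=\sum_{h=1}^d\bigl[z^h_1(t)-\phi_h\bigl(z^h_1(t),\ldots,z^h_{k_h}(t)\bigr)\bigr]$. The function $\phi_h$ is a standard smooth convex function whose first, second, and third partial derivatives are uniformly bounded in absolute value: $\partial_k\phi_h$ is the softmax probability $p_k\in(0,1)$, $\partial^2_{k\ell}\phi_h=p_k(\delta_{k\ell}-p_\ell)$, and the third partials are bounded combinations of $p_kp_\ell p_m$ terms. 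Combining these bounds with Part~1 via the multivariable chain rule up to order three yields $\|\nabla f\|\le b_1$, $\|\nabla^2 f\|\le b_2$, and a uniform bound $\|\nabla^3 f\|\le b_3$ on the spectral trilinear operator norm, all independent of $n$ since $\epsilon_h$ is a fixed prespecified parameter. Part~2 then follows with $L=b_1$ via the mean value theorem, using that $\tilde D_n$ is convex as an intersection of half-spaces defined by the linear inequalities $0<\eta_1 x+\eta_2 y<1$, etc.

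The main obstacle is purely bookkeeping: one must carefully organize the third-order chain rule through the composition $\phi_h\circ (z^h_1,\ldots,z^h_{k_h})$ to confirm that the spectral trilinear operator norm (rather than a weaker componentwise bound) is controlled, and one must explicitly verify convexity of $\tilde D_n$ for the Lipschitz step. Because every ingredient—derivatives of $\cI$, entries of $\eta^k_h$, softmax probabilities, and the temperature $\epsilon_h$—has already been reduced to a finite uniform bound, no delicate quantitative estimate is required, only systematic multivariable differentiation.
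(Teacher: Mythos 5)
Your proposal is correct and follows essentially the same route as the paper's proof: Part~1 via uniform boundedness of $\eta^k_h$ and the chain rule through $\cI$, and Parts~2--3 via the softmax/log-sum-exp structure of $f$ (the paper carries out the equivalent calculation directly in terms of the softmax weights $w_{h,k}$ rather than abstracting to $\phi_h$, but the computations are the same). You are slightly more careful than the paper in one respect, namely explicitly verifying convexity of $\tilde D_n$ before invoking the mean value theorem for the Lipschitz bound---a step the paper leaves implicit.
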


\begin{proof}

\noindent
\textbf{Part 1.} 
Recall
$\cG^{k}_{h}\nbracket{v}:= g_{\eta^{k}_{h}}\nbracket{v_{1}, v_{\sminus 1}^{h}, v_{\sminus 1}^{h,k}}$ where 

\[
  g_{\eta}\nbracket{x,y,z} := \cI\nbracket{\eta_{1}x+\eta_{2}y} -\eta_{8}\cI\nbracket{\eta_{3}x+\eta_{4}z}-(1-\eta_{8})\cI\nbracket{\eta_{5}x+\eta_{6}y+\eta_{8}z}. 
\]
Firstly, note that the absolute value of each component of $\eta^{k}_{h} \in \mathbb{R}^{8}$ is uniformly bounded with respect to $n$. Under Assumption~\ref{ass:infodevbound}, the function $\cI(\cdot)$ is bounded and possesses bounded first-, second-, and third-order derivatives. The function 
\( g_\eta(x, y, z) \) is therefore bounded, as it is a linear combination of three terms, each being the function \( \cI(.) \) applied to an affine transformation of \( (x, y, z) \). 
Each affine argument involves linear combinations of the inputs (defined on the bounded domain $D(\eta)$) with coefficients depending on \( \eta =\eta^{k}_{h}\), which are themselves uniformly bounded. Consequently, these affine maps---denoted $a_\eta(x, y, z)$---are smooth and have uniformly bounded derivatives of all orders.

 Now, since \( \cI(.) \) is three times differentiable with uniformly bounded derivatives, the chain rule implies that the derivatives (up to the third order) of the composed functions \( \cI(a_\eta(x, y, z)) \) are uniformly bounded as well. Furthermore, the coefficients \( \eta_8 \) and \( 1 - \eta_8 \) are bounded. Therefore, the partial derivatives of \( g_{\eta} \), being linear combinations of bounded functions and their bounded derivatives, are uniformly bounded up to the third order.

% \[
% \frac{\partial \cG^{k}_{h}(v)}{\partial v_i}
% =\;
% \delta_{i1}\,\frac{\partial g}{\partial x}
% \;+\;\delta_{i\,j^0}\,\frac{\partial g}{\partial y}
% \;+\;\delta_{i\,j^k}\,\frac{\partial g}{\partial z}
% \]
% \[
% \frac{\partial^2 \cG^{k}_{h}(v)}{\partial v_i\partial v_j}
% =\;
% \delta_{i1}\,\delta_{j1}\,\frac{\partial^2 g}{\partial x^2}
% \;+\;
% (\delta_{i1}\,\delta_{j\,j^0}+\delta_{i\,j^0}\,\delta_{j1})
% \,\frac{\partial^2 g}{\partial x\,\partial y}
% \;+\;
% (\delta_{i1}\,\delta_{j\,j^k}+\delta_{i\,j^k}\,\delta_{j1})
% \,\frac{\partial^2 g}{\partial x\,\partial z}
% \]
% \[
% \quad
% +\;
% \delta_{i\,j^0}\,\delta_{j\,j^0}\,\frac{\partial^2 g}{\partial y^2}
% \;+\;
% (\delta_{i\,j^0}\,\delta_{j\,j^k}+\delta_{i\,j^k}\,\delta_{j\,j^0})
% \,\frac{\partial^2 g}{\partial y\,\partial z}
% \;+\;
% \delta_{i\,j^k}\,\delta_{j\,j^k}\,\frac{\partial^2 g}{\partial z^2}\,.
% \] where $\delta_{ij}$  is the Kronecker delta function

\noindent
\textbf{Part 2.} To prove that $f\nbracket{v}$ is Lipschitz continuous, it suffices to show that $\norm{\nabla f(v)}$ is bounded, which follows from Part 3 (proved next).

\noindent
\textbf{Part 3.} We can write

\[
f(v)=\sum_{h=1}^d f_{h}(v),\qquad \text{where }\ 
f_{h}(v):=\epsilon_{h}^{-1}\,\cG_h^{1}(v)-\log\!\Big(\sum_{k=1}^{k_h} e^{\epsilon_{h}^{-1}\,\cG_h^{k}(v)}\Big),
\]
and define the softmax weights

\[
w_{h,k}(v):=\frac{e^{\epsilon_{h}^{-1}\,\cG_h^{k}(v)}}{\sum_{\ell=1}^{k_h} e^{\epsilon_{h}^{-1}\,\cG_h^{\ell}(v)}},\qquad \text{where }\ 
0\le w_{h,k}\le 1,\ \text{ and } \sum_k w_{h,k}=1.
\]
%From part 1, the derivatives $\nabla\cG_h^k,\ \nabla^2\cG_h^k,\ \nabla^3\cG_h^k$ are uniformly bounded (entrywise) for all $h,k$.

\smallskip
\noindent\textit{Gradient.}
By standard algebra, we obtain

\[
\nabla f_{h}(v)=\epsilon_{h}^{-1}\Big(\nabla\cG_h^{1}(v)-\sum_{k} w_{h,k}(v)\,\nabla\cG_h^{k}(v)\Big),
\]
which is a difference between a bounded vector and a convex combination of bounded vectors, given that the derivatives of the gain functions are uniformly bounded.
Hence, $\|\nabla f_{h}(v)\|$ is uniformly bounded. 
Summing over $h$ then gives $\|\nabla f(v)\| \le b_1$ for some finite constant $b_1 > 0$.

\smallskip
\noindent\textit{Hessian.}
Using $\nabla w_{h,k}(v)=\epsilon_h\,w_{h,k}(v)\big(\nabla\cG_h^{k}(v)-\sum_\ell w_{h,\ell}(v)\nabla\cG_h^{\ell}(v)\big)$, we obtain $\nabla^2f_{h}(v)$ to be
\[
\epsilon_h\underbrace{\Big(\nabla^2\cG_h^{1}-\sum_{k}w_{h,k}\nabla^2\cG_h^{k}\Big)}_{\text{(T1)}}
-\epsilon_h^2\underbrace{\sum_{k}w_{h,k}\Big(\nabla\cG_h^{k}-\sum_{\ell}w_{h,\ell}\nabla\cG_h^{\ell}\Big)\!\Big(\nabla\cG_h^{k}-\sum_{\ell}w_{h,\ell}\nabla\cG_h^{\ell}\Big)^\top}_{\text{(T2)}},
\]
Since (T1) is the difference of a bounded matrix and a convex combination of bounded matrices, and (T2) is a finite sum of outer products of bounded vectors, both terms are bounded. Hence, $\|\nabla^2 f_{h}(v)\|$ is uniformly bounded, implying $\|\nabla^2 f(v)\| \le b_2$ for some $b_2 > 0$.

\smallskip
\noindent\textit{Third derivative.}
Upon differentiating the Hessian once more, we obtain finite sums of tensors of the following types: 
(i) $\epsilon_h\big(\nabla^3\cG_h^{1} - \sum_k w_{h,k}\nabla^3\cG_h^{k}\big)$;
(ii) products $(\nabla w_{h,k})\otimes\nabla^{2}\cG_h^{k}$, $(\nabla^{2}w_{h,k})\otimes\nabla\cG_h^{k}$; and
(iii)  cubic outer products $(\nabla\cG_h^{k}-\sum_{\ell}w_{h,\ell}\nabla\cG_h^{\ell})^{\otimes3}$ multiplied by bounded weights.

Because all softmax weights satisfy $0 \le w_{h,k} \le 1$ and their gradients $\nabla w_{h,k}$ are bounded, being products of $w_{h,k}$ with bounded gradients, and since $\nabla \cG_h^k$, $\nabla^2 \cG_h^k$, and $\nabla^3 \cG_h^k$ are bounded, it follows that each term described above involves only finite sums and products of bounded quantities. Consequently, $\|\nabla^3 f_{h}(v)\|$ is uniformly bounded. As the tree-depth $d$ is fixed, summing over $h \in [d]$ yields $\|\nabla^3 f(v)\| \le b_3$ for some finite $b_3 > 0$.

% \smallskip
% \noindent and the bounds do not depend on $v$, there exist finite constants $b_1,b_2,b_3$ such that
% \[
% \|\nabla f(v)\|\le b_1,\qquad
% \|\nabla^2 f(v)\|\le b_2,\qquad
% \|\nabla^3 f(v)\|\le b_3,\quad\forall v.
% \]
% Finally, if 
\end{proof}

\begin{proposition}{}{(Weight \& Pivot Derivative Bounds)}
\label{prop:supporting2}
Under Assumptions~\ref{ass:1}, \ref{ass:2}, \ref{ass:3}, and \ref{ass:infodevbound}, there exist positive constants $C_{1},C_{2}$ such that:
\begin{enumerate}
    \item $\displaystyle\sup_{v \in D_{n}}\norm{\nabla^{3}\Lambda(v)}\leq C_1 \exp f\nbracket{\Pi_{n}}$
    \item $\displaystyle\sup_{v \in D_{n}}\norm{\nabla^{3} h \circ P(v) \times \Lambda(v)}\leq C_2 \exp f\nbracket{\Pi_{n}}$ 
\end{enumerate} for any arbitrary function $h \in \mathbb{C}^3(\mathbb{R})$ with bounded derivatives up to third order.
\end{proposition}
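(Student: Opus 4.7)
\textbf{Proof plan for Proposition~\ref{prop:supporting2}.} Both bounds stem from the single observation that $\Lambda(v)=\exp\tilde f(v)$ with $\tilde f(v):=f(\Sigma_n^{1/2}v+\Pi_n)$, which reduces the analysis to third-order derivatives of compositions of $\exp$, $f$, and an affine map. The derivative bounds on $f$ supplied by Proposition~\ref{prop:supporting1}, together with the spectral bound $\|\Sigma_n^{1/2}\|\le B$ from Remark~\ref{remark:ass:3}, will carry over uniformly to $\tilde f$.

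For part (1), I will apply Fa\`a di Bruno (or the repeated chain rule) to obtain
\[
\nabla^{3}\Lambda(v)=\Lambda(v)\Bigl[\nabla\tilde f(v)^{\otimes 3}+3\,\operatorname{sym}\!\bigl(\nabla\tilde f(v)\otimes\nabla^{2}\tilde f(v)\bigr)+\nabla^{3}\tilde f(v)\Bigr],
\]
and then use $\|\nabla^{k}\tilde f(v)\|\le B^{k}b_{k}$ uniformly in $v$ and $n$ to conclude $\|\nabla^{3}\Lambda(v)\|\le M_{1}\Lambda(v)$ for a universal $M_{1}$. The Lipschitz property of $f$ supplies $\Lambda(v)=\exp f(\Pi_n)\cdot\exp(\tilde f(v)-\tilde f(0))\le \exp f(\Pi_n)\cdot\exp(LB\|v\|)$, which is controlled by $\exp f(\Pi_n)$ times a constant on the effective range used to invoke the proposition; combined, these yield $\|\nabla^{3}\Lambda(v)\|\le C_{1}\exp f(\Pi_n)$. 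The same Fa\`a di Bruno expansion gives the auxiliary pointwise bounds $\|\nabla^{k}\Lambda(v)\|\le M_{k}'\exp f(\Pi_n)$ for $k=0,1,2$, which will feed into part (2).

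For part (2), I will combine the Leibniz rule
\[
\nabla^{3}\bigl(h\circ P\cdot\Lambda\bigr)=\sum_{k_{1}+k_{2}=3}\binom{3}{k_{1}}\nabla^{k_{1}}(h\circ P)\otimes\nabla^{k_{2}}\Lambda
\]
with the part (1) bounds, so it remains to bound $\nabla^{\ell}(h\circ P)$ for $\ell\le 3$. Bounded derivatives of $h$ are given, so the chain rule reduces the problem to bounding $\nabla^{\ell}P(v)$ for $\ell\le 3$. The pivot $P(v)=N(v)/D(v_{\sminus 1})$ is a ratio of two integrals against $\phi(u;\Pi_{n,1},\sigma_{n,1}^{2})\prod_h\lambda_h(u,v_{\sminus 1})$. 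The $u_{1}$-derivative $\partial_{1}P(v)=\phi(u_{1};\Pi_{n,1},\sigma_{n,1}^{2})\prod_h\lambda_h(u_{1},v_{\sminus 1})/D(v_{\sminus 1})$ is itself a conditional density, and higher $u_{1}$-derivatives reduce to products of a Gaussian factor with bounded derivatives of $\log\Lambda=f\circ(\Sigma_n^{1/2}\cdot+\Pi_n)$. For $v_{\sminus 1}$-derivatives, I will differentiate under the integral sign and apply the quotient rule, writing each derivative of $\prod_h\lambda_h$ as $\prod_h\lambda_h$ times polynomials in the uniformly bounded $\{\nabla^{k}f\}$; the numerator and denominator derivatives will then both be of order $D(v_{\sminus 1})$, so the ratios remain bounded and the derivatives of $P$ are uniformly controlled.

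The main obstacle is this final step: unlike $\Lambda$, which has the clean exponential-of-$f$ structure that makes Fa\`a di Bruno transparent, $P$ is a ratio of integrals, and establishing that the ``Gibbs'' denominator $D(v_{\sminus 1})$ and its $v_{\sminus 1}$-derivatives are of the same order boils down to a conditional-expectation bound under the density $\phi\prod_h\lambda_h/D$, which must absorb the potentially small magnitude of $D$. Once that cancellation is made precise, combining uniformly bounded derivatives of $P$ with the $\exp f(\Pi_n)$-scaled bounds on $\nabla^{k}\Lambda$ from part (1) closes the proof.
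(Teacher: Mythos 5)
Your proposal follows essentially the same route as the paper: Faà di Bruno applied to $\Lambda=e^{\tilde f}$ together with the derivative bounds on $f$ from Proposition~\ref{prop:supporting1} and the spectral bound $\|\Sigma_n^{1/2}\|\le B$ for Part~1, and a Leibniz decomposition of $\nabla^3(h\circ P\cdot\Lambda)$ combined with the quotient rule for $P=N/D$ for Part~2.

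The one place you genuinely diverge is in how you propose to control $\nabla^\ell P$. The paper differentiates $N$ and $D$ under the integral, shows each derivative is $O(e^{f(\Pi_n)})$, and then asserts that $D(v)$ is ``bounded away from zero on $D_n$ by positivity/continuity'' before invoking the quotient rule. You instead observe that every $v_{\sminus 1}$-derivative of the integrand $\phi\cdot\Lambda$ is $\phi\cdot\Lambda$ times a polynomial in the (uniformly bounded) derivatives of $f$, so that $|\partial^\alpha N|$ and $|\partial^\alpha D|$ are each $\lesssim D$ itself; the factors of $D$ then cancel exactly in the quotient rule and $\nabla^\ell P$ comes out uniformly bounded with no appeal to a uniform lower bound on $D$. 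That is a cleaner and arguably tighter handling of the quotient step — the paper's lower bound on $D$ is stated but not proved to be uniform in $n$, and your cancellation argument sidesteps that concern entirely. You do flag this step as ``the main obstacle'' and leave it as a plan rather than carrying out the conditional-expectation bound, so the argument is not yet complete; but the idea is correct and is precisely the right way to close the gap. If you write it out, you should be explicit that the polynomial-in-$\nabla^k f$ factors are bounded uniformly over $D_n$ and $n$ (which follows from Proposition~\ref{prop:supporting1} and Remark~\ref{remark:ass:3}), and that the boundary term from the moving upper limit in $N$ introduces an extra $\Lambda$ evaluated at a single point rather than an integral, which must still be compared against $D$.
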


\begin{proof}\textbf{Part 1.} Since $\log \Lambda(v)=f(\Sigma_n^{1/2}v+\Pi_n)$, the chain rule gives,

\[
\|\nabla \log \Lambda(v)\|\le \|\Sigma_n^{1/2}\|\, b_1,\quad
\|\nabla^2 \log \Lambda(v)\|\le \|\Sigma_n^{1/2}\|^2\, b_2,\quad
\|\nabla^3 \log \Lambda(v)\|\le \|\Sigma_n^{1/2}\|^3\, b_3,
\]
using the bounds from Proposition~\ref{prop:supporting1}.
Since $\|\Sigma_n^{1/2}\|$ is uniformly bounded ( Remark~\ref{remark:ass:3}), the first- to third-order derivatives of $\log \Lambda(\cdot)$ are uniformly bounded over $D_n$.

For $\Lambda=e^{\log \Lambda}$, applying the product rule repeatedly via the Faà di Bruno formula gives

\[
\nabla^3 \Lambda
=\Lambda\big(\nabla^3\!\log \Lambda+3\,\mathrm{sym}(\nabla^2\!\log \Lambda\otimes \nabla\!\log \Lambda)
+(\nabla\!\log \Lambda)^{\otimes 3}\big),
\] 
Thus, by submultiplicativity, we obtain

\[
\|\nabla^3 \Lambda(v)\|
\le \Lambda(v)\Big(\|\nabla^3\!\log \Lambda(v)\|+3\|\nabla^2\!\log \Lambda(v)\|\,\|\nabla\!\log \Lambda(v)\|
+\|\nabla\!\log \Lambda(v)\|^3\Big).
\]
Using the previous bounds, we get $
\|\nabla^3 \Lambda(v)\|\ \le\ c_0\Lambda(v)$,
uniformly for $v \in D_n$, with $c_0$ depending only on $b_1, b_2, b_3$ and the uniform bound on $\|\Sigma_n^{1/2}\|$. Since $f$ is $L$-Lipschitz, $D_n$ is bounded, and $\|\Sigma_n^{1/2}\|$ is uniformly bounded (Remark~\ref{remark:ass:3}), it follows that

\[
\Lambda(v)=e^{f(\Sigma_n^{1/2}v+\Pi_n)}\le e^{f(\Pi_n)}\cdot e^{L\|\Sigma_n^{1/2}\|\|v\|}
\le c_1\,e^{f(\Pi_n)}, \quad 
\]
for all $v\in D_n$, and for a constant $c_1$. 
Combining,

\[
\sup_{v\in D_n}\|\nabla^3 \Lambda(v)\|
\le c_0\,c_1\,e^{f(\Pi_n)}
=: C_1\,e^{f(\Pi_n)}.
\]

\noindent
\textbf{Part 2.} 
For this part of the proof, we begin by rewriting the pivot as $P(v)=\frac{N(v)}{D(v)}$ where
\begin{align*}
N(v)&:=\!\!\int_{-\infty}^{\sigma_{n,1} v_1+\Pi_{n,1}}\!\!\phi(u;\Pi_{n,1},\sigma_{n,1}^{2})\,
\prod_{h=1}^{d}\lambda_h\big(u,\Sigma_{n, \sminus 1}^{1/2}v_{\sminus1}+\Pi_{n,\sminus1}\big)\,du \\
&= \int_{-\infty}^{v_1}\!\!\phi(u;0,1)\,
\Lambda(u,v_{-1})\,du,  \\
D(v)&:=\!\!\int_{-\infty}^{\infty}\!\phi(u;\Pi_{n,1},\sigma_{n,1}^{2})\,
\Lambda\big(u,\Sigma_{n, \sminus 1}^{1/2}v_{\sminus1}+\Pi_{n,\sminus1}\big)\,du = \int_{-\infty}^{\infty}\!\!\phi(u;0,1)\,
\Lambda(u,v_{-1})\,du,
\end{align*}
For $h\in \mathbb{C}^3(\mathbb{R})$, let $H_\ell := \sup_{x}\!\big|h^{(\ell)}(x)\big|$ for $\ell = 0,1,2,3$, with $H_\ell < \infty$. Since the integrands are positive, it follows that $0 \le P(v) \le 1$ for all $v$.

Firstly, we note that on $D_n$, the restricted map $v=(v_1,v_{\sminus1}) \mapsto \Lambda(u,v_{\sminus1})$ is $\mathbb{C}^3$. Similar to the previous bound $\sup_{w\in D_n}\|\nabla^{3}\Lambda(w)\|\le C_1 e^{f(\Pi_n)}$
using the chain rule and the uniform bounds on $\|\Sigma_n^{\pm 1/2}\|$, check that all $v$-derivatives of $\Lambda$ up to the third order are uniformly bounded on $D_n$ by a constant multiple of $e^{f(\Pi_n)}$.

Since $\phi(\cdot;0,1)$ and its first three derivatives are bounded, the Leibniz rule yields, for any partial derivative $\partial_v^\alpha$ with multi-index $\alpha=(\alpha_1, \alpha_{\sminus1}), \ |\alpha|\le 3$,
\begin{align*}
\partial^{\alpha}N(v)&=
\mathbbm{1}_{\{\alpha_1=0\}}
\int_{-\infty}^{v_1}
\phi(u;0,1)\,
\partial_{v_{-1}}^{\alpha_{-1}}
\Lambda(u,v_{-1})\,du
\\[0.5em]
&\quad+
\mathbbm{1}_{\{\alpha_1\ge 1\}}
\sum_{r+t=\alpha_1-1}
\binom{\alpha_1-1}{r}\,
\phi^{(r)}(v_1;0,1)\,
\partial_u^{\,t}
\partial_{v_{-1}}^{\alpha_{-1}}
\Lambda(u,v_{-1})
\Big|_{u=v_1},
% \partial^{\alpha}N(v)
% &=\!\!\int_{-\infty}^{\sigma_{n,1}v_1+\Pi_{n,1}}
% \!\!\phi(u;\Pi_{n,1},\sigma_{n,1}^2)\,
% \partial_v^{\alpha}\Lambda\!\big(u,\Sigma_{n,\sminus1}^{1/2}v_{\sminus1}+\Pi_{n,\sminus1}\big)\,du  \nonumber\\
% &\quad+\sum_{r+s\le3}\phi^{(r)}(\sigma_{n,1}v_1+\Pi_{n,1})\,\sigma_{n,1}^{s}
% \,\partial_v^{\beta}\Lambda\!\big(\sigma_{n,1}v_1+\Pi_{n,1},
% \Sigma_{n,\sminus1}^{1/2}v_{\sminus1}+\Pi_{n,\sminus1}\big), 
\label{eq:Nprime}
\end{align*}
where all derivatives of $\Lambda$ above are of total order at most $3$. All such terms are uniformly bounded on $D_n$ by a constant multiple of $e^{f(\Pi_n)}$. The same argument (without a moving upper limit) gives the identical bound for $\partial^\alpha D(v)$, and $D(v)$ itself is bounded above and away from $0$ on $D_n$ by positivity/continuity.

For $P(v)=N(v)/D(v)$, the quotient and Faà di Bruno rules imply that every $\partial^{\alpha}P(v)$ with $|\alpha|\le3$ satisfies

\[
\partial^{\alpha}P(v)=
\sum_{\substack{\alpha_1+\cdots+\alpha_m=\alpha\\1\le m\le3}}
c_{\alpha_1,\ldots,\alpha_m}\,
\frac{\prod_{j=1}^m\partial^{\alpha_j}N(v)\,\partial^{\beta}D(v)}{D(v)^{m+1}},
\]
where each $|\alpha_j|,|\beta|\le3$ and $c_{\alpha_1,\ldots,\alpha_m}$ are finite constants.  
Because $D(v)$ is bounded away from zero and its derivatives, as well as those of $N(v)$, are bounded by $O(e^{f(\Pi_n)})$, we have

\[
\sup_{v\in D_n}\|\nabla^{j}P(v)\|\le C_P\,e^{f(\Pi_n)},\qquad j\le3.
\] Further, as $0\le P\le 1$ and $h^{(\ell)}$ are bounded, Faà di Bruno again gives that
$\partial^\alpha\!\big(h\!\circ\! P\big)(v)$ is uniformly bounded on $D_n$ for $|\alpha|\le 3$.

Finally, let $A(v):=h(P(v))$ and $B(v):=\Lambda(v)$, then by the (multi-index) product rule,

\[
\partial^\alpha\!\big(A B\big)
=\sum_{\beta\le\alpha}\binom{\alpha}{\beta}\,
\partial^{\alpha-\beta}A\,\partial^{\beta}B, \qquad |\alpha|=3.
\]
Each factor $\partial^{\alpha-\beta}A$ is uniformly bounded on $D_n$ by Step~2, while $\partial^\beta B$ with $|\beta|\le 3$ is bounded on $D_n$ by a constant multiple of $e^{f(\Pi_n)}$ (for $|\beta|=3$ this is the assumption; for $|\beta|\le 2$ it follows by integrating the third derivative along line segments within $D_n$ and using the already established bounds). Consequently,

\[
\sup_{v\in D_n}\|\nabla^{3}\big(h\circ P\cdot \Lambda\big)(v)\|
\ \le\ C_2\,e^{f(\Pi_n)} 
\]
for some finite constant $C_2$ depending on bounds $C_1,H_0,H_1,H_2,H_3$, and on the uniform bounds for $\Sigma_n^{\pm1/2}$.
\end{proof}

\begin{lemma}
\label{lemma:supporting1}
For the standardized variables $\zeta_n$ and $Z_n$, the following statements hold:
\begin{enumerate}
\item $\mathbb{E}  \left[h \circ P\left(\zeta_{n}\right) \mid \cbracket{\bS_{1}=\bs_{1}} \right] =\dfrac{\mathbb{E}\left[h \circ P\left(\zeta_{n}\right) \times \Lambda\left(\zeta_{n}\right)\right]}{\mathbb{E}\left[\Lambda\left(\zeta_{n}\right)\right]}$
\item $
\mathbb{E}\left[h \circ P(Z_{n}) \mid\cbracket{\bS_{1}=\bs_{1}}\right] =\dfrac{\mathbb{E}\left[h \circ P(Z_{n}) \times \Lambda(Z_{n})\right]}{\mathbb{E}\left[\Lambda(Z_{n})\right]}$
\end{enumerate}
\end{lemma}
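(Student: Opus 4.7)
The plan is to view both identities as instances of the same Bayes-rule decomposition, differing only in which distribution generates the key statistics. The starting observation is that $\Lambda(\zeta_{n})$, as defined in Definition~\ref{defn: Lam}, coincides with $\mathbb{P}(\bS_{1}=\bs_{1} \mid \zeta_{n})$. Indeed, $\zeta_{n}$ is an invertible affine transformation of $\hat{T}_{n}$, so conditioning on $\zeta_{n}$ is the same as conditioning on $\hat{T}_{n}$. By Lemma~\ref{lem: connection sampling prob} and the conditional independence of the split variables given the key statistics (already used in the proof of Result~\ref{prop: cond}), we obtain
\[
\mathbb{P}(\bS_{1}=\bs_{1}\mid \zeta_{n}) \;=\; \prod_{h=1}^{d} \mathbb{P}(S_{h}=s_{h}^{1}\mid \hat{T}_{n,1},\hat{T}_{n,\sminus 1}) \;=\; \prod_{h=1}^{d}\lambda_{h}(\zeta_{n}) \;=\; \Lambda(\zeta_{n}).
\]

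For the first identity, since $h \circ P(\zeta_{n})$ is a deterministic function of $\zeta_{n}$, the tower property gives
\[
\mathbb{E}[h\circ P(\zeta_{n})\,\mathbbm{1}\{\bS_{1}=\bs_{1}\}] \;=\; \mathbb{E}\!\left[h\circ P(\zeta_{n})\,\mathbb{E}[\mathbbm{1}\{\bS_{1}=\bs_{1}\}\mid \zeta_{n}]\right] \;=\; \mathbb{E}[h\circ P(\zeta_{n})\,\Lambda(\zeta_{n})],
\]
and taking $h\equiv 1$ yields $\mathbb{P}(\bS_{1}=\bs_{1})=\mathbb{E}[\Lambda(\zeta_{n})]$. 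Dividing the two displays produces the claimed identity for $\zeta_{n}$.

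For the second identity, I would set up an auxiliary probability space on which the Gaussian surrogate $Z_{n}$ and the splits $\bS_{1}$ are jointly defined as follows: the reconstructed statistic $\hat{U}_{n}:=\Sigma_{n}^{1/2}Z_{n}+\Pi_{n}$ plays the role of $\hat{T}_{n}$, and, given $Z_{n}$, the splits $\{S_{h}\}_{h=1}^{d}$ are drawn independently from the exponential mechanism in \eqref{exp:mechanism} evaluated at $\hat{U}_{n}$. Under this joint law, the very same computation as above gives $\mathbb{P}(\bS_{1}=\bs_{1}\mid Z_{n})=\prod_{h=1}^{d}\lambda_{h}(Z_{n})=\Lambda(Z_{n})$, and the Bayes-rule step goes through verbatim to yield the second identity. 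The only delicate point, and the step I would pay most attention to, is verifying that this auxiliary construction is well-defined when $Z_{n}\notin D_{n}$: by the convention introduced after Definition~\ref{defn: Lam} we set $\Lambda(Z_{n})=0$ on that event, which is consistent with the exponential mechanism being undefined there, so the identity $\mathbb{P}(\bS_{1}=\bs_{1}\mid Z_{n})=\Lambda(Z_{n})$ still holds almost surely and the tower property applies without modification.
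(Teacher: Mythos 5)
Your proposal is correct and follows essentially the same route as the paper: identify $\Lambda(\zeta_n)$ with $\mathbb{P}(\bS_1=\bs_1\mid\zeta_n)$ via Lemma~\ref{lem: connection sampling prob} and conditional independence of splits, then apply the tower property to the numerator and denominator of the conditional expectation. You are in fact slightly more careful than the paper on part (2), where the paper merely asserts the proof is ``identical, with $\zeta_n$ replaced by $Z_n$''; your explicit construction of an auxiliary probability space in which $\hat U_n = \Sigma_n^{1/2}Z_n+\Pi_n$ plays the role of $\hat T_n$ and the splits are drawn from the exponential mechanism given $\hat U_n$, together with the remark about the zero-extension of $\Lambda$ off $D_n$, makes the implicit joint law of $(Z_n,\bS_1)$ precise.
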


\begin{proof}[Proof of Lemma~\ref{lemma:supporting1}]
To prove (1), first recall that, by construction,
$\Lambda(\zeta_n)=\mathbb{P}\nbracket{\bS_{1}=\bs_{1} \mid \zeta_n}$. 
Applying the tower property of expectation, we obtain 
\begin{align*}
\begin{gathered}
\mathbb{E}\!\left[h\!\circ\! P(\zeta_n)\,\middle|\,\bS_{1}=\bs_{1}\right]
=\frac{\mathbb{E}\!\left[h\!\circ\! P(\zeta_n)\,\mathbbm{1}\nbracket{\bS_{1}=\bs_{1}}\right]}
{\mathbb{P}\rbracket{\bS_{1}=\bs_{1}}}
=\frac{\mathbb{E}\!\left[\mathbb{E}\!\left(h\!\circ\! P(\zeta_n)\,\mathbbm{1}\nbracket{\bS_{1}=\bs_{1}}\mid \zeta_n\right)\right]}
{\mathbb{E}\!\left[\mathbb{E}\!\left(\mathbbm{1}\nbracket{\bS_{1}=\bs_{1}}\mid \zeta_n\right)\right]}.
\end{gathered}
\end{align*}
Now, note that

\begin{align*}
\begin{gathered}
\mathbb{E}\!\left[h\!\circ\! P(\zeta_n)\,\mathbbm{1}\nbracket{\bS_{1}=\bs_{1}}\mid \zeta_n\right]
= h\!\circ\! P(\zeta_n)\,\mathbb{E}\!\left[\mathbbm{1}\nbracket{\bS_{1}=\bs_{1}}\mid \zeta_n\right]
= h\!\circ\! P(\zeta_n)\,\Lambda(\zeta_n), \ \text{ and } \\
\mathbb{E}\!\left[\mathbbm{1}\nbracket{\bS_{1}=\bs_{1}}\mid \zeta_n\right]=\Lambda(\zeta_n).
\end{gathered}
\end{align*}
Therefore,
\[
\mathbb{E}\!\left[h\!\circ\! P(\zeta_n)\,\middle|\,\bS_{1}=\bs_{1}\right]
=\frac{\mathbb{E}\!\left[h\!\circ\! P(\zeta_n)\,\Lambda(\zeta_n)\right]}
{\mathbb{E}\!\left[\Lambda(\zeta_n)\right]}.
\]
This proves (1). 
The proof of (2) is identical, with \(\zeta_n\) replaced by \(Z_n\) throughout.
\end{proof}

\begin{lemma}
\label{lemma:supporting2}
Under Assumptions~\ref{ass:1},\ref{ass:2},\ref{ass:3}, there exists a constant
$C_->0$, not depending on $n$, such that for all sufficiently large $n$,

\[
\mathbb{E}\Big[\exp\!\big(-LB\|Z_n\|\big)\,\mathbbm{1}_{D_n}(Z_n)\Big]\;\ge\; C_-.
\] for constants $L>0$ and $B>0$.
\end{lemma}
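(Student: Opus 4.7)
The plan is to lower-bound the expectation by restricting to a fixed Euclidean ball $B(0,r_0)\subset\mathbb{R}^{\bar k}$ around the origin, where $r_0>0$ is chosen once and for all (independent of $n$). On $B(0,r_0)$ the integrand $\exp(-LB\|Z_n\|)$ is bounded below by $\exp(-LBr_0)$ and $\mathbb{P}(\|Z_n\|\le r_0)$ is a fixed positive constant, so the whole argument reduces to showing that $B(0,r_0)\subset D_n$ for all sufficiently large $n$.

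\textbf{Step 1 (interior margin of $\Pi_n$).} First, I would observe that $\tilde D_n$ is defined as the set on which the affine combinations $\eta_1 u+\eta_2 v^h$, $\eta_3 u+\eta_4 v^{h,k}$ and $\eta_5 u+\eta_6 v^h+\eta_7 v^{h,k}$ from Definition~\ref{def:2} lie in $(0,1)^3$. By the algebraic identities used in the proof of Lemma~\ref{lem: suff stats}, these same affine maps, when evaluated at the observed statistic $\hat T_n$, reconstruct the sample proportions $\hat\pi_{p_h}$, $\hat\pi_{p_h(s_h^k)}$, $\hat\pi_{p_h'(s_h^k)}$. Evaluating them instead at the mean vector $\Pi_n$ yields the corresponding population proportions in each region, which are convex combinations of the leaf-level parameters $\pi_{j,n}$. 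Under Assumption~\ref{ass:2} there exists $\delta\in(0,1/2)$ independent of $n$ with $\pi_{j,n}\in[\delta,1-\delta]$ for every $j$, and hence every such convex combination lies in $[\delta,1-\delta]$ as well. Thus, in the coordinates of the affine maps, $\Pi_n$ is at distance at least $\delta$ from the boundary of $(0,1)^3$.

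\textbf{Step 2 (ball inclusion for large $n$).} Next, for $\zeta\in B(0,r_0)$, Remark~\ref{remark:ass:3} gives $\|\Sigma_n^{1/2}\zeta\|\le Br_0$, so every coordinate of $\Sigma_n^{1/2}\zeta$ has absolute value at most $Br_0$. Direct inspection of Definition~\ref{defn: eta} shows that each nontrivial entry of $\eta_h^k$ (i.e.\ all entries other than $\eta_8$, which does not multiply $x,y,z$ in the affine maps) is a ratio involving $\sqrt{n_\bullet}/n_\bullet$ or $1/\sqrt{n_\bullet}$ type terms, and under Assumption~\ref{ass:1} every relevant $n_\bullet$ is $\Theta(n)$; therefore all these entries are $O(1/\sqrt n)$. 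Consequently each affine output evaluated at $\Sigma_n^{1/2}\zeta+\Pi_n$ differs from its value at $\Pi_n$ by at most $K r_0/\sqrt n$ for some constant $K$ not depending on $n$ or $\zeta$. Choosing $r_0$ arbitrary (say $r_0=1$) and then $n$ large enough that $Kr_0/\sqrt n<\delta/2$, every affine output stays in $[\delta/2,1-\delta/2]\subset(0,1)$, so $\zeta\in D_n$. This proves $B(0,r_0)\subset D_n$ for all sufficiently large $n$.

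\textbf{Step 3 (conclusion).} Combining Steps 1--2, for all sufficiently large $n$,
$$\mathbb{E}\!\left[\exp(-LB\|Z_n\|)\,\mathbbm{1}_{D_n}(Z_n)\right]\;\ge\;\exp(-LBr_0)\,\mathbb{P}(\|Z_n\|\le r_0).$$
Because the dimension $\bar k=1+\sum_{h=1}^d k_h$ is determined by the observed tree fit $\bs_1$ and is fixed throughout the asymptotic regime $[\bn]\to\infty$, $\mathbb{P}(\|Z_n\|\le r_0)$ equals a strictly positive $\bar k$-dimensional Gaussian ball mass that does not depend on $n$. Setting $C_-:=\exp(-LBr_0)\,\mathbb{P}(\|Z_n\|\le r_0)>0$ yields the claim.

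The only nontrivial step is Step 2: one has to make sure that the perturbation produced by $\Sigma_n^{1/2}\zeta$ in the affine-map coordinates vanishes with $n$, which crucially combines the uniform bound on $\|\Sigma_n^{1/2}\|$ from Assumption~\ref{ass:3} with the $O(1/\sqrt n)$ decay of the $\eta_h^k$ coefficients under Assumption~\ref{ass:1}. Once this scaling is in hand, Steps 1 and 3 are immediate.
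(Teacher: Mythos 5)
Your proposal is correct but takes a genuinely different route from the paper. The paper's argument is soft and probabilistic: it lower-bounds the expectation by $e^{-LBR}\big(\mathbb{P}(Z_n\in D_n)-\mathbb{P}(\|Z_n\|>R)\big)$ for any $R$, controls the tail term by Chebyshev using $\mathbb{E}\|Z_n\|^2=\bar k$ (fixed once the tree fit is fixed), and then invokes a uniform Berry--Esseen bound over convex sets (Bentkus, 2003) to transfer the trivial fact $\mathbb{P}(\zeta_n\in D_n)=1$ into $\mathbb{P}(Z_n\in D_n)\ge 1-O(n^{-1/2})$; the only geometric property of $D_n$ it exploits is convexity, and it never inspects the $\eta$-coefficients. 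Your argument is hard and deterministic: you show $B(0,r_0)\subset D_n$ for all large $n$ by combining the interior margin $\delta$ of $\Pi_n$ in the affine-map coordinates (a consequence of Assumption~\ref{ass:2}, since those outputs are convex combinations of the $\pi_{j,n}$) with the observation that the entries $\eta_1,\dots,\eta_7$ of $\eta_h^k$ are $O(1/\sqrt{n})$ under Assumption~\ref{ass:1}, so the perturbation induced by $\Sigma_n^{1/2}\zeta$ (bounded via Remark~\ref{remark:ass:3}) is $O(1/\sqrt n)$ and eventually smaller than $\delta/2$. Your route is more elementary (no Berry--Esseen needed, hence no need for the uniform bound on the $b_{i,n}$), and proves a strictly stronger statement (ball inclusion rather than probability near $1$); the paper's route is more modular, relies only on convexity of $D_n$, and would continue to work even if the $\eta$-coefficients did not decay, so long as $\mathbb{P}(\zeta_n\in D_n)\to 1$. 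Both draw a constant $C_-$ independent of $n$ for the same reason: the dimension $\bar k$ is fixed, so the Gaussian ball mass is fixed.
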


\begin{proof}
For any $R>0$, we have

\[
\exp\!\big(-LB\|Z_n\|\big)\,\mathbbm{1}_{D_n}(Z_n)
\;\ge\; e^{-LBR}\,\mathbbm{1}_{\{\|Z_n\|\le R\}\cap D_n}(Z_n).
\]

Taking expectations and applying the union bound, we observe that
\begin{align*}
\mathbb{E}\!\left[\exp\!\big(-LB\|Z_n\|\big)\,\mathbbm{1}_{D_n}(Z_n)\right]
&\ge e^{-LBR}\,\mathbb{P}\!\left(\{\|Z_n\|\le R\}\cap \{Z_n \in D_n\}\right)\\
&\ge e^{-LBR}\,\Big(\mathbb{P}\rbracket{Z_n \in D_n}-\mathbb{P}\rbracket{\|Z_n\|>R}\Big).
\end{align*}
Since $Z_n\sim\mathcal{N}_{\bar{k}}(0,I_{\bar{k}})$, we have $\mathbb{E}\|Z_n\|^2=\bar{k}$.
By Chebyshev’s inequality,
\[
\sup_{n}\,\mathbb{P}\rbracket{\|Z_n\|>R}\;\le\;\frac{\bar{k}}{R^2}.
\] In particular, choose $R = 2\sqrt{\bar{k}}$ so that $\sup_n \mathbb{P}\rbracket{\|Z_n\| > R} \le \tfrac{1}{4}$. Now, since each $b_{i,n}$ is uniformly bounded, i.e., $\sup_{n,i}|b_{i,n}| \le \bar{C}_{+}$ (by Proposition~\ref{prop:supporting0}), and the uniform multivariate Berry–Esseen bound in \cite{BENTKUS2003385} applies, it follows that 
\[
\sup_{A\in\mathcal{C}_{\bar{k}}}
\big|\mathbb{P}\rbracket{\zeta_n\in A}-\mathbb{P}\rbracket{Z_n\in A}\big|
\;\lesssim\;
\bar{k}^{1/4}\,\frac{\mathbb{E}\|b_{i,n}\tilde Y_{i,n}\|_2^3}{\sqrt{n}}
=O(n^{-1/2}),
\] where the supremum is taken over convex sets in $\mathbb{R}^{\bar{k}}$, denoted $\mathcal{C}_{\bar{k}}$.
Because $D_n$ is convex and $\mathbb{P}\rbracket{\zeta_n\in D_n}=1$, applying the bound with $A=D_n$ yields
\[
\mathbb{P}\rbracket{Z_n\in D_n}
\;\ge\;
\mathbb{P}\rbracket{\zeta_n\in D_n}-O(n^{-1/2})
\;=\;
1-O(n^{-1/2}),
\] which implies that there exists $n_0$ such that for all $n\ge n_0$, $\mathbb{P}\rbracket{Z_n\in D_n} \ge \tfrac{3}{4}$. Hence, for all sufficiently large $n$,
\[
\mathbb{E}\!\left[\exp\!\big(-LB\|Z_n\|\big)\,\mathbbm{1}_{D_n}(Z_n)\right]
\;\ge\; e^{-LB(2\sqrt{\bar{k}})}\!\left(\tfrac{3}{4}-\tfrac{1}{4}\right)
\;=\;\tfrac{1}{2}\,e^{-2LB\sqrt{\bar{k}}}
\;=:\; C_- \;>\;0,
\]where $C_{-}$ does not depend on $n$.

\end{proof}

\section{Adjusting for data-adaptive temperature parameters}
\label{appendix:dataadaptivetemp}

Recall from Equation~\ref{eq:dataadaptivetemp}, for $h \in [d]$, the data-adaptive temperatures with a  fixed \emph{temperature scale} hyperparameter $\tau>0$ follow
\begin{align*}
% \label{eq:dataadaptivetemp}
\overline{\epsilon}_h(t)
\;:=\;
\tau\,\frac{1}{k_h}\sum_{k=1}^{k_h}\cG_h^k(t), \
\forall t\in\tilde D_n; \qquad \overline{\epsilon}_{h}(\hat{T}_{n})
=\tau\times\frac{\sum_{k=1}^{k_h}\cG^{k}_{h}(\hat{T}_{n})}{ k_h}.
\end{align*} The adjusted pivotal quantity in Equation~\ref{pivot:adjusted} can be expressed in terms of the standardized variables as $ \overline{\text{\normalfont Pivot}}(\hat{T}_{n,1}; \Pi_{n,1})=\overline{P}(\zeta_{n})$, where
% \vspace{-0.5cm}
$$\overline{P}(\zeta_{n}):=\overline{P}(\zeta_{n}; \Pi_n)= \frac{\int_{-\infty}^{\sigma_{n,1} \zeta_{n,1}+\Pi_{n,1}}\phi(u;\Pi_{n,1},\sigma_{n,1}^{2})\prod_{h=1}^{d}\overline{\lambda}_{h}(u,\Sigma_{ n,\sminus 1}^{1/2}\zeta_{n, \sminus 1} + \Pi_{n, \sminus 1} )  du}{\int_{-\infty}^{\infty}\phi(u;\Pi_{n,1},\sigma_{n,1}^{2})\prod_{h=1}^{d}\overline{\lambda}_{h}(u,\Sigma_{ n,\sminus 1}^{1/2}\zeta_{n, \sminus 1} + \Pi_{n, \sminus 1})  du}.$$Analogously, define adjusted functions $\overline{f}:\tilde{D}_{n}\to \mathbb{R}$ and  $\overline{\Lambda}: D_n \to (0,1)$ as
\vspace{-0.5cm}
\begin{align*}
\bar f(t)
& :=
\sum_{h=1}^{d}
\Bigg[
\overline{\epsilon}_h(t)^{-1}\,\cG_h^{1}(t)
\;-\;
\log\!\Bigg(\sum_{k=1}^{k_h}
\exp\!\big\{\overline{\epsilon}_h(t)^{-1}\,\cG_h^{k}(t)\big\}\Bigg)
\Bigg]\\
\overline{\Lambda}(\zeta) &:=  \exp \overline{f}(\Sigma_{n}^{1/2}\zeta+\Pi_{n}),
\end{align*}

Recall Assumption~\ref{ass:extra} is an extra assumption necessary to extend the same asymptotic guarantees for the adjusted pivot. Specifically, Assumption~\ref{ass:extra} imposes a mild condition on the adaptive temperatures $\{\overline{\epsilon}_h:{h\in[d]}\}$ with scale $\tau>0$ that there exists a constant $m_\tau>0$ such that $\overline{\epsilon}_h(v)\;\ge\; m_\tau$
for all  $v\in\tilde D_n$  and all $h\in[d]$.

Firstly we prove the result bellow which is analogous to Proposition~\ref{prop:supporting1} for the adjusted function $\overline{f}(.)$.
\begin{proposition}
Under Assumptions~\ref{ass:1}, \ref{ass:2}, \ref{ass:3}, \ref{ass:infodevbound} and \ref{ass:extra}, there exist finite constants $\bar{L}, \bar{b}_1,\bar{b}_2,\bar{b}_3>0$ such that the function $\overline{f}(.)$ is $\bar{L}$-Lipchitz and 
\[
\|\nabla \bar f(v)\|\le \bar{b}_1,
\qquad
\|\nabla^2 \bar f(v)\|\le \bar{b}_2,
\qquad
\|\nabla^3 \bar f(v)\|\le \bar{b}_3,
\qquad
\text{for all } v\in\tilde D_n .
\]
\end{proposition}

\begin{proof}

Write $\bar f(v)=\sum_{h=1}^d \bar f_h(v)$, where
\[
\bar f_h(v)
=
\overline{\epsilon}_h(v)^{-1}\cG_h^{1}(v)
-
\log\!\Big(\sum_{k=1}^{k_h}e^{\overline{\epsilon}_h(v)^{-1}\cG_h^{k}(v)}\Big).
\]
Recall the gains $\cG_h^k(v)$ are always non-negative and also uniformly bounded over $\tilde D_n$. Moreover, since  $\overline{\epsilon}_h(v)\ge m_\tau$ (from Assumption~\ref{ass:extra}), the following uniform bound is satisfied 
\[
\sup_{v\in\tilde D_n}\overline{\epsilon}_h(v)^{-r}\cG_h^k(v)<\infty,
\qquad r=1,2,3,4.
\] Let's denote the soft-max weights as
\[
\overline{w}_{h,k}(v)
=
\frac{e^{\overline{\epsilon}_h(v)^{-1}\cG_h^k(v)}}
{\sum_{\ell=1}^{k_h}e^{\overline{\epsilon}_h(v)^{-1}\cG_h^\ell(v)}}.
\]

\noindent\emph{Gradient.}
By the chain rule,
\[
\nabla\bar f_h(v)
=
\nabla\!\big(\overline{\epsilon}_h^{-1}\cG_h^1\big)
-
\sum_{k=1}^{k_h}\overline{w}_{h,k}(v)\,
\nabla\!\big(\overline{\epsilon}_h^{-1}\cG_h^k\big) \ \text{ where } \ \overline{w}_{h,k}(v)
=
\frac{e^{\overline{\epsilon}_h(v)^{-1}\cG_h^k(v)}}
{\sum_{\ell=1}^{k_h}e^{\overline{\epsilon}_h(v)^{-1}\cG_h^\ell(v)}}.
\]
Now, using product rule to get the gradient as 
\[
\nabla\!\big(\overline{\epsilon}_h^{-1}\cG_h^k\big)
=
\overline{\epsilon}_h^{-1}\nabla\cG_h^k
-
\overline{\epsilon}_h^{-2}\cG_h^k\nabla\overline{\epsilon}_h,
\]
in which all terms are uniformly bounded on $\tilde D_n$, hence
$
\sup_{v\in\tilde D_n}\|\nabla\bar f_h(v)\|<\infty.$ Note summing over $h$ the above yields $\sup_{v\in\tilde D_n}\|\nabla\bar f_h\|<\infty$ which further implies that $\bar f_h$ is also $\bar{L}$-Lipchitz function for some constant $\bar{L}$.

\noindent\emph{Higher derivatives.}
Applying the chain rule repeatedly,
$\nabla^2\bar f_h$ and $\nabla^3\bar f_h$ can besimilarly  written as finite sums of terms involving
\[
\nabla^j\cG_h^k,\quad
\nabla^j\overline{\epsilon}_h,\quad
\overline{w}_{h,k},\quad
\nabla^j \overline{w}_{h,k},
\qquad j\le 3,
\]
multiplied by powers $\overline{\epsilon}_h^{-r}$ with $r\le4$.
Since $0\le \cG_h^k\le C_G$, $\overline{\epsilon}_h\ge m_\tau$, and all derivatives of
$\cG_h^k$ up to order three are uniformly bounded on $\tilde D_n$,
each such term is uniformly bounded. Therefore,
\[
\sup_{v\in\tilde D_n}\|\nabla^2\bar f_h(v)\|<\infty,
\qquad
\sup_{v\in\tilde D_n}\|\nabla^3\bar f_h(v)\|<\infty.
\]Since the depth $d$ is fixed, summing over $h$ yields the desired bounds for
$\bar f$. 
\end{proof}

Note that other supporting results like weight and pivot derivative bounds in Proposition~\ref{prop:supporting2} can be similarly reproduced for their adjusted versions $\overline{\Lambda}, \overline{P}$ to get bounds like:
\begin{enumerate}
    \item $\displaystyle\sup_{v \in D_{n}}\norm{\nabla^{3}\overline{\Lambda}(v)}\leq \overline{C}_1 \exp \overline{f}\nbracket{\Pi_{n}}$
    \item $\displaystyle\sup_{v \in D_{n}}\norm{\nabla^{3} h \circ \overline{P}(v) \times \overline{\Lambda}(v)}\leq \overline{C}_2 \exp \overline{f}\nbracket{\Pi_{n}}$ 
\end{enumerate} for any arbitrary function $h \in \mathbb{C}^3(\mathbb{R})$ with bounded derivatives up to third order and some positive constants $\overline{C}_{1},\overline{C}_{2}$.

To prove the main result in Theorem~\ref{thm: adjusted asymptotic validity}, weak limit of the adjusted pivot as a $\text{Uniform}(0,1)$ random variable, it is sufficient to show the following:
\vspace{-0.5cm}
\begin{equation}
\label{adjpivot:WeakConv}
\lim_{[\bn]  \to \infty} \Big \lvert \mathbb{E}  \left[h \circ \overline{P}\left(\zeta_{n}\right) \mid \cbracket{\bS_{1}=\bs_{1}} \right]  -\mathbb{E}\left[h \circ \overline{P}(Z_{n}) \mid\cbracket{\bS_{1}=\bs_{1}}\right] \Big \rvert=0, \vspace{-0.5cm}
\end{equation}
for every real-valued $h \in \Bigl\{\tilde{h} \in \mathbb{C}^3(\mathbb{R}): \; \sup_{x \in \mathbb{R}} \big| \grad^{(k)}\tilde{h}(x) \big| < \infty \;\; \text{for } k = 0,1,2,3 \Bigr\}$. 

\begin{proposition}{}{(Adjusted Relative Differences)}
Define the difference terms
\vspace{-0.5cm}
\begin{align*}
\begin{gathered}
\overline{\RD}_n^{(1)}=\frac{\left|\mathbb{E}\left[\overline{\Lambda}(\zeta_n)\right]-\mathbb{E}\left[\overline{\Lambda}(Z_n) \right]\right|}{\mathbb{E}\left[\overline{\Lambda}(Z_{n})\right]} \\
\overline{\RD}_n^{(2)}=\frac{\left|\mathbb{E}\left[h \circ \overline{P}\left(\zeta_{n}\right) \times \overline{\Lambda}(\zeta_n) \right]-\mathbb{E}\left[h \circ \overline{P}(Z_{n}) \times \overline{\Lambda}(Z_n) \right]\right|}{\mathbb{E}\left[\overline{\Lambda}(Z_n)\right]}.
\vspace{-0.5cm}
\end{gathered}
\end{align*} 
where $h\in \mathbb{C}^3(\mathbb{R})$ and $\overline{P}$,$\overline{\Lambda}$ are as defined above.
Then, if 
\vspace{-0.5cm}
$$
\lim_{[\bn]  \to \infty} \overline{\RD}_n^{(1)}=0, \quad \lim_{[\bn]  \to \infty}  \overline{\RD}_n^{(2)}=0,
\vspace{-0.5cm}
$$
the weak convergence in \eqref{adjpivot:WeakConv} holds.
\end{proposition}

\begin{theorem}
   Under Assumptions~\ref{ass:1}, \ref{ass:2}, \ref{ass:3}, \ref{ass:infodevbound} and \ref{ass:extra}, we have 
   \vspace{-0.5cm}
   $$
   \lim_{[\bn]  \to \infty} \overline{\RD}_n^{(1)}=0, \quad \lim_{[\bn]  \to \infty}  \overline{\RD}_n^{(2)}=0. 
   \vspace{-0.5cm}
   $$ 
   \label{adjthm: verifying suff cond}
\end{theorem}

The proofs of the above results and other supporting results are omitted as they follow the exact same proof steps as before but now replacing all relevant functions (and bounds) with their corresponding adjusted function (and bounds). 

\paragraph{General adaptive temperatures:} Arguably, there is no uniformly best choice of data-adaptive temperature parameter $\epsilon_h=\bar\epsilon_h(\hat{T}_n)$. Assuming regularity conditions as described bellow (Assumption~\ref{ass:temp_regular}) on the data-adaptive temperature functions $\cbracket{\bar\epsilon_h(.): \tau>0}_{h\in[d]}$ allows us to extend the same theoretical guarantees for asymptotic inference obtained from our method. 

\begin{assumption}[Regularity of adaptive temperatures]
\label{ass:temp_regular}
For any $\tau>0$, the adaptive temperature functions
$\bar\epsilon_h(.):\tilde D_n\to(0,\infty)$ for all $h\in[d]$ with temperature scale $\tau$ satisfies the following:

\begin{enumerate}
\item (Uniform positivity.)  
There exists a constant $m_\tau>0$ such that $\inf_{v\in\tilde D_n} \bar\epsilon_h(v) \;\ge\; m_\tau$, for all $h\in [d]$.

\item (Smoothness and bounded derivatives.)
The function $\bar\epsilon_h(.)$ is three times continuously differentiable on
$\tilde D_n$, and there exist finite constants
$E_{\tau,1},E_{\tau,2},E_{\tau,3}>0$ such that
\[
\sup_{v\in\tilde D_n}\|\nabla \bar\epsilon_h(v)\|\le E_{\tau,1},\quad
\sup_{v\in\tilde D_n}\|\nabla^2 \bar\epsilon_h(v)\|\le E_{\tau,2},\quad
\sup_{v\in\tilde D_n}\|\nabla^3 \bar\epsilon_h(v)\|\le E_{\tau,3} \quad \forall h\in [d].
\]
\end{enumerate}
\end{assumption}To conclude, we note  that there is nothing special about the specific form of temperature (Equation~\ref{eq:dataadaptivetemp}) we used in our data experiments and our inference method can accommodate other reasonable data-adaptive choices as well. 

\section{Randomized cost-complexity pruning}
\label{appendix:ccp}

\begin{algorithm}[h]
\setstretch{1.3}
\caption{Randomized tree construction, pruning, and selective inference}
\begin{algorithmic}[1]

\State \textbf{Input:} Training data $(X,y)$; maximum depth $d_{\max}$; minimum node size $n_{\min}$; 
splitting temperature parameter(s) $\epsilon$; pruning complexity $\alpha$; pruning temperature $\rho$.

\vspace{0.15cm}
\Statex \textbf{(A) Grow a fully randomized tree}
\State Use Algorithm~\ref{alg:RCT} with inputs $(X,y,d_{\max},n_{\min},\epsilon)$ to obtain a fully randomized tree $\cT=\ct$.

\vspace{0.15cm}
\Statex \textbf{(B) Randomized cost-complexity pruning}
\State Enumerate all subtrees $\mathrm{Subtrees}(\ct)$.
\State Compute the penalized cost $C(\cT';y)$ for each subtree (Section~\ref{appendix:ccp}).
\State Draw a pruned subtree $\cT_\alpha=\ct_\alpha$ according to the randomized CCP distribution in Section~\ref{appendix:ccp}.

\vspace{0.15cm}
\Statex \textbf{(C) Selective inference on pruned leaves}
\For{each terminal node $l$ of the pruned tree $\ct_\alpha$}
    \State Construct the contrast vector $\nu_l$ and statistics $(\hat{T}_{n,l},\hat{T}_{n,\sminus l})$.
    \State Reconstruct $y(u,v)$ as in Section~\ref{appendix:ccp}.
    \State Evaluate the functions 
    $\lambda_{\text{grow}}(u,v)$ and $\lambda_{\text{CCP}}(u,v)$,  
    and their product $\lambda_{\text{prune}}(u,v)$.
    \State Form the selective pivot $\text{Pivot}_{\text{prune}}$ (Section~\ref{appendix:ccp}) 
    and numerically invert it to obtain a confidence interval for $\Pi_{n,l}$ (and hence $\pi_l$).
\EndFor

\vspace{0.15cm}
\State \textbf{Return:} Pruned tree $\ct_\alpha$ and selective confidence intervals for its terminal-region parameters.
\end{algorithmic}
\label{alg:RCT-CCP-compact}
\end{algorithm}

Given the fully grown classification tree $\cT = \ct$ obtained from the exponential randomization based splitting described in Section~\ref{sec:treegrowing}, we now describe a randomized analogue of the classical cost-complexity pruning (CCP) procedure for model selection over subtrees of $\ct$ and demonstrate subsequent inference procedure. Let 
\[
\mathrm{Subtrees}(\cT) = \{ \cT' : \cT' \text{ is a subtree of } \cT \},
\]
and denote by $\cL(\cT')$ the collection of terminal nodes (leaves) of a subtree $\cT'$. 

In the standard CCP framework \citep{breiman1984classification}, a subtree is selected by minimizing a penalized empirical risk criterion
\[
\cT_\alpha^{\mathrm{CCP}}
= 
\argmin_{\cT' \in \mathrm{Subtrees}(\cT)} 
C(\cT'; y)
\quad\text{where}\quad
C(\cT'; y) := R(\cT';y) + \alpha |\cL(\cT')|.
\]
Here, $\alpha \ge 0$ is the complexity parameter controlling the trade-off between goodness of fit and model size (number of leaves), and 
\[
R(\cT';y) := \sum_{l \in \cL(\cT')} I(l;y)
\]
denotes the empirical risk of $\cT'$, expressed as the sum of node impurities $I(l;y)$ across its leaves.

\vspace{1em}
\noindent
\textbf{Randomized Pruning.}  
We extend CCP to a randomized selection mechanism that introduces an additional layer of stochasticity after the fully randomized tree $\cT$ has been grown. Conditional on the observed fully grown tree $\cT=\ct$, we define a random subtree $T_\alpha$ drawn from a discrete distribution over $\mathrm{Subtrees}(\ct)$ with probabilities given by:
\[
\mathbb{P}\rbracket{\cT_\alpha = \ct_\alpha \mid \cT= \ct,Y= y}
=
\frac{
\exp\{-\rho^{-1} \, C(\ct_\alpha; y)\}
}{
\sum_{\ct' \in \mathrm{Subtrees}(\ct)} 
\exp\{-\rho^{-1} \, C(\ct'; y)\}
},
\quad \rho^{-1} > 0.
\]
Here, $\rho$ controls the strength of the randomization: as $\rho \to 0$, the selection concentrates on the classical minimizer $\cT_\alpha^{\mathrm{CCP}}$, whereas larger values of $\rho$ induce greater exploration among competing subtrees with similar penalized risks. 

\paragraph{Inference on Pruned Tree:} Here we extend our proposed method (for fixed depth trees in Section~\ref{sec:inference}) to obtain inference for the terminal-node mean parameters obtained of the pruned tree model. To ensure valid inference we condition on the final selection event, 
\[
\{ \cT_\alpha = \ct_\alpha, \; \cT = \ct \},
\] which includes the randomized pruning step.

\paragraph{Key Statistics:} Say, our inferential target is the scaled mean parameter $\Pi_{n,l}=\sqrt{n_l}\pi_l$ for the observed terminal node $l$ in the pruned tree $\cT_\alpha=\ct_\alpha$ and its natural estimator is $\hat{T}_{n,l}=\sqrt{n_l}\,\hat{\pi}_{l}$.
However, selection of $l$ as a leaf in the pruned tree does not depend solely on the subtree of $l$ but also depends on the fully grown tree and its subtrees (in contrast to an unpruned tree with leaf $l$ where accounting only for the upstream splits that lead to $l$ suffice). Instead, the pruning step evaluates the empirical risks of all nodes in the fully grown tree $\cT = \ct$, and hence involves the sample proportions $\{\hat{\pi}_\cR: \cR\ \text{is a region in the fitted tree} \ \ct\}$. Thus, valid selective inference for $\pi_l$ requires a representation that simultaneously captures $\hat{T}_{n,l}$ and all $\hat{\pi}_\cR$ appearing in the pruning criterion.

To obtain such a representation, for any region $\cR\subseteq\cX$ define the normalized contrast vector
\[
\nu_\cR
:=\frac{1}{\sqrt{n_\cR}}\bigl[\mathbbm{1}(X_i\in \cR):\, i\in[n]\bigr]^\top,
\quad\text{so that}\quad
\nu_\cR^\top Y=\sqrt{n_\cR}\,\widehat{\pi}_\cR.
\]
For the target leaf $l$, introduce the key statistics
\[
\hat{T}_{n,l} := \nu_l^\top Y = \sqrt{n_l}\hat{\pi}_l,
\qquad
\hat{T}_{n,\sminus l} := Y - \nu_l\nu_l^\top Y \in \mathbb{R}^n,
\]
where $\hat{T}_{n,\sminus l}$ contains all residual information orthogonal to $\nu_l$, which serves as the nuisance statistics that we would condition on.  The next lemma shows that every quantity of the form 
\[
\widehat{\pi}_{\sminus l}(\cR)
:= \sqrt{n_{\cR}}\widehat{\pi}_{\cR}
   - \frac{n_{\cR\cap l}}{\sqrt{n_{l}n_{\cR}}}\,\sqrt{n_l}\widehat{\pi}_{l}, \quad \text{(analogous to $\widehat{\pi}_{\sminus 1}(\cR)$ for $l=l_1$)}
\]
used previously to construct the nuisance statistics $\hat{T}_{n,\sminus 1}$ in Section~\ref{subsec:keystatistics}, is exactly a linear contrast of $\hat{T}_{n,\sminus l}$.

\begin{lemma}
For any region $\cR\subseteq\cX$, $
\nu_\cR^\top \hat{T}_{n,\sminus l}
=
\widehat{\pi}_{\sminus l}(\cR)$.
\end{lemma}

\begin{proof}
We compute
\[
\nu_\cR^\top \hat{T}_{n,\sminus l}
=
\nu_\cR^\top Y
-
(\nu_\cR^\top \nu_l)\,(\nu_l^\top Y).
\]
The first term is $\nu_\cR^\top Y=\sqrt{n_\cR}\hat{\pi}_\cR$. The inner product of contrasts equals
\[
\nu_\cR^\top\nu_l
=
\frac{1}{\sqrt{n_\cR n_l}}
\sum_{i=1}^n
\mathbbm{1}(X_i\in\cR)\mathbbm{1}(X_i\in l)
=
\frac{n_{\cR\cap l}}{\sqrt{n_\cR n_l}},
\]
and the last term is $\nu_l^\top Y=\sqrt{n_l}\hat{\pi}_l$. Substituting yields the claimed expression.
\end{proof}

Note based on the orthogonal decomposition $Y=\nu_l\hat{T}_{n,l}+\hat{T}_{n,\sminus l}$ one can reconstruct the response, as $Y=y(\hat{T}_{n,l},\hat{T}_{n,\sminus l})$ where $y(u,v):=\nu_l u + v$ is the reconstruction map. So, conditioning on $\cbracket{\hat{T}_{n,l}=u, \hat{T}_{n,\sminus l}=v}$ is same as conditioning on value the response being $Y=y(u,v)$. This establishes that all statistics required for characterizing the pruning selection event are captured by the key statistics $\hat{T}_{n,l}, \hat{T}_{n,\sminus l}$.

\paragraph{Pivot Construction.}
Let $\{S_1=s_1^1,\ldots,S_l=s_l^1\}$ denote the sequence of all realized splits on internal regions $\{p_1,\ldots,p_l\}$ that produce the fully grown tree $\cT=\ct$, ordered so that $p_h$ is an ancestor of $p_{h'}$ whenever $h<h'$.  
To construct the selective pivot for the pruned tree’s terminal-node parameter $\Pi_{n,l}$, we outline below the key ingredients.  

% Throughout, the joint statistics $(u,v)$ represent hypothetical values of $(\hat{T}_{n,l},\hat{T}_{n,\sminus l})$, with the pseudo-response $y(u,v)$ obtained via the standard reconstruction map associated with the orthogonal decomposition $Y=\nu_l u + v$.

\begin{enumerate}
\item \textbf{Selection likelihood for tree growing.}
For each internal region $p_i$, the exponential-mechanism split rule with temperature $\epsilon_{p_{i}}$ implies
\[
\mathbb{P}\rbracket{S_i=s_i^1 \mid \hat{T}_{n,l}=u, \hat{T}_{n,\sminus l}=v}
=
\frac{\exp \nbracket{\epsilon_{p_{i}}^{-1} G_{p_{i}}(s_i^1; y(u,v))}}{\displaystyle\sum_{s'\in \cK(p_{i})}\exp \nbracket{\epsilon_{p_{i}}^{-1} G_{p_{i}}(s'; y(u,v))}}.
\]
Hence the probability of regenerating the entire observed tree $\ct$ under $(u,v)$ is 
\begin{align*}
\mathbb{P}\rbracket{ \cT= \ct \mid \hat{T}_{n,l}=u, \hat{T}_{n,\sminus l}=v}=\prod_{i=1}^{l}
\frac{\exp\!\big(\epsilon_{p_i}^{-1}\,G_{p_i}(s_i^1; y(u,v))\big)}{\sum_{s'\in\cK(p_i)}\exp\!\big(\epsilon_{p_i}^{-1}\,G_{p_i}(s'; y(u,v))\big)}=:\lambda_{\mathrm{grow}}(u,v)
\end{align*}

\item \textbf{Selection likelihood for cost-complexity pruning.}
Conditional on having grown the full tree $\ct$, the random pruning rule selects a subtree $\ct'$ with probability proportional to $\exp\{-\rho^{-1} C(\ct'; y(u,v))\}$.  
Thus the conditional probability of recovering the observed pruned tree $\ct_\alpha$, $\mathbb{P}\rbracket{\cT_\alpha=\ct_\alpha \mid \cT=\ct,\hat{T}_{n,l}=u,\hat{T}_{n,\sminus l}=v }$ is
\begin{align*}
 \lambda_{\mathrm{CCP}}(u,v)
&:=
\frac{\exp\{-\rho^{-1}\, C(\ct_\alpha; y(u,v))\}}
     {\sum_{\ct'\in \mathrm{Subtrees}(\ct)}
      \exp\{-\rho^{-1}\, C(\ct'; y(u,v))\}}
\end{align*}

\item \textbf{Joint selection likelihood.}
Multiplying the two components yields the probability of simultaneously regenerating both the full tree and its pruned version:
\[
\lambda_{\mathrm{prune}}(u,v)
:=\lambda_{\mathrm{grow}}(u,v)\times\lambda_{\mathrm{CCP}}(u,v)
=
\mathbb{P}\rbracket{
\cT=\ct,\;
\cT_\alpha=\ct_\alpha
\;\middle|\;
\hat{T}_{n,l}=u,\hat{T}_{n,\sminus l}=v}.
\]
All quantities in $\lambda_{\mathrm{grow}}$ and $\lambda_{\mathrm{CCP}}$ have closed-form expressions once the pseudo-response $y(u,v)$ is evaluated, since all gains and costs depend only on empirical class proportions in the corresponding regions.

\item \textbf{Selective pivot.}
The selective pivot for the target $\Pi_{n,l}$ is then given by
\[
\mathrm{Pivot}_{\mathrm{prune}}(\hat{T}_{n,l};\Pi_{n,l})
=
\frac{
\displaystyle\int_{-\infty}^{\hat{T}_{n,l}}
\phi(u;\Pi_{n,l},\sigma_l^2)\,
\lambda_{\mathrm{prune}}(u,\hat{T}_{n,\sminus l})\,du
}{
\displaystyle\int_{-\infty}^{\infty}
\phi(u;\Pi_{n,l},\sigma_l^2)\,
\lambda_{\mathrm{prune}}(u,\hat{T}_{n,\sminus l})\,du
},
\]
where $\sigma_l^2=\pi_l(1-\pi_l)$ is the variance of the leaf mean.  
In practice we replace $\sigma_l^2$ by its consistent plug-in estimator $\hat{\sigma}_l^2=\hat{\pi}_l(1-\hat{\pi}_l)$.
\end{enumerate}

This construction yields an exact conditional pivot under the assumed model, combining both the growing and pruning selection stages through the likelihood factor $\lambda_{\mathrm{prune}}$.

\section{Predictive Performance of an Ensemble of Randomized Classification Trees}
\label{appendix: ensembleRCT}

The tree-growing procedure introduced in Algorithm 1 (see Section 3 in the main paper) naturally leads to an ensemble method obtained by combining predictions from multiple independently trained randomized classification trees. We call this approach an \emph{ensemble RCT}. Similar to random forests, predictions are obtained by averaging across trees. However, unlike standard random forests, each tree is grown using the proposed randomized splitting rule instead of greedy CART splits.

We examine the predictive performance of the ensemble RCT and compare it with that of a standard random forest based on bootstrap aggregation and feature subsampling. The goal of this experiment is to evaluate how the alternative form of randomness in split selection introduced by the RCT approach compares with that of the conventional ensemble method. 

Data are generated from the high-dimensional extension of the logistic model we used in simulations (Section~\ref{sec:simulations}). Specifically, $Y_i \stackrel{\mathrm{ind}}{\sim} \mathrm{Bernoulli}(\theta_i)$ with 
$$\theta_i = \sigma\!\nbracket{\nbracket{s X_{i1} - s X_{i2} + 0.3 s X_{i1}X_{i2} + \gamma \sin(2\pi X_{i1})}},$$ 
where where $X_{i1}, X_{i2} \sim \text{Uniform}(-1,1)$ are i.i.d covariates and $\sigma(t) = (1+e^{-t})^{-1}$ is the logistic sigmoid function. Along with the $2$ signal variables $X_{i1},X_{i2}$, we augment the feature vector with $18$ nuisance covariates $Z_{ij} \stackrel{\mathrm{i.i.d.}}{\sim} \mathcal{N}(0,1)$, independent of $(X_{i1},X_{i2})$, yielding $X_i = (X_{i1},X_{i2},Z_{i1},\dots,Z_{i,18})$. We set $s=3$, $\gamma=0.5$, and $m=0.5$.

For each of the $100$ Monte Carlo replicates, a sample of size $n=200$ is randomly split into training and test sets of equal size. Test-set accuracy is evaluated for ensemble sizes $B \in \{10,50,100,200\}$. In both types of ensembles, predictions are formed by aggregating $B$ trees grown to a fixed maximum depth of $3$ with minimum leaf size $10$. Class labels are determined by majority vote, and class probabilities are obtained by averaging across trees.

\begin{figure}[h]
    \centering
    \includegraphics[width=13.5cm]{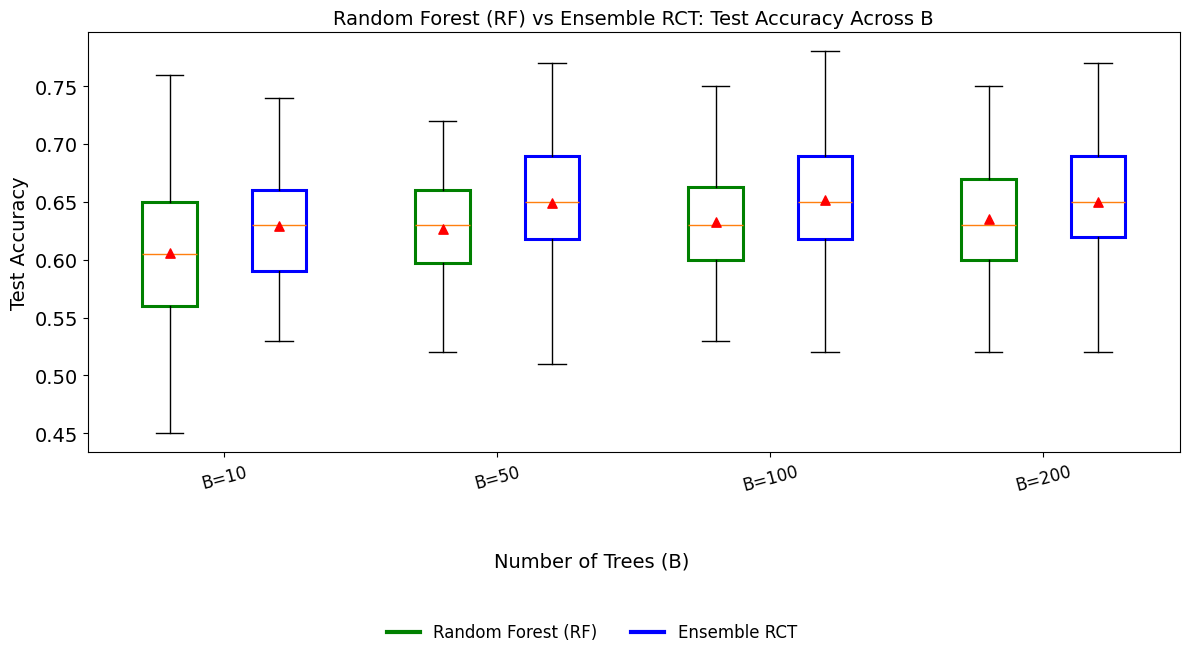}
    \caption{Test-set accuracy of the ensemble RCT and the standard random forest across Monte Carlo replicates for varying ensemble sizes $B$. Red triangles denote Monte Carlo means.}
    \label{fig:rfcomp}
\end{figure}

The ensemble RCT differs from the standard random forest in how randomness is introduced during split selection. Each randomized tree considers all covariates at every split and selects among candidate split points using an exponential mechanism based on impurity reduction. In contrast, the random forest grows CART trees on bootstrap resamples and, at each split, greedily selects the best split from a uniformly sampled subset of $\sqrt{p}$ covariates. 
 
 Figure~\ref{fig:rfcomp} summarizes the resulting distributions of test accuracy for both types of ensembles.
From a purely predictive standpoint, the ensemble RCT performs comparably to the standard random forest across all ensemble sizes considered. With relatively few trees, the ensemble RCT achieves slightly higher average accuracy. As the number of trees increases, the performance of the two methods becomes similar.

% \textcolor{red}{Can you refer to the generative mechanism in the main draft and say what those parameters are (if you are using the same mechanism)? That will make it clear and precise! It is never to read text; rather a model equation makes things clear.}
% Data are generated from Model XX with parameters set to XX. For each of the $100$ Monte Carlo replicates, a sample of size $n=200$
% is randomly split into training and test sets of equal size. Test-set accuracy is evaluated for ensemble sizes $B \in \{10, 50, 100, 200\}$. In both types of ensembles, predictions are formed by aggregating $B$  trees grown to a fixed maximum depth of $3$ with a minimum leaf-size $10$ Class labels are determined by majority vote, and class probabilities are obtained by averaging across trees. The ensemble RCT differs from the standard random forest in how randomness is introduced during each split of the tree-growing algorithm. Each randomized tree considers all covariates at every split and selects among candidate split points using an exponential mechanism based on impurity reduction. In contrast, the random forest grows CART trees on bootstrap resamples and, at each split, greedily chooses the best split from a uniformly sampled subset of
% $\sqrt{p}$ covariates.

\section{Additional data application \& details}

\subsection{QSAR Biodegradation Data preprocessing}
\label{app:QSAR_preproc}
The QSAR Biodegradation dataset was obtained from the UCI Machine Learning Repository and consists of \(1055\) chemical compounds. Each compound is described by \(41\) quantitative molecular descriptors, together with a categorical class label indicating whether the compound is \emph{ready biodegradable} (RB) or \emph{not ready biodegradable} (NRB).

The raw data are provided as a semicolon-separated file with no header. We assign generic feature names \(\{\texttt{QSAR}_1,\ldots,\texttt{QSAR}_{41}\}\) to the molecular descriptors and treat the final column as the class label. The binary response variable is defined as $Y = \mathbbm{1}\{\text{class} = \text{RB}\}$, so that \(Y=1\) corresponds to ready biodegradable compounds.

All \(41\) predictors are treated as continuous and are standardized to have zero mean and unit variance using the training data. No missing values are present in the dataset, and no observations are removed during preprocessing. The final design matrix has dimension \(1055 \times 41\), with an empirical class prevalence of
$\mathbb{P}(Y=1) \approx 0.337$.

This preprocessing yields a fully continuous feature space, well suited to axis-aligned splits in classification trees while avoiding scale imbalances across molecular descriptors.

\subsection{Additional data application: Predicting and estimating diabetes readmission rates}
\label{app:diabetes}

We include a second data application based on the Diabetes 130-US Hospitals dataset to evaluate the proposed RCT method in a large-sample, heterogeneous clinical setting. The same experimental protocol and methods described in Section~\ref{sec:dataapps} are applied. In contrast to the QSAR example, this dataset reflects a modern electronic health record environment with mixed predictor types, higher noise, and substantially larger sample size, allowing us to assess scalability and empirical stability beyond moderate-sample scientific data.

\paragraph{Data description and preprocessing.}
\label{app:diabetes_preproc}

The Diabetes 130-US Hospitals dataset \citep{uci_diabetes_130} contains \(101{,}766\) hospital encounters from \(130\) U.S. hospitals, with demographic, administrative, and clinical variables. Identifier variables (\texttt{encounter\_id}, \texttt{patient\_nbr}) are removed.

We restrict attention to clinically relevant subpopulation of patients who were readmitted at least once, discarding cases with \texttt{readmitted = NO}. The binary response is defined as 
\(
Y = \mathbbm{1}\{\text{readmitted} < 30 \text{ days}\},
\)
so that \(Y=1\) indicates \emph{early readmission} among readmitted patients.

Missing values encoded as \texttt{"?"} are converted to \texttt{NaN}, and a complete-case analysis is performed on a selected subset of predictors.Specifically, we retain eight continuous utilization variables—
``time\_in\_hospital", ``num\_lab\_procedures", ``num\_procedures",
``num\_medications", ``number\_outpatient", ``number\_emergency",
``number\_inpatient" and ``number\_diagnoses" and eight categorical or
ordinal variables: ``age", ``race", ``gender", ``admission\_type\_id",
``diabetesMed", ``change", ``discharge\_disposition\_id" and
"admission\_source\_id". Categorical predictors are ordinally encoded, with unseen categories mapped to a dedicated code, and continuous predictors are standardized. After preprocessing, the final dataset consists of \(46{,}176\) encounters with \(16\) predictors (8 continuous and 8 categorical), with empirical early-readmission prevalence \(\mathbb{P}(Y=1) \approx 0.242\). Predicting and estimating early (or) rapid readmission supports discharge planning, and interpretable trees yield transparent decision rules based on admission and utilization patterns.

% The Diabetes 130-US Hospitals dataset contains \(101{,}766\) hospital encounters with a mix of demographic, administrative, and clinical variables. We first remove identifier variables (\texttt{encounter\_id} and \texttt{patient\_nbr}), which carry no inferential meaning.

% We focus on the clinically relevant subpopulation of patients who were readmitted at least once. Encounters with \texttt{readmitted = NO} are discarded, and the binary response is defined as $Y = \mathbbm{1}\{\text{readmitted} < 30\text{ days}\}$, so that \(Y=1\) corresponds to early (within 30 days) readmission among readmitted patients.

% Categorical predictors are encoded using ordinal encoding, with unseen categories mapped to a dedicated code. Continuous predictors are standardized to have zero mean and unit variance. After preprocessing and complete-case filtering, the final dataset consists of \(46{,}176\) encounters with \(16\) predictors (8 continuous and 8 categorical). The empirical prevalence of early readmission in this cohort is $\mathbb{P}(Y=1) \approx 0.242 $. This preprocessing preserves the mixed continuous--categorical structure of the data while enabling the use of axis-aligned splits and valid inference within the tree framework.

\begin{table}
\centering
\label{tab:diabetes_results}
\begin{tabular}{lcc}
\toprule
Method &  Test accuracy & Avg. CI length \\
\midrule
CART (full) & \textbf{0.760} & - \\
DS ($\rho = 0.1$) & \textcolor{red}{\textbf{0.760}} & \textcolor{red}{\textbf{0.120}} \\
DS ($\rho = 0.3$) & 0.758 & 0.184 \\
DS ($\rho = 0.5$) & 0.756 & 0.093 \\
DS ($\rho = 0.7$) & 0.757 & 0.087 \\
RCT ($\tau=0.1$) & 0.760 & 0.110 \\
RCT ($\tau=0.5$) & 0.760 & 0.065 \\
RCT ($\tau=1$)  & \textcolor{blue}{\textbf{0.760}} & \textcolor{blue}{\textbf{0.059}} \\
\bottomrule
\end{tabular}
\caption{Diabetes Readmission: Comparison of predictive performance (via Test accuracy) and inference power (via average confidence interval length). Here $\rho$ denotes the inference fraction used in splitting for DS, and $\tau$ denotes the temperature scale used for fitting RCT.}
\end{table}

The results for CART, DS, and RCT are summarized in Table~\ref{tab:diabetes_results}.

\paragraph{Interpretation of results.}
In Table~\ref{tab:diabetes_results}, test accuracy is essentially unchanged across DS and RCT over the reported values of \(\rho\) and \(\tau\), indicating that predictive performance is stable across these choices in this large-sample setting. Data splitting maintains accuracy across a wide range of inference fractions, but interval lengths depend strongly on $\rho$. RCT, by contrast, retains full-sample predictive performance while delivering progressively shorter confidence intervals as the temperature scale $\tau$ increases. This behavior reflects the stability of the fitted randomized tree and the reduced cost of conditioning for selective inference at large sample sizes. These results demonstrate that the benefits of randomized fitting extend beyond moderate-sample scientific datasets: RCT scales effectively to noisy, heterogeneous clinical data while providing valid and increasingly efficient uncertainty quantification without sacrificing predictive accuracy.

\end{appendices}
\end{document}